\newcommand\mcP{\mathcal{P}}
\newcommand\mbZ{\mathbb{Z}}
\begin{document}

\title{On the dynamical Lie algebras of quantum approximate optimization algorithms}

\author{Jonathan Allcock}
\affiliation{Tencent Quantum Laboratory, Shenzhen, China}
\email{jonallcock@tencent.com}
\author{Miklos Santha}
\email{miklos.santha@gmail.com}
\affiliation{CQT, National University of Singapore, Singapore}
\affiliation{CNRS, IRIF, Université de Paris, Paris, France}
\author{Pei Yuan}
\email{peiyuan@tencent.com}
\affiliation{Tencent Quantum Laboratory, Shenzhen, China}
\author{Shengyu Zhang}
\email{shengyzhang@tencent.com}
\affiliation{Tencent Quantum Laboratory, Shenzhen, China}
\maketitle

\begin{abstract}
Dynamical Lie algebras (DLAs) have emerged as a valuable tool in the study of parameterized quantum circuits, helping to characterize both their expressiveness and trainability. In particular, the absence or presence of barren plateaus (BPs)---flat regions in parameter space that prevent the efficient training of variational quantum algorithms---has recently been shown to be intimately related to quantities derived from the associated DLA.

In this work, we investigate DLAs for the quantum approximate optimization algorithm (QAOA), one of the most studied variational quantum algorithms for solving graph MaxCut and other combinatorial optimization problems.  While DLAs for QAOA circuits have been studied before, existing results have either been based on numerical evidence, or else correspond to DLA generators specifically chosen to be universal for quantum computation on a subspace of states.  We initiate an analytical study of barren plateaus and other statistics of QAOA algorithms, and give bounds on the dimensions of the corresponding DLAs and their centers for general graphs.  We then focus on the $n$-vertex cycle and complete graphs. For the cycle graph we give an explicit basis, identify its decomposition into the direct sum of a $2$-dimensional center and a semisimple component isomorphic to $n-1$ copies of $\.{su}(2)$. We give an explicit basis for this isomorphism, and a closed-form expression for the variance of the cost function, proving the absence of BPs. For the complete graph we prove that the dimension of the DLA is $\+O(n^3)$ and give an explicit basis for the DLA.
\end{abstract}

\section{Introduction}

Variational quantum algorithms (VQA)~\cite{cerezo2021variational} are one of the two main paradigms in quantum algorithm design.  While structured algorithms---the other main pillar, which includes Grover search~\cite{grover1996fast}, Shor's factorization algorithm~\cite{shor1994algorithms}, the HHL linear systems algorithm~\cite{harrow2009quantum} and many others~\cite{quantumalgorithm}---are amenable to mathematical analysis and often have provable performance guarantees, the need for fault-tolerant quantum computers to run them comes with large resource overheads~\cite{babbush2021focus, hoefler2023disentangling}.  Variational algorithms, on the other hand, were originally proposed~\cite{peruzzo2014variational} to be run on current and near-future noisy, intermediate-scale quantum (NISQ) computers.  Comprising parameterized quantum circuits (PQC) with tunable parameters, the structure and size of the circuits can be tailored to the hardware available, and the parameters tuned to optimize the value of a given loss function in a manner similar to artificial neural network training.  

The broad and flexible VQA framework comes, however, with a number of challenges, including how to (i) design the circuit to balance expressiveness with trainability (i.e., the ability to produce good solutions without requiring a surfeit of parameters); (ii) effectively encode input data into the circuit; (iii) maintain performance in the presence of noise; and (iv) efficiently compute and optimize the loss function. 

While some challenges may be coped with by a trial-and-error approach and many experiments, training efficiency is a prominent difficulty encountered in almost all VQAs. The optimization of the loss function can be problematic for several reasons. Firstly, evaluating gradients on a quantum computer is typically a time-intensive procedure, requiring many measurements. Secondly, even if the gradients can be computed, or gradient-free optimization methods used, VQAs can be susceptible to the `Barren Plateaus' (BP) phenomenon~\cite{mcclean2018barren}, where the loss function becomes exponentially flat in large regions of parameter space as the problem size increases.  Various causes of this phenomenon have been studied, including the expressiveness of the PQC~\cite{mcclean2018barren, ortiz2021entanglement, 
patti2021entanglement, pesah2021absence, holmes2022connecting, larocca2022diagnosing,sharma2022trainability,friedrich2023quantum,martin2023barren}, the presence of entanglement or randomness in the initial state \cite{mcclean2018barren, cerezo2021cost,abbas2021power,holmes2021barren,shaydulin2022importance,thanasilp2023subtleties}, the locality of the measurement operator that appears in the loss function~\cite{cerezo2021cost,uvarov2021barren,kashif2023impact,khatri2019quantum}, and noise~\cite{wang2021noise,stilck2021limitations,garcia2024effects}. In two recent works, these seemingly disparate factors were identified as all being mathematically connected to the dynamical Lie algebra (DLA) of the VQA~\cite{ragone2023unified,fontana2024characterizing}, at least when the initial state or measurement operator lies in the DLA (a requirement relaxed in~\cite{diaz2023showcasing}). For an $n$-qubit VQA, the DLA $\.g$ is a subalgebra of the Lie algebra $\.{su}(2^n)$, and the variance of the loss function decomposes into a sum of contributions, each corresponding to a simple Lie subalgebra $\.g_j\subseteq \.g$  of the DLA.  For each subspace, the contribution to the variance scales proportionally with the squared norm of the initial state and measurement operator, and inversely proportional to the dimension of $\.g_j$~\cite{ragone2023unified}. Viewed through this lens, the tradeoff between expressiveness and trainability becomes more clear: fully expressive circuits -- those that are universal for quantum computation -- are those with DLAs that are equal to $\.{su}(2^n)$, but in this case the loss function variance is exponentially suppressed in $n$. On the other hand, smaller DLA dimensions are compatible with larger loss function variances, and hence favor trainability. In addition, the dimension of the DLA has been linked to the classical simulatability of VQAs~\cite{goh2023lie}, phase-transitions in solving satisfiability problems~\cite{zhang2022quantum}, and has been shown to upper-bound the number of parameters required for overparameterization~\cite{larocca2023theory} -- where there are enough degrees of freedom to explore all relevant directions in state space -- and facilitate convergence to global optima~\cite{kiani2020learning,wiersema2020exploring,monbroussou2023trainability}.

While the connection between DLAs and BPs is of fundamental conceptual significance, making use of this as a practical tool for understanding specific VQAs relies on our ability to compute various properties of interest: (i) the dimension and set of basis vectors for a particular DLA; (ii) its decomposition into abelian and simple components; and (iii) basis vectors for these components (which may not be easily deduced from a basis for the whole DLA).  For a specific VQA with a fixed number of qubits, it is straightforward to evaluate the dimension and basis numerically although, in the worst case, the time complexity can scale exponentially in the number of qubits.  However, the evaluation of (ii) and (iii) is more challenging, yet necessary for the loss function variance to be lower-bounded.

In this work, we investigate the DLAs for one of the most well-known variational algorithms: the quantum approximate optimization (QAOA) algorithm~\cite{farhi2014quantum, blekos2024review} for solving the MaxCut problem on graphs (QAOA-MaxCut). In contrast to the DLAs studied in~\cite{wiersema2024classification} and~\cite{aguilar2024full}, whose generating elements are each a single string of Pauli operators, QAOA-MaxCut DLAs have generators which are each sums of Pauli strings. This corresponds to a single tunable parameter being shared by all Pauli strings in a generator, similar to parameter sharing that occurs in classical neural networks such as a convolutional neural network.  
The fact that the generators are sums of Pauli strings may benefit the training, but it makes the DLA analysis considerably more challenging, and 
previous studies where this has been the case have either relied on numerical studies~\cite{ choquette2021quantum,larocca2022diagnosing, monbroussou2023trainability, meyer2023exploiting}, generators chosen specifically so that the resulting DLAs were universal~\cite{lloyd2018quantum, morales2020universality}, or focused on \emph{subspace-controllable} systems~\cite{albertini2018controllability, kazi2024universality, schatzki2024theoretical} where the DLA allows for arbitrary unitary operators to be applied to each symmetric-group invariant subspace, and where tools from representation theory can be applied.

Here we initiate an analytical study of QAOA-MaxCut DLAs.  We make use of symmetry properties to give upper-bounds on the dimension of QAOA-MaxCut DLAs for arbitrary graphs, and upper-bound the dimension of their centers (a subalgebra that commutes with the DLA). We then focus on two particular families of graphs: the $n$-vertex cycle graphs $C_n$ and complete graphs $K_n$.  These two graph families have the dihedral group and symmetric group as automorphism groups \cite{harary1969graph}, respectively, two important nonabelian groups with longstanding significance in quantum computing through their connections to quantum hidden subgroup algorithms for certain hard lattice problems~\cite{regev2004quantum} and the graph isomorphism problem~\cite{boneh1995quantum, beals1997quantum, ettinger1999quantum}.  For $C_n$ we show that the corresponding DLA has a $2$-dimensional center, and a $3(n-1)$ dimensional semisimple component isomorphic to $n-1$ copies of $\.{su}(2)$. In addition, we give an explicit basis for this isomorphism which enables, for the first time, an analytical expression for the expectation and variance of a QAOA-MaxCut loss function to be given, and shows that the loss function variance {essentially does not decrease with}  $n$, and thus BPs are not present.  Despite the low dimension of the DLA, and hence low expressiveness, QAOA-MaxCut is guaranteed (in the limit of long circuit depths) to give an optimal solution to the target problem \cite{farhi2014quantum}.  For $K_n$ we pin down the exact dimension of the DLA, which turns out to be $\+O(n^3)$, and give an explicit basis for it.  The dimension we find is strictly smaller than the space of all permutation invariant matrices in $\.{su}(2^n)$~\cite{albertini2018controllability}, and the DLA thus does not correspond to a subspace-controllable system. 

Our results initiate a rigorous analysis of barren plateaus in specific variational quantum algorithms through the lens of DLAs, which helps to deepen our understanding of the algorithms and sheds light on the future design of them.

\subsection{Related work}

It is well-known that if a DLA $\.g$ has two randomly chosen generators, then, with probability one, $\.g= \.{su}(2^n)$~\cite{lloyd1995almost} or \cite{jurdjevic1997geometric} (Theorem 12, ch.6), in which case the system is said to be universal for computation, or \emph{controllable}. Controllability was also shown for the DLAs corresponding to a family of one-dimensional spin chains~\cite{lloyd2018quantum},  Hamiltonians involving $ZZ$ and $ZZZ$ terms arranged according to the adjacency matrix of certain graphs and hypergraphs~\cite{morales2020universality} and, in~\cite{larocca2022diagnosing}, those corresponding to the Hardware-Efficient Ansatz (HEA)~\cite{kandala2017hardware} and a particular spin-glass Hamiltonian. A full classification of the DLAs for 
two-local spin systems on linear, circular and fully connected topologies was analytically carried out~\cite{wiersema2024classification}, and it was found that the DLA dimension of any such system scales as either $\+O(n)$, $\+O(n^2)$ or $\+O(4^n)$. But note that there the generators are individual Pauli terms, whereas in QAOA  the generators are the \textit{summation} of Pauli terms ($\sum_j X_j$ and $\sum_{(j,k)\in E} Z_j Z_k$), and the two-local interactions are between edges of a general graph.

{A full classification of all DLAs generated by individual Pauli terms was recently given in ~\cite{aguilar2024full}, where it was shown that all such DLAs fall into one of four possible categories depending on the structure of an associated anticommutation graph. For the specific case of \emph{multi-angle QAOA}~\cite{herrman2022multi}---a variant of QAOA, where the sums of Pauli terms are split up into individual Pauli strings---applied to MaxCut, a complete classification of the DLA for all graphs was given in~\cite{kokcu2024classification,kazi2024analyzing} where it was shown that $P_n$ and $C_n$ have DLA dimensions $\+O(n^2)$, and all other graphs, including $K_n$, fall into one of four categories, all with DLA dimension exponential in $n$. In contrast, our results for $C_n$ and $K_n$ show that, in some cases, the standard QAOA ansatz, with summations of Pauli strings as generators (and being significantly more challenging to analyze), can lead to a significant reduction in DLA dimension.

QAOA-MaxCut for a graph $G$ is an example of an \emph{equivariant quantum neural network}~\cite{meyer2023exploiting,nguyen2024theory}, where the DLA generators commute with representations of a group, in this case the automorphism group of the corresponding graph. 
The DLA dimension for  $G=P_n$ (the $n$-vertex path graph) has been numerically
investigated and found to be $n^2$~\cite{larocca2022diagnosing,meyer2023exploiting}. In~\cite{larocca2022diagnosing}, the authors also performed numerical studies of the DLAs for random Erd\H{o}s-R\'enyi graphs, and the dimension was found to increase exponentially with system size. The DLA for $G=C_n$ has been analyzed in ~\cite{matos2023characterization} and stated to have dimension $3n-2$, although a proof was not given. In comparison, our result for $C_n$ proves the dimension of the DLA, gives an explicit basis, and also gives a decomposition into a center and simple subalgebras, each with an explicit basis. Our follow-up study \cite{mao2025qaoa} analyzes the DLAs of QAOA-MaxCut for both weighted and unweighted graphs. We prove that for almost all weighted graphs with continuously distributed weights, as well as for unweighted graphs, the DLA dimension scales as $\Theta(4^n)$. Combined with the results in this paper, we can see that QAOA for most graphs has exponentially large DLAs, but graphs with enough symmetry can have polynomially small DLAs. These analytical results complement and partially explain numerical studies of graphs~\cite{herrman2021impact, shaydulin2021classical} which found that the amount of symmetry in a graph is a strong predictor of QAOA success.

\section{Results}

We use the following notation and conventions. For any  positive integer $n$, we let $[n]\defeq\{1,2,\ldots,n\}$, and define the parity function 
$\parity(n)=1$ if $n$ is odd and $\parity(n)=0$ if $n$ is even. We denote $n$-qubit Pauli strings by $P=P_1 P_2\ldots P_n\defeq P_1\otimes P_2\otimes\ldots\otimes P_n$, where $P_j\in\{I,X,Y,Z\}$, $I$ is the $2\times 2$ identity matrix, and 
\eq{
X &=  \begin{pmatrix} 0 & 1 \\ 1 & 0\end{pmatrix}, 
\quad Y = \begin{pmatrix} 0 & -i \\ i & 0\end{pmatrix}, 
\quad Z = \begin{pmatrix} 1 & 0 \\ 0 & -1\end{pmatrix}
}
are the standard Pauli operators. The set of all $n$-qubit Pauli strings is denoted $\+P_n$. Subscripts indicate the qubit number that a Pauli operator (or identity) acts on, e.g., $X_1 Y_2 Z_3 I_4 I_5$ means to apply $X$, $Y$ and $Z$ on the first, the second and the third qubits, respectively. Identity operators may be dropped, e.g.,  we may simply write $X_1 Y_2 Z_3$ for $X_1 Y_2 Z_3 I_4 I_5$. $S_n$ denotes the symmetric group on $n$ elements.  For sets $T_1$ and $T_2$, we set $T_1\backslash T_2\defeq \{t: ~t\in T_1,~t\notin T_2\}$. {For any $A,B\in \mbC^{n\times n}$, their Hilbert–Schmidt inner product is defined as $\langle A,B\rangle \defeq \tr(A^\dagger B)$. Finally, $\|\cdot \|_F$ denotes the matrix Frobenius norm.

\subsection{Variational quantum algorithms}

We consider variational quantum algorithms with a layered structure. Such algorithms can be specified by (i) an initial $n$-qubit state $\rho$; (ii) a parameterized quantum circuit acting on $n$-qubits corresponding to  a unitary operator
\eql{
U(\theta)&=\prod_{l=1}^L\lp \prod_{k=1}^{K}e^{i\theta_{l,k}H_k}\rp, \label{eq:pqc}
}
where $\theta = (\theta_{1,1}, \ldots, \theta_{L,K})$ are real parameters, and $H_1, \ldots, H_{K}\in \.{su}(2^n)$ are Hermitian operators; (iii)  a measurement operator $O\in\.{u}(2^n)$, $\norm{O}_F^2 \le 2^n$. For a specific choice of parameters $\theta$, the output of the circuit is given by the loss function
\eq{
\ell(\rho, O; \theta)&= \tr(U(\theta)\rho U^\dag(\theta)O),
}
and training the quantum circuit involves tuning the parameters to minimize the loss.  The ease with which this can be done can be captured, in part, by the variance of the loss function, 
$
\var_\theta[\ell(\rho, O;\theta)]. 
$ 
If the variance decreases exponentially in $n$, the loss is said to exhibit a \textit{barren plateau,} and the VQA is generally extremely hard to train by gradient-based methods---the methods most useful in practice for both classical machine learning and variational quantum algorithms.

\subsection{Dynamical Lie algebras}
Our analysis employs tools from Lie algebra. A Lie algebra $\.g$ is a vector space over some field $\mb{F}$ equipped with a bilinear map $[\cdot,\cdot]$ called the \emph{Lie bracket} that satisfies, for all $A,B,C\in\.g$, (i) $[A,A]=0$ and (ii) $[A,[B,C]] + [B,[C,A]]+[C,[A,B]]=0$ (Jacobi identity). Two important families of Lie algebras for us will be 
\eq{
\.{sl}(n,\mb{C}) &\defeq \{H\in\mb{C}^{n\times n} : \tr(H) = 0\},\quad\quad\.{su}(n) \defeq \{ H\in \mb{C}^{n\times n} : H^\dag = -H, \tr(H)=0\}.
}

The \emph{dynamical Lie algebra} corresponding to the PQC (Eq. \eqref{eq:pqc}), denoted
\eq{
\.g &= \langle \+A \rangle_{\text{Lie}, \mbR},
}
is the Lie algebra over $\mb{R}$ spanned by all possible nested commutators (such as $[A,[B,C]]$) of elements in $\+A$. Here,  $\+A=\{iH_1, \ldots, i H_L\}$ is the set of generators of $\.g$, 
and the Lie bracket is taken to be $[A,B] \defeq AB-BA$. Due to the Jacobi identity, without loss of generality, all nested commutators can be assumed to take the form $[H_1, [H_2, [\ldots, H_s]]]$. As a subalgebra of $\.{su}(2^n)$, the DLA of a parameterized quantum circuit admits a decomposition \cite{knapp1996lie}
\eq{
\.g &= \.c \oplus \.g_1 \oplus \ldots \oplus \.g_k
} 
into simple Lie algebras $\.g_i$ and a \emph{center} $\.c$ of elements that commute with all of $\.g$. The \emph{semisimple} component of $\.g$ is $[\.g,\.g] \defeq \spn\{[A,B] : A,B\in\.g\}=\.g_1\oplus\ldots\oplus\.g_k$. If $O\in i\.g$ or $\rho\in i\.g$, and the circuit is deep enough to form a unitary 2-design, then it was shown that the variance of the loss function $\ell$ is related to the DLA in the following way~\cite{ragone2023unified}:
\begin{equation}\label{eq:VQA-exp-var}
    \av_\theta[\ell(\rho,O;\theta)] = \tr(\rho_{\.c}O_{\.c}), \quad 
    \var_\theta[\ell(\rho,O;\theta)] = \sum_{j=1}^k \frac{\mcP_{\.g_j}(\rho) \mcP_{\.g_j}(O)}{\dim(\.g_j)}.
\end{equation}
Here $\+P_{\.s}(H)$, the $\.s$-purity of a Hermitian operator $H\in i\.{u}(2^n)$, with respect to a subalgebra $\.s$ of $\.u(2^n)$, is defined as
    \[\+P_{\.s}(H)\defeq \tr(H_{\.s}^2) = \|H_{\.s}\|_F^2 =  \sum_{j=1}^{\dim(\.s)}|\tr(E_j^\dag H)|^2,\]
    where $H_\.s$ is the orthogonal projection of operator $H$ onto $\.s_{\mbC}$ (the complexification of $\.s$), and $\{E_j: j=1,2,\ldots,\dim(\.s)\}$ is an orthonormal basis for $\.s$. 

Note that the dimension of $\.g$ alone is not sufficient for making use of Eq.~\eqref{eq:VQA-exp-var}, nor is a basis for the whole of $\.g$, as the purities of $\rho$ and $O$ may distribute on different subspaces $\.g_j$, in which case the variance can be 0. Therefore, lower bounding the variance requires understanding the decomposition of the Lie algebra into simple subalgebras, and finding a basis for these subalgebras, so that one can determine how the purities distribute amongst the $\.g_j$.

\subsection{QAOA}

The quantum approximate optimization algorithm (QAOA) ~\cite{farhi2014quantum} was originally proposed to approximately solve combinatorial optimization tasks specified by $n$ bits and $m$ clauses $C_1, \ldots, C_m:\{0,1\}^n\ra\{0,1\}$, where the aim is to maximize the objective function
\begin{equation*}
C(z)=\sum_{j=1}^{m} C_j(z),
\end{equation*}
over the set of bit strings $z\defeq z_1\ldots z_n\in\{0,1\}^n$. QAOA begins by initializing an $n$-qubit system in the state $\ket{+}^{\otimes n} = \frac{1}{\sqrt{2^n}}\sum_{z\in\{0,1\}^n}\ket{z}$, and then executes the layered variational circuit 
\eq{
U(\gamma, \beta) &= U_B(\beta_p) U_C(\gamma_p)\ldots U_B(\beta_1) U_C(\gamma_1),
}
where $p$ is a positive integer, $\gamma = (\gamma_1, \ldots, \gamma_p)$ and $\beta = (\beta_1, \ldots, \beta_p)$ with $\beta_j\in [0,\pi)$ and $\gamma_j\in [0,2\pi)$}, and 
\eq{
U_B(\beta_j) =\prod_{j=1}^n e^{-i\beta_j X_j}, \quad
U_C(\gamma_j)= \prod_{j=1}^m e^{-i\gamma_j C_j}. 
}
At the end of the circuit one measures in the computational basis and evaluates $C(z)$. When applied to the MaxCut problem on an $n$-vertex graph $G=(V,E)$, the $U_C$ and $U_B$ operators take the form
\eq{
U_B(\beta)= e^{-i\beta \sum_{j\in V}X_j}, \quad U_C(\gamma) = e^{-i\gamma \sum_{(j,k)\in E} Z_j Z_k}.
}
By comparison with Eq.~\eqref{eq:pqc}, for an undirected graph $G = (V,E)$, the DLA of the corresponding QAOA MaxCut algorithm is then 
\eq{
\.g_G &= \langle \{i\sum_{j\in V} X_j, i\sum_{(j,k)\in E} Z_j Z_k \}\rangle_{\text{Lie},\mbR}.
}
As in \cite{ragone2023unified}, we normalize the measurement $O$ so that $\|O\|_F^2 = 2^n$. This can be motivated by noticing that $\|P\|_F^2 = 2^n$ for any Pauli string measurement such as $X_1$ or $Z_1Z_2\ldots Z_n$. In the case of QAOA, we can thus take the measurement to be $O=\frac{1}{\sqrt{|E|}}\sum_{(j,k)\in E} Z_j Z_k$. Note that $O\in i\.g_G$ and $\|O\|_F^2 = 2^n$.



\subsection{Results for general DLAs}

We first prove two general results that apply to any DLA $\.g$. The first is a simple upper bound on the dimension of the center. 

\begin{restatable}{thm}{dimcenter}\label{lem:center}Let $\.g = \langle\+
A\rangle_{{\rm Lie},\mb{F}}$, and $\.c$ its center. If $\+A$ is a set of $s$ linearly independent elements then $\dim(\.c) \le s$.
\end{restatable}

In the case of QAOA, which has two linearly independent DLA generators, this immediately gives $\dim(\.c) \le 2$. The second result is an upper bound on the dimension of $\.g$, when the elements of $\+A$ (which are sums of Pauli strings) are invariant under certain symmetries. More precisely, for any Pauli string $P = P_1 P_2 \ldots P_n\in\+P_n$, 
and any permutation $\pi\in S_n$, define \[\pi(P) = P_{\pi(1)} P_{\pi(2)} \ldots P_{\pi(n)}.\] 
By linearity, the action of $\pi$ can be extended to the whole of $\.{su}(2^n)$.  We then have: 

\begin{restatable}{thm}{dimfromorbits}\label{thm:dim_dla_orbits}Let $\+A\subseteq \.{su}(2^n)$, $\.g=\langle \+A\rangle_{{\rm Lie},\mb{R}}$ and $S\le S_n$. If $\pi(A)=A$ for all $A\in\+A$ and all $\pi\in S$, then 
\eq{
\dim(\.g) \le \abs{\+P_n/S} - 1,
}
where $\abs{\+P_n/S}$ is the cardinality of the set $\+P_n/S$ of orbits of the $n$-qubit Pauli strings under the group action of $S$.
\end{restatable}


Applied to QAOA-MaxCut on an arbitrary graph $G$, the generators of $\.g_G$ are easily seen to be $Aut(G)$ invariant, where $Aut(G)$ is the automorphism group of $G$. Thus, $\dim(\.g_G)\le \abs{\+P_n/Aut(G)} - 1$. As an application of this result,  by additionally showing that $\.g_G$ is contained in a subalgebra spanned by Pauli strings where the total number of $Y$ and $Z$ operators is even and non-zero, we obtain a simple dimension upper-bound for the DLA of the complete graph of
\eql{
\dim(\.g_{K_n})&\le \begin{cases}
\frac{1}{12}\lp n^3+6n^2 + 14n - 12\rp &\quad \text{($n$ even)} \\
\frac{1}{12}\lp n^3 +6n^2 + 11n -18\rp &\quad \text{($n$ odd)}\end{cases}. \label{eq:kn-upper-bounds}
}

We then focus more deeply on the cases where $G=C_n$ and $G=K_n$, the cycle and complete graphs on $n$ vertices, respectively. Among other things, this allows us to exactly determine the respective DLA dimensions.} 
\subsection{Results for cycle graphs} 


To state our results for cycle and complete graphs let us first define some notation. Let $P = P_1P_2\ldots P_n$ be an $n$-qubit Pauli string, and $S$ a subgroup of the symmetric group $S_n$. Define the corresponding \emph{orbit-sum of $P$} to be 
\eql{
S(P)&\defeq\sum_{\pi\in S}\pi(P). \label{eq:orbit-sum}
} 
The automorphism group of $C_n$ is the dihedral group $D_n$ (the group of symmetries of a regular $n$-gon), and we consider the following Pauli orbit-sums, where the $n$ qubits are indexed by $\mbZ_n = \{0,1,\ldots,n-1\}$ and the summation of the subscript indices is over the cyclic group $\mbZ_n$.
\eql{
    X & \defeq \frac{i}{2} D_n(X_0) = i\sum_{j=0}^{n-1} X_j \label{eq:Cn-basis-X}\\
    {X^{n-1}} & \defeq \frac{i}{2} D_n(X_0\ldots X_{n-2}) = i\sum_{j=0}^{n-1} X_j X_{j+1}\cdots X_{j+n-2} \\
    {Z X^t Z} & \defeq \frac{i}{2} D_n(Z_0 X_1\cdots X_t Z_{t+1}) = i\sum_{j=0}^{n-1} Z_j X_{j+1}\cdots X_{j+t} Z_{j+t+1},  \label{eq:Cn-basis} \\
    {Y X^t Y} & \defeq \frac{i}{2} D_n(Y_0 X_1\cdots X_t Y_{t+1}) = i\sum_{j=0}^{n-1} Y_j X_{j+1}\cdots X_{j+t} Y_{j+t+1},   \\
    {Y X^t Z} & \defeq i D_n(Y_0 X_1\cdots X_t Z_{t+1}) 
     = i\sum_{j=0}^{n-1} (Y_j X_{j+1}\cdots X_{j+t} Z_{j+t+1} + Z_j X_{j+1}\cdots X_{j+t} Y_{j+t+1}), \label{eq:Cn-basis-YXtZ}
}
where $t=0,1,\ldots,n-2$.

\begin{restatable}{thm}{CnDecompCenter}\label{thm:Cn-decomp-center} 
$\.g_{C_n}$ has the following direct sum decomposition
\eq{
    \.g_{C_n} &= \.c \oplus \.g_1 \oplus \cdots \oplus\.g_{n-1},
}
where each $\.g_j \cong \.{su}(2)$ and the center $\.c = \spn_{\mbR} \{c_1, c_2\}$, where  
\begin{align*}
    c_1 = - X +\sum_{t=1}^{\frac{n-1}{2}} \big({ZX^{2t-1}Z} + {YX^{2t-1}Y}\big),\text{~~and~~}
     c_2 = {X^{n-1}} +\sum_{t=0}^{\frac{n-3}{2}}\big({ZX^{2t}Z} + {YX^{2t}Y}\big)
\end{align*}
if $n$ is odd, and 
\begin{align*}
    c_1  = {X^{n-1}} - X + \sum_{t=1}^{\frac{n-2}{2}}\big({ZX^{2t-1}Z} + {YX^{2t-1}Y}\big),\text{~~and~~} c_2  = \sum_{t=0}^{\frac{n-2}{2}} \big({ZX^{2t}Z} + {YX^{2t}Y}\big)
\end{align*}
if $n$ is even.
\end{restatable}
It follows that $\dim(\.g_{C_n}) = 3n-1$. In addition, we give an explicit construction for the above isomorphism.  More specifically, we say that a basis $\bigcup_{k=1}^{n-1}\{\tilde{X}_k, \tilde{Y}_k,\tilde{Z}_k\}$ for $\.g_1\oplus \ldots \oplus\.g_{n-1}$, where each $\.g_j\cong\.{su}(2)$, is \emph{canonical} if it respects the $\.{su}(2)$ commutation relations 
\eq{
[\tilde X_k, \tilde Y_k] = 2\tilde Z_k, \quad [\tilde Y_k, \tilde Z_k] = 2\tilde X_k, \quad [\tilde Z_k, \tilde X_k] = 2\tilde Y_k,
}
as well as the Lie algebra direct sum decomposition, i.e., $[x,y]=0$ for any $x\in\{\tilde{X}_j, \tilde{Y}_j,\tilde{Z}_j\}$ and any $y\in \{\tilde{X}_k, \tilde{Y}_k,\tilde{Z}_k\}$, where $j\neq k$.  
\begin{thm}\label{thm:cycle-iso-su2}
For each $k=1, 2, \ldots, n-1$, let 
\eq{
    \tilde Z_k \defeq &\frac{1}{2n} \sum_{j=1}^{n-1} \sin \frac{kj\pi}{n} YX^{j-1}Z, \\ 
    \tilde X_k 
    \defeq& \frac{1}{2n} \Big( \sin \frac{k\pi}{n} X - \sum_{j=1}^{n-2} \big( \sin \frac{k(j+1)\pi}{n} YX^{j-1} Y - \sin \frac{kj\pi}{n} ZX^j Z \big) + \sin \frac{k(n-1)\pi}{n} X^{n-1} \Big),\nonumber\\
    \tilde Y_k 
    \defeq&\frac{1}{2n}\Big( -\cos\frac{k\pi}{n}X + \cos k\pi YX^{n-2}Y + \cos\frac{k(n-1)\pi}{n}X^{n-1} + ZZ\Big) 
    \\& +\frac{1}{2n}\sum_{j=1}^{n-2} \Big( \cos \frac{k(j+1)\pi}{n} YX^{j-1} Y + \cos \frac{kj\pi}{n} ZX^j Z \Big).
}
Then, $\{\tilde{X}_k, \tilde{Y}_k, \tilde{Z}_k: k = 1, 2, \ldots, n-1\}$ is a canonical basis for $\.g_1\oplus\ldots \oplus\.g_{n-1}$ where each $\.g_j\cong\.{su}(2)$.
\end{thm}

\begin{rem*}
While this manuscript was being revised, we became aware of the recent work of D'Alessandro and Isik~\cite{d2024controllability} who also derived the above results for $\.g_{C_n}$, i.e., that it has a $2$-dimensional center and $n-1$ simple components each isomorphic to $\.{su}(2)$, and provided a basis for the isomorphism. However, their proof technique differs from ours, as does the explicit form of their basis (in particular, the coefficients of the Pauli orbit-sums in the basis vectors are expressed in terms of recursively defined polynomials, compared with the trigonometric coefficients that appear in our Theorem~\ref{thm:cycle-iso-su2}. Our form also makes it easy to compute the purities and the variance of the standard cost function.
\end{rem*}

Using this canonical basis, we can explicitly compute the associated measurement and initial state $\.g_j$-purities. 
\begin{restatable}{thm}{CnExpVar}\label{thm:Cn-exp-var}
    For QAOA-MaxCut on the $n$-node cycle graph, if the circuit is a unitary $2$-design, then  the expectation and variance of the loss function $\ell(\rho,O;\theta)$ are 
    \[\av_\theta[\ell(\rho,O;\theta)] = \frac{\parity(n)}{\sqrt{n}}, \quad \text{ and } \quad   \var_\theta[\ell(\rho,O;\theta)] = \frac{2(n-\parity(n))}{3n} \approx \frac{2}{3},\]
where $\parity(n)$ is equal to $1$ if $n$ is odd and $0$ if $n$ is even.
\end{restatable}
We numerically simulate QAOA circuits for solving MaxCut on cycle graphs $C_n$, and compare the expectation and variance of the loss function with the theoretical values from Theorem~\ref{thm:Cn-exp-var}.  Simulations were carried out using the TensorCircuit~\cite{zhang2023tensorcircuit} software package. Means and variances were computed based on 1,000 random initializations of the circuit parameters, each in the range $[0, \pi]$.  Cycle graphs $C_n$ from $n=3$ to $n=10$ were simulated and, for each value of $n$, QAOA circuits comprising $L$-layers for various values of $L$ were tested. Results are shown in Fig.~\ref{fig:numerical-sim} and show that, as $L$ increases, the simulated values converge towards the theoretical values, indicating the correctness of our analysis and the depth needed for the circuit to form an approximate $2$-design.

\begin{figure}
     \centering
     \begin{subfigure}[b]{0.44\textwidth}
         \centering
         \includegraphics[width=\textwidth]{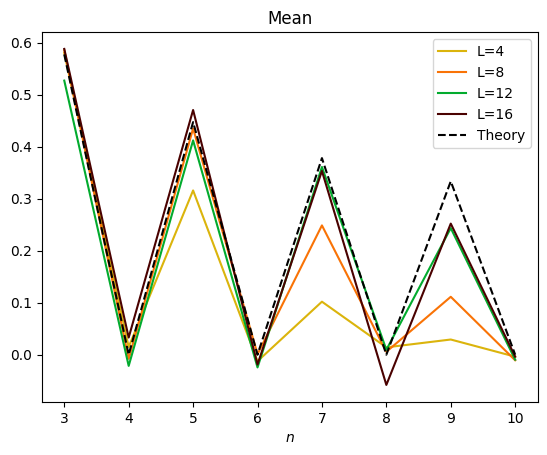}
     \end{subfigure}
     \begin{subfigure}[b]{0.45\textwidth}
         \centering
         \includegraphics[width=\textwidth]{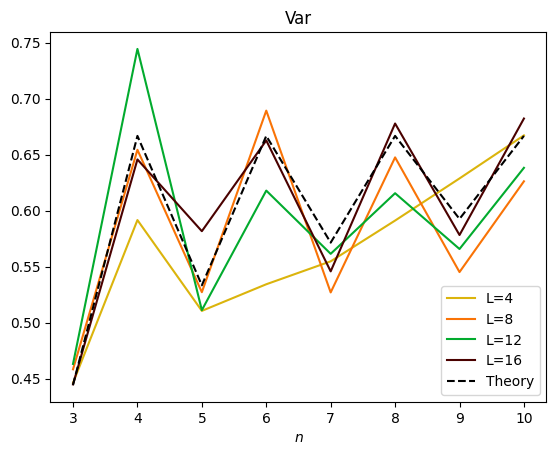}
     \end{subfigure}
        \caption{Numerical simulation of the cost function expectation value $\av_\theta[\ell(\rho,0;\theta)]$ (left) and variance $\var_\theta[\ell(\rho, O;\theta)]$ (right) for QAOA-MaxCut on cycle graphs $C_n$. $L$ denotes the number of layers of the QAOA ansatz. The dashed \emph{Theory} lines correspond to the values from Theorem~\ref{thm:Cn-exp-var} of $\av_\theta[\ell(\rho,0;\theta)] = \parity(n)/\sqrt{n}$ and $\var_\theta[\ell(\rho,O;\theta)]=2(n-\parity(n))/(3n)$. }
        \label{fig:numerical-sim}
\end{figure}

\subsection{Results for complete graphs}
For the complete graph $K_n$, the automorphism group is the full symmetric group $S_n$, and we define, for non-negative integers $p,q,r$ with $p+q+r \le n$,  the following Pauli orbit-sums:
\begin{align}
    X^p Y^q Z^r & \defeq iS_n (X_1\cdots X_p Y_{p+1}\cdots Y_{p+q} Z_{p+q+1}\cdots Z_{p+q+r}) \nonumber\\
    &= i\cdot \sum_{\text{ mutually disjoint} R, S, T\subseteq [n]  \atop \text{ with } |R|=p,\ |S|=q,\ |T|=r} X_R Y_S Z_T,  \label{eq:XpYqZr-basis}
\end{align}
where $X_R \defeq \otimes_{j\in R} X_j$, and similarly for $Y_S$ and $Z_T$.
That is, $X^pY^pZ^r$ is the orbit-sum containing all Pauli strings with the number of $X$, $Y$ and $Z$ operators being $p$, $q$ and $r$ respectively. If $p,q$ or $r$ is zero, we omit the corresponding Pauli operator, e.g. we write $Z^2$ instead of $X^0Y^0Z^2$.

\begin{restatable}{thm}{KnSemiSimpleCenter}\label{thm:kn-semisimple-center} $\.g_{K_n} = \.c \oplus [\.g_{K_n},\.g_{K_n}]$, where the semisimple component has dimension
\begin{equation*}
  \dim([\.g_{K_n},\.g_{K_n}])=\begin{cases}
        \frac{1}{12}\lp n^3+6n^2+2n \rp& \quad (\text{$n$ even})\\
        \frac{1}{12}\lp n^3+6n^2-n-6\rp&\quad (\text{$n$ odd})
    \end{cases},
\end{equation*}
and basis 
\eql{
\{X^pY^qZ^r,[X^1,X^pY^qZ^r], [Z^2,X^pY^1Z^r]:p\ge 0, ~\text{ $q$ and $r$ are odd}\}; \label{eq:kn-semisimple-basis}
}
and the center $\.c$ has dimension $\dim(\.c) = 1 + \parity(n)$. 
\end{restatable}

In the process of proving Theorem~\ref{thm:kn-semisimple-center}, we also identify a basis for $\.g_{K_n}$, and pin down the exact dimension of $\.g_{K_n}$ as:
\eql{
    \dim(\.g_{K_n})=\begin{cases}
        \frac{1}{12}\lp n^3+6n^2+2n+12\rp&\quad (\text{$n$  even})\\
        \frac{1}{12}\lp n^3+6n^2-n+18\rp &\quad (\text{$n$ odd})
    \end{cases}, \label{eq:dim-kn}
}
showing that the above simple upper bound in Eq.~\eqref{eq:kn-upper-bounds} is close to tight.

While we are able to show this exact polynomial expression for $\dim(\.g_{K_n})$, for complete graphs we are not able to give a closed form expression for expectation and variance of the cost function (as we were able to for the cycle graph), and cannot draw a definitive conclusion about the presence or absence of barren plateaus. 



\section{Discussion} In this work we initiated a study of the dynamical Lie algebras for QAOA, proved a number of dimension bounds that apply to arbitrary graphs and, in particular, gave a full characterization --- including an isomorphism to a direct sum of simple Lie algebras, and explicit basis vectors for each component --- of the DLA corresponding to the $n$-vertex cycle graph $C_n$, and a partial characterization of the DLA for the complete graph $K_n$. There are a number of natural open questions:
\begin{itemize}
    \item Comparison between Eqs.~\eqref{eq:kn-upper-bounds} and \eqref{eq:dim-kn} shows that the dimension bounds based on symmetry and $YZ$-parity are close to tight for $K_n$. This is not the case for graph families with smaller automorphism groups. Can tighter dimension bounds be proven for general graphs $G$? 
    \item What is the the decomposition of $[\.g_{K_n},\.g_{K_n}]$ into its constituent simple Lie algebra components? Since $K_3 = C_3$, our results for the cycle graph give $[\.g_{K_3},\.g_{K_3}]\cong \.{su}(2)\oplus \.{su}(2)$. In the Supplementary Information we show that $[\.g_{K_4},\.g_{K_4}]\cong \.{su}(2)\oplus \.{su}(2)\oplus\.{su}(3)$. What about in general? Based on the dimension of $\.g_{K_n}$ in Eq.~\eqref{eq:dim-kn}, the fact that $\dim(\.{su}(n)) = n^2-1$, we conjecture that, for $n\ge 2$,
\eq{
    [\.g_{K_n},\.g_{K_n}]&\cong \bigoplus_{j=1}^{n-1} \.{su}\lp \left\lfloor\frac{j+1}{2}\right\rfloor + 1\rp.
}
    \item For what families of graphs can a full characterization of the DLAs be given? For which families of graphs are barren plateaus not present?
    \item Beyond QAOA-MaxCut, which other variational quantum algorithm DLAs with generators that are sums of Pauli strings can be fully characterized? This scenario corresponds to parameter sharing, where a single tunable parameter is used to tune multiple Pauli strings at the same time.
\end{itemize}

 In the longer term, a better understanding of DLAs for variational circuits with parameter sharing can hopefully be useful not only for analyzing quantum circuits, but also as a tool in algorithm design. For example, to accelerate algorithm convergence, one can consider reducing the depth of a circuit and correlating parameters~\cite{holmes2022connecting}, or adding additional DLA generators that do not preserve the symmetries of solutions~\cite{choquette2021quantum}. Knowing how a set of $\+A$ of generators is related to the corresponding DLA, and how adapting $\+A$ quantitatively changes the expressiveness and trainability of the circuit, is key to tailoring variational algorithms for improved performance.


\section{Methods}

In this section we prove, or outline the proofs of, the theorems and other claims made in the Results section.  For brevity, we omit some of the proofs of supporting lemmas here. All missing proofs appear in the Supplementary Information.

Let us first review some concepts from the theory of Lie algebras. Given a Lie algebra $\.g$, if $[A,B]=0$ for all $A,B\in\.g$ then $\.g$ is said to be  \emph{commutative} or \emph{abelian}. A \emph{Lie subalgebra} $\.h\subseteq\.g$ of a Lie algebra $\.g$ is a linear subspace of $\.g$ which is closed under the Lie bracket, i.e. $[H_1,H_2]\in \.h$ for all $H_1, H_2\in \.h$.  A Lie algebra is said to be \emph{compact} if it is the Lie algebra of a compact Lie group (see \cite{hall2013lie} or \cite{knapp1996lie} for further details).  A Lie algebra $\.g$ is said to be the direct sum of subalgebras $\.g_1, \.g_2, \ldots, \.g_k$, denoted
\eq{\.g = \.g_1 \oplus \.g_2 \oplus \ldots \oplus \.g_k,}
    if it is the direct sum of $\.g_1, \ldots, \.g_k$ as vector spaces, and $[\.g_i, \.g_j] = 0$ for all $i\neq j$. An \textit{ideal} of a Lie algebra $\.g$ is a Lie subalgebra $\.i$ of $\.g$ satisfying the further requirement that $[A,B]\in \.i$ for all $A\in \.g$ and $B\in\.i$. An ideal $\.i$ of $\.g$ is \textit{trivial} if $\.i = \.g$ or $\.i = \{0\}$. For any $\.g$, one can define the following two ideals:
\eq{
Z(\.g) &\defeq \{ A\in \.g : [A,B]=0, \forall \,B\in \.g\}, &\quad (\text{center of }\.g) \\
[\.g, \.g]&\defeq \spn\{[A,B] : A, B\in\.g\},&\quad (\text{commutator ideal}) 
}
A \textit{simple Lie algebra} is a Lie algebra that is non-abelian and contains no nontrivial ideals. A \textit{semisimple Lie algebra} is one which is the direct sum of simple Lie algebras. Given a Lie algebra $\.g$ and any $A\in \.g$, the associated \textit{adjoint map} is the linear transformation $ad_A:\.g\ra \.g$ given by $ad_A(B) = [A,B]$.    A \emph{Cartan subalgebra} $\.h\subseteq\.g$ of a complex semisimple Lie algebra is a maximal commutative subalgebra of $\.g$, where $ad_H$ is diagonalizable for all $H\in\.h$.

\subsection{General DLAs}
Here we give proofs of Theorems~\ref{lem:center} and \ref{thm:dim_dla_orbits}, as well as the upper bound for the DLA of the complete graph given in Eq.~\eqref{eq:kn-upper-bounds}.

Computing a basis $\mathcal{B}$ for a DLA (and hence its dimension) from a set of generators $\+A$ is algorithmically straightforward, but has time polynomial in the DLA dimension, which can be exponential in the number of qubits $n$. The basis is constructed iteratively: first set $\+B=\+A$. Then compute the set of all possible commutators $\{[A,B] : A\in\+A, B\in\+B\}$ and, for each commutator, if it is linearly independent from the elements in $\+B$, add it to the basis.  Repeat the process until no new linearly independent elements are found. Algorithm~\ref{alg:DLA-generation} in the Supplementary Information makes this procedure precise.  


We now prove Theorem~\ref{lem:center}, which we recall below.

\dimcenter*
\begin{proof}
    Suppose that $\mathcal A = \{H_1, \ldots, H_s\}$, and  Algorithm \ref{alg:DLA-generation} outputs the basis  \[\{H_1, \ldots, H_s, \ldots, H_m\}.\] Then by the algorithm, any element $H_{t}$ with $t\ge s+1$ is generated by taking the commutator of two previously generated matrices $H_j$ and $H_{j'}$.  $H_j$ and $H_{j'}$ are both in $\.g$, and thus $H_t = [H_j, H_{j'}]$ is in $[\.g,\.g]$. Thus $H_{s+1}, \ldots, H_m$ are all in $[\.g,\.g]$. Since they are also linearly independent as required by the algorithm, we know that $[\.g,\.g]$ has dimension at least $m-s$. As $\.g = \.c \oplus [\.g,\.g]$, we obtain the claimed bound by
    \[\dim(\.c) = \dim(\.g) - \dim([\.g,\.g]) \le s.\] 
\end{proof}
We remark that if the elements of $\+A$ are not linearly independent, then $\dim(\.c)$ is upper-bounded by the maximum number of linearly independent elements of $\+A$.

To prove Theorem~\ref{thm:dim_dla_orbits}, first recall that a (left) \emph{group action} of a group $G$ on a set $X$ is a function $f:G\times X\ra X$, $(g,x)\mapsto g.x$, such that (i) $e.x=x$, (ii) $g_1.(g_2.x) = (g_1 g_2).x$, where $e$ is the identity element of $G$. The \emph{orbit} of $x\in X$ under the action of $G$ is
\eql{
G_X(x)&\defeq\{g.x : g\in G\}, \label{eq:orbit-under-group-action}
}
and the set of all orbits of $X$ under the action of $G$, denoted $X/G$, forms a partition of $X$. Note that the orbit-sum of a Pauli string $P$ (see Eq.~\eqref{eq:orbit-sum}) is just an equally weighted linear combination of all elements in the orbit of $P$ under the action of a subgroup of $S_n$. 


For any Pauli string $P = P_1 P_2 \ldots P_n\in\+P_n$, 
and any permutation $\pi\in S_n$, define \[\pi(P) = P_{\pi(1)} P_{\pi(2)} \ldots P_{\pi(n)}.\] 
This defines an action $S_n$ on the set $\mcP_n$ of all $n$-qubit Pauli strings. 
By linearity this can be extended to the whole of $\.s\.l(2^n,\mb{C})$, by expressing any element in $\.s\.l(2^n,\mb{C})$ as a linear combination of Pauli strings. For $\pi\in S_n$, we say that $A\in\.{su}(2^n)$ is $\pi$-invariant if $\pi(A)=A$.

\begin{restatable}{lemma}{LemPiInvar}\label{lem:pi-invariance} Let $\+A\subseteq \.{su}(2^n)$, $\.g=\langle \+A\rangle_{{\rm Lie},\mb{F}}$ and $\pi\in S_n$. If $\pi(A)=A$ for all $A\in\+A$, then $\pi(H) = H$ for all $H\in\.g$.
\end{restatable}

\begin{proof} It is straightforward to show (see Supplementary Information) that $\pi$-invariance is preserved by the commutator, i.e., if $\pi(A)=A$ and $\pi(B)=B$ then $\pi([A,B])= [A,B]$. The result then holds by linearity.
\end{proof}

Now, any $n$-qubit traceless skew-Hermitian can be written as $H = i\sum_{P\in \+P_n} \alpha_P P$, where $\alpha_P\in\mb{R}$. If the generators of $\.g$ are $\pi$-invariant for all $\pi\in S$,  the coefficients $\alpha_P$ are the same for all Pauli strings $P$ in the same orbit-sum, and we obtain:

\begin{cor}\label{cor:lin-com-orbits}Let $\+A\subseteq \.{su}(2^n)$, $\.g=\langle \+A\rangle_{{\rm Lie},\mb{R}}$ and $S\le S_n$. If $\pi(A)=A$ for all $A\in\+A$ and all $\pi\in S$, then every element of $\.g$ is a linear combination (over $\mb{R}$) of Pauli orbit-sums.
\end{cor}

Now recall Theorem \ref{thm:dim_dla_orbits}.

\dimfromorbits*


\begin{proof}
     By Corollary~\ref{cor:lin-com-orbits}, any $H\in \.g$ can be written as 
    \eq{
        H&=\sum_{O\in \+P_n/S} \alpha_O \sum_{P\in O} P,
    } 
    where $\alpha_O\in\mb{R}$. Thus, $\dim(\.g)\le \abs{\+P_n/S}$. As $\.g$ consists of only traceless matrices, we exclude the Pauli string $I_1I_2\ldots I_n$, which is the sole element in its orbit. 
\end{proof}




The above can be applied to QAOA-Maxcut on a graph $G=(V,E)$ by considering  $Aut(G)\defeq \{\pi\in S_{|V|}: (u,v)\in E \Leftrightarrow (\pi(u),\pi(v))\in E\}$, the group of automorphisms of $G$. 
\begin{thm}\label{thm:DLA-aut(G)}For any $G=(V,E)$ on $n$ vertices,  $\dim(\.g_G)\le \abs{\+P_n/Aut(G)} - 1$.
\end{thm}
\begin{proof}
Note that 
\begin{enumerate}
    \item $\pi$ is a permutation of vertices, and therefore $\pi(i\sum_{j=1}^n X_j) = i\sum_{j=1}^n X_{\pi(j)} = \sum_{j=1}^n X_j$;
    \item $\pi$ induces a permutation of edges, i.e. $\{(\pi(j),\pi(k)): (j,k)\in E\} = E$, thus $\pi(i\sum_{(j,k)\in E} Z_j Z_k) = i\sum_{(j,k)\in E} Z_{\pi(j)} Z_{\pi(k)} = i\sum_{(j,k)\in E} Z_j Z_k$.
\end{enumerate}
Thus, by Lemma~\ref{lem:pi-invariance}, $\pi(H)=H$ for all $\pi\in Aut(G)$ and all $H\in\.g_G$. The result follows from Theorem~\ref{thm:dim_dla_orbits}.
\end{proof}
For the complete graph $K_n$ on $n$ vertices, the DLA dimension then satisfies:
\begin{cor} $\dim(\.g_{K_n}) \le {n+3\choose 3}$. 
\end{cor}
\begin{proof} 
$Aut(K_n) = S_n$, and thus, by Theorem~\ref{thm:DLA-aut(G)}, $\dim(\.g_{K_n}) \le \abs{\+P_n/S_n}$. Any two Pauli strings are in the same orbit under the action of the symmetric group if and only if they contain the same numbers of $I, X, Y, Z$ operators.  $\abs{\+P_n/S_n}$ is therefore equal to number of partitions of $n$ into four nonnegative integers that sum to $n$.  The number of $k$-tuples of non-negative integers whose sum is $n$ is $\binom{n+k-1}{n} = \binom{n+k-1}{k-1}$ because we choose $n$ elements from $k$ with repetition.  Taking $k=4$ gives $\abs{\mcP_n/Aut(K_n)}=\binom{n+3}{3}$.
\end{proof}
We remark that ${n+3 \choose 3}$, which is the tetrahedral number $\operatorname{Te}(n+1)$, is the dimension of the space of $S_n$-equivariant $\.{su}(2^n)$ operators (those that commute with representations of $S_n$) \cite{albertini2018controllability,anschuetz2023efficient, nguyen2024theory}, and is also the dimension of the DLA spanned by $\{\sum_i X_i, \sum_i Y_k, \sum_{i<j}Z_iZ_j\}$~\cite{albertini2018controllability, schatzki2024theoretical}, of which $\.g_{K_n}$ is a subalgebra. 


We now show how the above upper bound on $\dim(\.g_{K_n})$ can be improved, essentially by a factor of $2$. While the $Aut(G)$ invariance of the DLA generators restricts all $g\in\.g_G$ to be linear combinations of Pauli orbit-sums, we shall see that not all orbit-sums are permitted.  
Let $n_X, n_Y, n_Z$ be the numbers of Pauli $X,Y,Z$ in a given Pauli string $P=P_1P_2\cdots P_n$, respectively. A Pauli string will be called $YZ$-even if $n_Y + n_Z$ is even. 

\begin{restatable}{lemma}{DLAParity}\label{thm:DLA-parity}
 For any $G =(V,E)$,  $\abs{V} =n \ge 2$, 
 \eq{
 \.g_{G}\subseteq \spn\lp \{i S : S \text{ is YZ-even}\}\setminus \{iX_1X_2\ldots X_n, iI_1I_2\ldots I_n\}\rp.
 }   
\end{restatable}


The upper bound of Eq.~\eqref{eq:kn-upper-bounds} is then stated as the following corollary.

\begin{cor}\label{cor:Kn-bounds-almost-tight}$\dim(\.g_{K_n}) \le \sum_{s=0}^{\lfloor n/2 \rfloor} (2s+1)(n-2s+1) -2$ \\

\hspace{3.85cm}$ = \begin{cases}
\frac{1}{12}\lp n^3+6n^2 + 14n - 12\rp, & \text{if $n$ is even}; \\
\frac{1}{12}\lp n^3 +6n^2 + 11n -18\rp, & \text{if $n$ is odd}.\end{cases}$
\end{cor}
\begin{proof} For each $s\in\{0,1,\ldots \lfloor n/2 \rfloor\}$, the number of ways to choose $n_Y+n_Z=2s$ is $2s+1$ ($(n_Y,n_Z)\in\{(0,2s), (1, 2s-1),\ldots, (2s,0)\}$).  For each $(n_Y,n_Z)$ pair, there are $n-2s+1$ values of $(n_X,n_I)$. Thus, the number of $YZ$-even Pauli string orbits is
$\sum_{s=0}^{\lfloor n/2 \rfloor} (2s+1)(n-2s+1)$. Excluding the $I_1\ldots I_n$ and $X_1\ldots X_n$ strings gives the result.
\end{proof}
We remark that as $\dim(\.g_{K_n}) < {n+3 \choose 3}$, the DLA for $\.g_{K_n}$ is a strict subalgebra of the $S_n$-equivariant $\.{su}(2^n)$ operators, and thus $\.g_k$ does not correspond to a subspace-controllable system. 


Lemma~\ref{thm:DLA-parity} shows that, despite being $YZ$-even, $I_1I_2\ldots I_n$ and $X_1X_2\ldots X_n$ cannot lie in $\.g_G$.  Interestingly, these are the only Pauli strings that lie in the centralizer of $\.g_G$ in $\mb{C}^{2^n\times 2^n}$.

\begin{restatable}{proposition}{XICommute}\label{lem:X_I_commute} 
    Let $G$ be a connected graph on $n$ vertices, and $P=P_1P_2\ldots P_n$ a Pauli string. If $[P,H] = 0$ for all $H\in\.g_{G}$ then $P=X_1X_2\cdots X_n$ or $P=I_1I_2\cdots I_n$. 
\end{restatable}

\label{sec:DLA-general-results}


\subsection{Cycle graphs}

We now focus on the DLA for QAOA-MaxCut on the $n$-node cycle graph $C_n$, and outline the proofs of Theorems~\ref{thm:Cn-decomp-center}-\ref{thm:Cn-exp-var} and related results. Throughout this section, we will use $\.g$ to denote $\.g_{C_n}$.

We will make use of the following well-known classification result for Lie algebras. 
\begin{thm}[\cite{hall2013lie}]\label{thm:classification}
    Every finite-dimensional simple Lie algebra over $\mathbb{C}$ is isomorphic to precisely one of the following.
    \begin{enumerate}
        \item $A_n=\.{sl}(n+1,\mathbb{C})$ for $n\ge 1$, of dimension $n^2+2n$;
        \item $B_n=\.{so}(2n+1,\mathbb{C})$ for $n\ge 2$, of dimension $2n^2+n$;
        \item $C_n=\.{sp}(n,\mathbb{C})$ for $n\ge 3$, of dimension $2n^2+n$;
        \item $D_n=\.{so}(2n,\mathbb{C})$ for $n\ge 4$, of dimension $2n^2-n$;
        \item The ``exceptional'' Lie algebras $G_2$, of dimension $14$; $F_4$, of dimension $52$; $E_6$, of dimension $78$; $E_7$, of dimension $133$; and $E_8$, of dimension $248$. 
    \end{enumerate}
    The dimensions of Cartan subalgebras of $A_n$, $B_n$, $C_n$, and $D_n$ are $n$, while those of $G_2$, $F_4$, and $E_j$ are $2$, $4$, and $j$ respectively, for any $j\in \{6,7,8\}$.
\end{thm}
As this classification result holds for complex simple Lie algebras, we will at times consider the complexification $[\.g,\.g]_\mb{C}$ of the semisimple component of $\.g$, and use results for $[\.g,\.g]_\mb{C}$ to make conclusions about $[\.g,\.g]$.


\subsubsection{Basis and dimension}

Let $\+B=\{X, X^{n-1}, ZX^tZ, YX^tY, YX^tZ : t\in\{0,1,\ldots, n-2\}\}$ be the set of Pauli orbit-sums defined in Eqs. \eqref{eq:Cn-basis-X}-\eqref{eq:Cn-basis-YXtZ}, and note that we adopt the convention that when $t=0$, there is no Pauli $X$; for example, ${ZX^0Z} = iD_n(Z_0Z_1)$, and we sometimes also write ${ZZ}$ for ${ZX^0Z}$. Table~\ref{tab:Cn-adA-adB-on-orbits} gives the commutators of the Pauli orbit-sums of $\+B$ with the DLA generators $X$ and $ZZ$.

\begin{restatable}{thm}{CnBasis}\label{thm:Cn-basis} For $n\ge 3$, $\dim(\.g)=3n-1$, and 
$\+B=\{X, X^{n-1}, ZX^tZ, YX^tY, YX^tZ : t\in\{0,1,\ldots, n-2\}\}$ is a basis of $\.g$.
\end{restatable}

\begin{proof} As the elements of $\+B$ are each summations of distinct Pauli strings, they are orthogonal (with respect to the Hilbert-Schmidt inner product) and thus also linearly independent. In the Supplementary Information, using Table~\ref{tab:Cn-adA-adB-on-orbits}, we show that $\.g = \spn_\mb{R} \+B$ from which the result follows.
\end{proof}


\begin{table}[htb!]
    \centering
     \caption{Commutators of the DLA generators with the elements of $\+B$. Values of $t$ are in the range $\{0,1,\cdots, n-2\}$. We adopt the convention that ${YX^{-1}Y} \defeq  {-X}$, ${ZX^{n-1}Z} \defeq {X^{n-1}}$ and $YX^{-1}Z=YX^{n-1}Z\defeq 0$.
    } 
    \label{tab:Cn-adA-adB-on-orbits}
    {\begin{tabular}{c|c c c c c}
    $[\cdot, \cdot]$  & $X$  & ${Z X^t Z}$ & ${Y X^t Y}$ & ${Y X^t Z}$ & ${X^{n-1}}$\\
    \hline
    $X$  & $0$ & ${2} {Y X^t Z}$ & ${-2}{Y X^t Z}$ & ${4} ({Y X^t Y}-{Z X^t Z})$ & $0$    \\
    ${ZZ}$ & ${-2}{YZ}$ & ${-}{2} {YX^{t-1}Z}$ 
    & ${2} {YX^{t+1}Z}$  
    & ${4} ({ZX^{t+1}Z}-{YX^{t-1}Y})$ 
    &  ${-2}{YX^{n-2}Z}$
    \end{tabular}}
\end{table}



\subsubsection{Center, Cartan subalgebra, and simple Lie algebra decomposition}

With a basis for $\.g$ identified, we now outline the proof of Theorem~\ref{thm:Cn-decomp-center} which has two components: a characterization of the center $\.c$, which appears below as Theorem~\ref{thm:cycle-center}; and a direct sum decomposition of $\.g$ into the center and $n-1$ simple Lie algebras, each isomorphic to $\.{su}(2)$, which we show in Theorem~\ref{thm:Cn-decomp}. 
\begin{restatable}{thm}{CycleCenter}\label{thm:cycle-center}
    The Lie algebra $\.g$ has a $2$-dimensional center $\.c = \spn_{\mbR} \{c_1, c_2\}$, where  
   \begin{align*}
    c_1 = - X +\sum_{t=1}^{\frac{n-1}{2}} \big({ZX^{2t-1}Z} + {YX^{2t-1}Y}\big),\text{~~and~~}
     c_2 = {X^{n-1}} +\sum_{t=0}^{\frac{n-3}{2}}\big({ZX^{2t}Z} + {YX^{2t}Y}\big)
\end{align*}
if $n$ is odd, and 
\begin{align*}
    c_1  = {X^{n-1}} - X + \sum_{t=1}^{\frac{n-2}{2}}\big({ZX^{2t-1}Z} + {YX^{2t-1}Y}\big),\text{~~and~~} c_2  = \sum_{t=0}^{\frac{n-2}{2}} \big({ZX^{2t}Z} + {YX^{2t}Y}\big)
\end{align*}
if $n$ is even.
\end{restatable}
\begin{proof}
Any $H\in\.g$ can be expressed as a linear combination of the basis vectors, viz., 
    \[H = \alpha X + \sum_{t=0}^{n-2}(\beta_{1,t} {ZX^tZ} + \beta_{2,t} {YX^tY} + \beta_{3,t} {YX^tZ}) + \gamma {X^{n-1}},\quad \text{for }\alpha, \beta_{j,k},\gamma \in\mathbb{R}.\]
    It is easily verified by induction and the Jacobi identity that $H\in \.c$ if and only if $[X,H] = [ZZ,H] = 0$. Using Table \ref{tab:Cn-adA-adB-on-orbits}, $[X,H]$ and $[ZZ,H]$ can be computed, giving two linear equations for the coefficients $\alpha, \beta_{j,k}, \gamma$ which, when solved, yield the result. See Supplementary Information for more details.
\end{proof}

Additionally, from the Jacobi identity and Table~\ref{tab:Cn-adA-adB-on-orbits}, it is not hard to verify that the semisimple component $[\.g,\.g]_\mb{C}$ of $\.g_\mb{C}$ satisfies: 
\begin{thm}\label{thm:semi_dim}
    $\dim([\.g,\.g]_{\mbC})=3(n-1)$, and 
    \begin{align*}
        {[\.g,\.g]_{\mbC} = \spn_\mbC \{{ZX^{t+1} Z}-{YX^{t-1} Y},\ {YX^{t}Y} - {ZX^tZ}, \ {YX^tZ}: 0\le t \le n-2 \}.}
    \end{align*}
\end{thm}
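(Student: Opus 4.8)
The plan is to compute $[\.g,\.g]$ directly from the basis of $\.g$ given in Theorem~\ref{thm:Cn-basis}, exploiting the explicit action of $ad_X$ and $ad_{ZZ}$ recorded in Table~\ref{tab:Cn-adA-adB-on-orbits}. Since $\.g = \.c \oplus [\.g,\.g]$ and we already know from Theorem~\ref{thm:cycle-center} that $\dim(\.c) = 2$, we have $\dim([\.g,\.g]_\mbC) = (3n-1) - 2 = 3(n-1)$ automatically, so the real content is to identify the span explicitly. First I would observe that $[\.g,\.g]$ is spanned by all brackets $[E,E']$ with $E,E'$ ranging over the basis $S$; but because $\.g$ is generated by $X$ and $ZZ$, it actually suffices (by the Jacobi identity and the argument already used in the proof of Theorem~\ref{thm:cycle-center}) to take $[\.g,\.g] = ad_X(\.g) + ad_{ZZ}(\.g)$, i.e. the span of the images of the basis vectors under the two generators. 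Reading off Table~\ref{tab:Cn-adA-adB-on-orbits}, the vectors produced are exactly, up to scalars, $YX^tZ$ (from $[X,ZX^tZ]$ and $[ZZ,X]$, covering $t=0,\dots,n-2$ once we include the boundary conventions), $YX^tY - ZX^tZ$ (from $[X,YX^tZ]$), and $ZX^{t+1}Z - YX^{t-1}Y$ (from $[ZZ,YX^tZ]$) for $0\le t\le n-2$. These are precisely the $3(n-1)$ generators claimed in the statement, so $[\.g,\.g]_\mbC$ is contained in their complex span.

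For the reverse inclusion I would check linear independence of the $3(n-1)$ listed vectors and conclude by dimension count. Linear independence is easiest to see by grouping: the $YX^tZ$ for $t=0,\dots,n-2$ are supported on Pauli strings disjoint from those appearing in the $ZX^sZ$ and $YX^sY$ orbits, so they contribute $n-1$ independent vectors on their own. For the remaining $2(n-1)$ vectors $\{YX^tY - ZX^tZ, ZX^{t+1}Z - YX^{t-1}Y\}$, with the boundary conventions $YX^{-1}Y = -X$ and $ZX^{n-1}Z = X^{n-1}$, one can write the $2\times 2n$ coefficient pattern against the ordered basis $(X, ZZ, YY, ZX Z, YXY, \dots, ZX^{n-2}Z, YX^{n-2}Y, X^{n-1})$ and verify the resulting matrix has full rank $2(n-1)$; alternatively, note that their span plus $\.c$ recovers all of $\spn\{X, ZX^tZ, YX^tY, X^{n-1}\}$, which is $2n$-dimensional, forcing the $2(n-1)$ vectors to be independent since $\dim\.c = 2$. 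Either way we obtain $\dim([\.g,\.g]_\mbC) \ge 3(n-1)$, matching the upper bound.

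I expect the main (mild) obstacle to be bookkeeping the boundary cases $t=0$ and $t=n-2$ correctly: one must make sure the conventions $YX^{-1}Z = YX^{n-1}Z = 0$, $YX^{-1}Y = -X$, $ZX^{n-1}Z = X^{n-1}$ are consistently applied so that the listed generating set genuinely has the stated form at both ends of the range, and that no element is accidentally double-counted or degenerate (e.g. $[ZZ,X] = -2YZ$ is already the $t=0$ case of $YX^tZ$, and $[ZZ, YX^{n-2}Z] = 4(X^{n-1} - YX^{n-3}Y)$ is the $t=n-2$ case of $ZX^{t+1}Z - YX^{t-1}Y$). None of this is deep, but it is where a careless argument would slip. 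Once the bookkeeping is pinned down, the inclusion $[\.g,\.g]_\mbC \subseteq \spn_\mbC\{\dots\}$ is immediate from the table and the reverse inclusion follows from the independence/dimension argument above.
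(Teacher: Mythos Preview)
Your proposal is correct and follows essentially the same approach as the paper. The paper's proof is a one-liner (``By Table~\ref{tab:Cn-adA-adB-on-orbits}, it is not hard to verify\ldots''), and what you have written is precisely the verification it omits: the reduction $[\.g,\.g] = ad_X(\.g) + ad_{ZZ}(\.g)$ via Jacobi and generation, reading off the table, and closing with the dimension count from $\dim(\.g) - \dim(\.c)$.
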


The next step will be to determine the dimension of a Cartan subalgebra of $[\.g,\.g]_\mb{C}$. To proceed, note that $[YX^tZ, YX^sZ]=0$ for all $0\le s,t\le n-2$, and thus the vector space $\.h'\defeq\spn_{\mb{C}}\{YX^tZ : 0\le t \le n-2\}$ is commutative.  We then make use of the following facts.

\begin{restatable}{fact}{AdHDiag}\label{fact:adH-diag}
{The operator} $ad_H$ is diagonalizable, for all $H\in\.h'$.
\end{restatable}

\begin{restatable}{fact}{CartanSameDim}\label{fact:cartan-same-dim}All Cartan subalgebras of a complex semisimple Lie algebra have the same dimension.
\end{restatable}

\begin{restatable}{fact}{CartanSemisimple}\label{fact:cartan-semisimple}Let $\.g=\.g_1\oplus\.g_2\oplus\cdots \oplus \.g_k$ be a complex semisimple Lie algebra, with the $\.g_j$ simple. If $\.h_j$ is a Cartan subalgebra of $\.g_j$, then $\.h=\.h_1 \oplus \.h_2 \oplus \cdots\oplus\.h_k$ is a Cartan subalgebra of $\.g$. 
\end{restatable}

\begin{restatable}{fact}{DimCartanGreaterThree}\label{fact:dim-cartan-greater-three}The dimension of any simple complex Lie algebra is at least three times the dimension of its Cartan subalgebra.
\end{restatable}


Facts \ref{fact:adH-diag} and \ref{fact:cartan-same-dim} imply that $\dim(\.h)\ge\dim(\.h') = n-1$, for any Cartan subalgebra $\.h\subseteq[\.g,\.g]_\mb{C}$. Fact \ref{fact:cartan-semisimple} shows that the dimension of a Cartan subalgebra is $\dim(\.h)=\sum_{j=1}^k \dim(\.h_j)$.  Accounting for Fact \ref{fact:dim-cartan-greater-three} then gives 
\begin{align}\label{eq:Cn-subspace-dim}
    3n-3 = \dim([\.g,\.g]_\mbC) = \sum_{j=1}^k \dim(\.g_j) \ge \sum_{j=1}^k 3 \dim(\.h_j) = 3\dim(\.h),  
\end{align}
which implies $\dim(\.h)\le n-1$.  It follows that:




\begin{thm}\label{thm:cycle-Cartan}
  Let $\.h$ be a Cartan subalgebra of $[\.g,\.g]_{\mbC}$. Then, $\dim(\.h)=n-1$.
\end{thm}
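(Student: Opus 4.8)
The plan is to show $n-1\le\dim(\.h)\le n-1$, in both directions using only the generator $X = i\sum_j X_j$. I first record two facts. \emph{(a)} The computation inside the proof of Theorem~\ref{thm:cycle-center} shows that the centralizer $\ker(ad_X|_{\.g}) = \{H\in\.g:[X,H]=0\}$ is $(n+1)$-dimensional, spanned by $X$, $X^{n-1}$ and the orbits $ZX^tZ+YX^tY$ for $0\le t\le n-2$ (the relation $[X,H]=0$ forces $\beta_{1,t}=\beta_{2,t}$ and $\beta_{3,t}=0$ in that proof's notation). Since $\ker(ad_X|_{\.g})$ contains the $2$-dimensional center $\.c$ and $ad_X$ preserves the decomposition $\.g=\.c\oplus[\.g,\.g]$, the subalgebra $\.m\defeq\ker(ad_X|_{\.g'})=\ker(ad_X|_{\.g_\mbC})\cap\.g'$ has dimension $(n+1)-2=n-1$. \emph{(b)} As a matrix $X$ is skew-Hermitian, hence diagonalizable, so $ad_X$ is diagonalizable on $\.g_\mbC$ by Theorem~\ref{fact:adh-diagonalizable}, and a fortiori on its ideal $\.g'$. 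Writing $X = X_{\.c}+X'$ with $X_{\.c}\in\.c$ and $X'\in[\.g,\.g]$, we have $ad_{X_{\.c}}=0$, so $ad_{X'}|_{\.g'} = ad_X|_{\.g'}$ is diagonalizable; i.e.\ $X'$ is an $ad$-semisimple element of the semisimple Lie algebra $\.g'$, whose centralizer in $\.g'$ is $\.m$.

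For the upper bound I would use the standard fact (see \cite{knapp1996lie}) that an $ad$-semisimple element of a complex semisimple Lie algebra lies in a Cartan subalgebra. Picking a Cartan subalgebra $\.h'$ with $X'\in\.h'$, abelianness of $\.h'$ forces $\.h'\subseteq\.m$, hence $\dim(\.h')\le n-1$; since any two Cartan subalgebras of $\.g'$ are conjugate (Theorem~\ref{thm:cartan-conjugate}), and thus of equal dimension, $\dim(\.h)\le n-1$. (One could also decompose $\.g'=\.g_1\oplus\cdots\oplus\.g_k$ into simple ideals, use Lemma~\ref{lem:cartan_directsum}, and observe from the list in Theorem~\ref{thm:classification} that every simple Lie algebra satisfies $\dim\ge 3\operatorname{rank}$, giving $\dim(\.h)=\sum_i\operatorname{rank}(\.g_i)\le\frac13\dim(\.g')=n-1$.)

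For the lower bound, the single thing that needs to be proved is that $\.m$ is \emph{abelian}. Granting that, $\.m$ is a toral subalgebra of $\.g'$: a real basis of $\.m_\mbR\defeq\ker(ad_X|_{\.g})\cap[\.g,\.g]$ consists of commuting skew-Hermitian, hence $ad$-semisimple, matrices, which are therefore simultaneously diagonalizable on $\.g'$, so by Lemma~\ref{lem:diagonalizable-lin-com} and Theorem~\ref{fact:adh-diagonalizable} every element of $\.m=(\.m_\mbR)_\mbC$ is $ad$-semisimple. A toral subalgebra is contained in a maximal one, i.e.\ a Cartan subalgebra, so $\dim(\.h)\ge\dim(\.m)=n-1$; together with the upper bound this gives $\dim(\.h)=n-1$ (and shows $\.m$ is itself a Cartan subalgebra). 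Finally, abelianness of $\.m$ is equivalent to abelianness of $\ker(ad_X|_{\.g})$, because $\ker(ad_X|_{\.g})=\.c\oplus\.m_\mbR$ and $\.c$ commutes with everything; and since $X$ centralizes $\ker(ad_X|_{\.g})$, this in turn reduces to the commutator identities
\[
[X^{n-1},\,ZX^tZ+YX^tY]=0,\qquad [ZX^sZ+YX^sY,\,ZX^tZ+YX^tY]=0,\qquad 0\le s,t\le n-2,
\]
which I would verify by extending the $D_n$-orbit commutation relations of Table~\ref{tab:Cn-adA-adB-on-orbits} (equivalently, by direct Pauli-string arithmetic).

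The hard part will be exactly this last step --- the commutativity of $\ker(ad_X|_{\.g})$; it is a genuine combinatorial computation with the $D_n$-orbits, and it cannot be replaced by a dimension count, since a semisimple Lie algebra of dimension $3(n-1)$ need not have rank $n-1$ (e.g.\ $\.{sl}(4,\mbC)$ has dimension $15=3\cdot5$ but rank $3$). Everything else is soft Lie theory resting on Theorems~\ref{thm:cycle-center}, \ref{fact:adh-diagonalizable}, \ref{thm:cartan-conjugate} and Lemmas~\ref{lem:cartan_directsum}, \ref{lem:diagonalizable-lin-com}.
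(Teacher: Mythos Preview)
Your proof is correct but takes a genuinely different route from the paper. For the \emph{lower bound}, the paper works with the subalgebra $\.h'=\spn_\mbC\{YX^tZ:0\le t\le n-2\}$: the single identity $[YX^sZ,YX^tZ]=0$ (immediate from the orbit commutator table) shows $\.h'$ is abelian, and diagonalizability follows from Lemma~\ref{lem:diagonalizable-lin-com} and Theorem~\ref{fact:adh-diagonalizable}, giving $\dim(\.h)\ge n-1$. You instead use the centralizer $\.m$ of $X$ in $\.g'$, recycling the kernel computation from Theorem~\ref{thm:cycle-center}; this is clever but forces you to verify the extra identities $[X^{n-1},ZX^tZ+YX^tY]=0$ and $[ZX^sZ+YX^sY,ZX^tZ+YX^tY]=0$ --- these do hold (they drop out of the full orbit commutator table the paper records later as Table~\ref{table:Cn-generalized-orbit-commutator}), but the paper's choice of $\.h'$ avoids them entirely. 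For the \emph{upper bound}, the paper uses exactly your parenthetical alternative: decompose $\.g'$ into simples, invoke $\dim\ge 3\cdot\operatorname{rank}$ from Theorem~\ref{thm:classification}, and sum via Lemma~\ref{lem:cartan_directsum}. Your primary upper bound --- that an $ad$-semisimple element lies in some Cartan subalgebra, which in turn sits inside its centralizer --- is a nice classification-free argument, though the $\dim\ge 3\cdot\operatorname{rank}$ inequality is needed anyway for the next theorem identifying each simple summand as $\.{sl}(2,\mbC)$.
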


Note that Theorem \ref{thm:cycle-Cartan} implies that the inequality in Eq.~\eqref{eq:Cn-subspace-dim} actually takes equality, and furthermore $k=n-1$. From Theorem~\ref{thm:classification}, apart from $\.{sl}(2,\mb{C})$ which has dimension $3$ and a one dimensional Cartan subalgebra, the dimension of all other simple complex Lie algebras is strictly larger than three times the dimension of their Cartan subalgbras.  We therefore must have that $\.g_j\cong\.{sl}(2,\mb{C})$ for all $j=1,2,\ldots, n-1$, i.e.,

\begin{thm}\label{thm:Cn-g'-decomp}
    $[\.g,\.g]_{\mathbb C}\cong \underbrace{\.{sl}(2,\mathbb{C})\oplus \cdots \oplus\.{sl}(2,\mathbb{C})}_{n-1~{\rm times}}$.
\end{thm}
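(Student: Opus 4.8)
The plan is to combine the three structural facts already established: $[\.g,\.g]_{\mbC}$ is semisimple of dimension $3(n-1)$ (Theorem~\ref{thm:semi_dim}); it has a Cartan subalgebra of dimension $n-1$ (Theorem~\ref{thm:cycle-Cartan}); and, since $\.g_{C_n}$ is $D_n$-invariant, its complexification is a subalgebra of the $D_n$-equivariant operators, so in particular every element commutes with the translation generator of $\mbZ_n \le D_n$. First I would decompose $[\.g,\.g]_{\mbC}=\.g_1\oplus\cdots\oplus\.g_k$ into simple summands with Cartan dimensions $h_1,\dots,h_k$, so $\sum_j h_j = n-1$ and $\sum_j \dim(\.g_j)=3(n-1)$. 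From Theorem~\ref{thm:classification}, $\dim(\.g_j)\ge 3h_j$ for every simple Lie algebra, with equality \emph{only} for $\.{sl}(2,\mbC)=A_1$ (all other types, including $G_2$ with ratio $14/2=7$, have a strictly larger ratio). Since $\sum_j \dim(\.g_j)=3(n-1)=\sum_j 3h_j$, the inequality $\dim(\.g_j)\ge 3h_j$ must be an equality term-by-term, forcing every $\.g_j\cong \.{sl}(2,\mbC)$ and $h_j=1$. Counting dimensions then gives $k=n-1$, proving the claim.

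The key steps, in order: (1) invoke the decomposition $[\.g,\.g]_{\mbC}=\bigoplus_{j=1}^k\.g_j$ into complex simple Lie algebras (Knapp's theorem, as used already in the proof of Theorem~\ref{thm:cycle-Cartan}); (2) record that $\sum_j \dim(\.g_j)=\dim([\.g,\.g]_{\mbC})=3(n-1)$ by Theorem~\ref{thm:semi_dim} and $\sum_j h_j=\dim(\.h)=n-1$ by Theorem~\ref{thm:cycle-Cartan} and Lemma~\ref{lem:cartan_directsum}; (3) use the classification (Theorem~\ref{thm:classification}) to note that for each simple type one has $\dim(\.g_j)/h_j\ge 3$, with equality iff $\.g_j=A_1=\.{sl}(2,\mbC)$ --- this is the one point that needs a short case check: $A_n$ gives $(n^2+2n)/n=n+2\ge 3$, $B_n,C_n$ give $(2n^2+n)/n=2n+1\ge 5$, $D_n$ gives $(2n^2-n)/n=2n-1\ge 7$, and the exceptional algebras give $7,13,13,19,31$; (4) conclude from $3(n-1)=\sum_j\dim(\.g_j)\ge\sum_j 3h_j=3(n-1)$ that equality holds in every summand, hence $\.g_j\cong\.{sl}(2,\mbC)$ for all $j$; (5) then $\dim(\.g_j)=3$ for each $j$, so $k=\dim([\.g,\.g]_{\mbC})/3=n-1$.

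I do not expect a serious obstacle here --- the heavy lifting was done in Theorems~\ref{thm:semi_dim} and~\ref{thm:cycle-Cartan}. The only mild subtlety is making sure the ratio-$3$ equality case in the classification is genuinely unique to $\.{sl}(2,\mbC)$; this is a finite check over the list in Theorem~\ref{thm:classification} and is immediate from the dimension formulas there. One could alternatively give a more hands-on argument by exhibiting explicit $\.{sl}(2)$-triples inside $[\.g,\.g]_{\mbC}$ spanning the whole space and mutually commuting (and indeed the next subsection of the paper presumably does exactly this to get the explicit isomorphism needed for the purity computation), but for the abstract isomorphism statement the dimension-plus-Cartan-counting argument is the cleanest route and avoids any bookkeeping with the generalized orbits.
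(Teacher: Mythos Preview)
Your proposal is correct and follows essentially the same route as the paper: both arguments combine $\dim([\.g,\.g]_{\mbC})=3(n-1)$ with $\dim(\.h)=n-1$ and then use the classification (Theorem~\ref{thm:classification}) to observe that $\dim(\.g_j)\ge 3h_j$ with equality only for $A_1=\.{sl}(2,\mbC)$, forcing every simple summand to be $\.{sl}(2,\mbC)$. The aside about $D_n$-equivariance and the $\mbZ_n$ translation generator in your opening sentence is not actually used anywhere in your argument and can be dropped.
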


We now use the decomposition of $[\.g,\.g]_{\mbC}$ to identify $[\.g,\.g]$. 
\begin{restatable}{lemma}{DirectSumComplexification}\label{lem:direct-sum-complexification}
    Consider a matrix Lie algebra $\.g\subseteq \mathcal M_n(\mb{C})$ over the real field $\mb{R}$, and its complexification $\.g_{\mb{C}}$. 
    \begin{enumerate}
        \item If $\.g = \.g_1 \oplus \cdots \oplus \.g_n$, then  $\.g_{\mb{C}} = (\.g_1)_{\mb{C}} \oplus \cdots \oplus (\.g_n)_{\mb{C}}$. 
        \item Conversely, if $\.g_{\mb{C}} = \.g_1' \oplus \cdots \oplus \.g_n'$ for some complex Lie algebras $\.g_1', \ldots, \.g_n'$, then $\.g = \.g_1 \oplus \cdots \oplus \.g_n$, and $(\.g_j)_{\mb{C}} = \.g_j'$ for each $j=1, \ldots, n$.
    \end{enumerate}
\end{restatable}

In addition, the following is well known.
\begin{lem}[\rm\cite{hota2014real}]\label{lem:sl2c-complexification}
$\.{sl}(2,\mbR)_\mbC = \.{su}(2)_\mbC = \.{sl}(2,\mbC)$, and the only real Lie algebras whose complexification is $\.{sl}(2,\mbC)$ are $\.{sl}(2,\mbR)$ (non-compact) and  $\.{su}(2)$ (compact).
\end{lem}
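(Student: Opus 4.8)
The plan is to establish the two assertions separately.

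\textbf{The complexifications.} For the first claim it suffices, in each case, to exhibit an $\mbR$-basis of the algebra that is $\mbC$-linearly independent and $\mbC$-spans $\.{sl}(2,\mbC)$. The matrices $iX,iY,iZ$ form an $\mbR$-basis of $\.{su}(2)$; they are visibly $\mbC$-independent, and since $X,Y,Z$ already $\mbC$-span the traceless $2\times2$ matrices, so do $iX,iY,iZ$, giving $\.{su}(2)_\mbC=\.{sl}(2,\mbC)$. Likewise the standard triple $h=\left(\begin{smallmatrix}1&0\\0&-1\end{smallmatrix}\right)$, $e=\left(\begin{smallmatrix}0&1\\0&0\end{smallmatrix}\right)$, $f=\left(\begin{smallmatrix}0&0\\1&0\end{smallmatrix}\right)$ is an $\mbR$-basis of $\.{sl}(2,\mbR)$ that is $\mbC$-independent and $\mbC$-spans $\.{sl}(2,\mbC)$, so $\.{sl}(2,\mbR)_\mbC=\.{sl}(2,\mbC)$.

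\textbf{Reducing the classification.} Let $\.g$ be a real Lie algebra with $\.g_\mbC\cong\.{sl}(2,\mbC)$. Then $\dim_\mbR\.g=\dim_\mbC\.{sl}(2,\mbC)=3$. I would first note that $\.g$ is simple: it is non-abelian (else $\.g_\mbC$ would be abelian), and if $\.i\subsetneq\.g$ were a nonzero ideal, then expanding brackets of elements $A_1+iA_2\in\.g_\mbC$ and $B_1+iB_2\in\.i_\mbC$ with $A_j\in\.g$, $B_j\in\.i$ shows $\.i_\mbC$ is an ideal of $\.g_\mbC$ with $0<\dim_\mbC\.i_\mbC<3$, contradicting the simplicity of $\.{sl}(2,\mbC)$. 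So it remains to show that the only $3$-dimensional real simple Lie algebras are $\.{su}(2)$ and $\.{sl}(2,\mbR)$ (both occur, by the previous paragraph), and that these two are non-isomorphic.

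\textbf{Classifying the $3$-dimensional real simple Lie algebras.} Here I would use the structure-tensor normal form. Fix an $\mbR$-basis $e_1,e_2,e_3$ of $\.g$ and write $[e_i,e_j]=\sum_{k,\ell}\epsilon_{ijk}\,a_{k\ell}e_\ell$ for a matrix $a\in\mbR^{3\times3}$. Since a simple $\.g$ satisfies $[\.g,\.g]=\.g$, every $ad_X$ is a sum of commutators of operators and hence is traceless, so $\.g$ is unimodular; a short computation shows unimodularity is equivalent to $a=a^{T}$, in which case the Jacobi identity is automatic. Diagonalizing the symmetric $a$ by an orthogonal change of basis, if some eigenvalue vanished then the span of two of the $e_i$ would be a proper nonzero ideal, contradicting simplicity; so $a$ is invertible, and a further change of basis normalizes it to $a=\mathrm{diag}(1,1,1)$ or $a=\mathrm{diag}(1,1,-1)$ — the only two patterns up to permutation and overall sign flip. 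These two brackets are, up to isomorphism, $\.{su}(2)$ and $\.{sl}(2,\mbR)$ respectively. (An alternative route classifies the conjugate-linear involutions $\tau$ of $\.{sl}(2,\mbC)$, whose fixed-point sets are exactly the real forms: since every automorphism of $\.{sl}(2,\mbC)$ is inner, $\tau(X)=g\,\overline{X}\,g^{-1}$ for some $g$, where $\overline{X}$ is the entrywise conjugate; $\tau^2=\mathrm{id}$ forces $g\overline{g}=\pm I$, and the two sign cases yield the real forms $\.{sl}(2,\mbR)$ and $\.{su}(2)$.) Finally, $\.{su}(2)\not\cong\.{sl}(2,\mbR)$: their Killing forms are negative definite and of signature $(2,1)$ respectively, and a Lie-algebra isomorphism preserves the Killing form (equivalently, $\.{su}(2)$ is compact while $\.{sl}(2,\mbR)$ is not). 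I expect the classification step to be the main obstacle: making the normal-form reduction of $a$ fully rigorous (or, in the alternative approach, justifying that every automorphism of $\.{sl}(2,\mbC)$ is inner and solving $g\overline{g}=\pm I$), together with the case analysis that rules out every signature other than the two which appear.
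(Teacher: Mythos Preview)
The paper does not prove this lemma; it is simply cited from \cite{hota2014real} and used as a black box. Your proposal is a correct self-contained proof following one of the standard routes: first identifying the complexifications by exhibiting bases, then reducing to the Bianchi-type classification of $3$-dimensional real simple Lie algebras via the structure matrix $a$ (or alternatively via conjugate-linear involutions of $\.{sl}(2,\mbC)$), and finally distinguishing $\.{su}(2)$ from $\.{sl}(2,\mbR)$ by the signature of the Killing form. The only place requiring care, as you yourself flag, is the normal-form reduction of $a$: after orthogonally diagonalizing the symmetric $a$, one must track how $a$ transforms under a further diagonal rescaling of the basis to normalize the nonzero eigenvalues to $\pm1$; this is routine but should be spelled out. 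Since the paper offers nothing to compare against here, there is no divergence of approach to discuss.
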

Combined with the results shown earlier, these two lemmas allow us to conclude:
\begin{thm}\label{thm:Cn-decomp} 
The Lie algebra $\.g$ has the following direct sum decomposition
\[
    \.g = \.c \oplus \.g_1 \oplus \cdots \oplus\.g_{n-1},
\]
where the center $\.c$ is given in Theorem \ref{thm:cycle-center}.
\end{thm}
\begin{proof}
    By Theorem~\ref{thm:Cn-g'-decomp} and Lemma \ref{lem:direct-sum-complexification},  $[\.g,\.g] = \.g_1\oplus \cdots \oplus \.g_{n-1}$ for $(n-1)$ 3-dimensional real Lie algebras $\.g_1, \ldots, \.g_{n-1}$, whose complexifications are all isomorphic to $\.{sl}(2,\mb{C})$. Lemma~\ref{lem:sl2c-complexification} says that the only real Lie algebras whose complexification is $\.{sl}(2,\mb{C})$ are $\.{su}(2)$ and $\.{sl}(2,\mb{R})$.  Since a subalgebra of a compact Lie algebra is compact, $[\.g,\.g]$---as a subalgebra of the compact Lie algebra $\.{su}(2^n)$---is compact. Thus each $\.g_j$ must be isomorphic to $\.{su}(2)$.
\end{proof}

\subsubsection{Isomorphism}

Having shown that $[\.g,\.g]_\mb{C}\cong \bigoplus_{n-1}\.{sl}(2,\mb{C})$, we now outline how to explicitly construct such an isomorphism, by giving a basis for each simple component in Theorem~\ref{thm:Cn-g'-decomp} that respects the $\.{sl}(2,\mb{C})$ commutation relations.


\begin{defi}[Canonical basis] \label{def:sl2C-canonical-basis}
    We say that a set of nonzero matrices $\Gamma = \bigcup_{k=1}^{n-1}\{U_k, V_k, H_k\}$ is a canonical basis for $[\.g,\.g]_{\mbC} \cong \bigoplus_{n-1}\.{sl}(2,\mb{C})$ if $\Gamma$ satisfies the following relations:
    \begin{align}\label{eq:sl2C-canonical-basis}
        [H_k, U_k]&=2U_k, \quad [H_k, V_k]=-2V_k, \quad [U_k, V_k]=H_k,
    \end{align}
    and $[x,y]= 0$ for any $x\in\{U_k, V_k, H_k\}$ and any $y\in\{U_j, V_j, H_j\}$ where $j \neq k$.
\end{defi}
Note that $\Gamma$ is indeed a basis for $[\.g,\.g]_{\mbC}$ as a vector space, i.e., the elements are linearly independent. To see this, for any $x = \sum_k (a_k U_k + b_k V_k + c_k H_k)$ we must show that $x=0$ implies $a_k = b_k = c_k = 0$ for all $k$. One can easily verify that $[H_k,[U_k,x]] = -4c_kU_k$, thus $x=0$ and $U_k\ne 0$ imply $c_k = 0$. The other coefficients can be similarly handled by noticing $[U_k,[H_k,x]] = -2b_k H_k$ and $[V_k,[V_k,x]] = -2a_kV_k$. 


\begin{restatable}[Explicit isomorphism]{thm}{ExplicitIso}\label{thm:Cn-isomorphism} 
    For each $k=1, 2, \ldots, n-1$, let 
    \begin{align*}
        H_k & \defeq {-}\frac{{i}}{2n} \sum_{j=1}^{n-1} \sin \frac{kj\pi}{n} YX^{j-1}Z, \\
        U_k & \defeq 
        -\frac{1}{4n} \sum_{j=0}^{n-1} \lp e^{-ik(j+1)\pi/n} YX^{j-1} Y + e^{i kj\pi/n} ZX^j Z \rp, \\
        V_k & \defeq 
        \frac{1}{4n} \sum_{j=0}^{n-1} \lp e^{ik(j+1)\pi/n} YX^{j-1} Y + e^{-i kj\pi/n} ZX^j Z\rp.
    \end{align*}
Then, $\{U_k, V_k, H_k: k = 1, 2, \ldots, n-1\}$ is a canonical basis for $[\.g, \.g]_\mbC$.
\end{restatable}

Note that, given a basis $\{U_k, V_k, H_k\}_{k=1}^{n-1}$ for $\.{sl}(2,\mbC)$, one can define a set of vectors $\{\tilde X_k = i(U_k+V_k), \tilde Y_k = V_k-U_k, \tilde Z_k=iH_k\}_{k=1}^{n-1}$ which satisfies the $\.{su}(2)$ commutation relations 
\eq{
[\tilde X_k, \tilde Y_k] = 2\tilde Z_k, \quad [\tilde Y_k, \tilde Z_k] = 2\tilde X_k, \quad [\tilde Z_k, \tilde X_k] = 2\tilde Y_k,}
and therefore $\{\tilde{X_k}, \tilde{Y}_k ,\tilde{Z}_k\}_{k=1}^{n-1}$ forms a canonical basis for $n-1$ copies of $\.{su}(2)$ (defined in an analogous fashion to a canonical basis for $n-1$ copies of $\.{sl}(2,\mb{C})$). This immediately gives Theorem~\ref{thm:cycle-iso-su2}.

To prove Theorem~\ref{thm:Cn-isomorphism}, one approach is to simply verify that $\{U_k, V_k, H_k\}_{k=1}^{n-1}$ satisfies the $\.{sl}(2,\mb{C})$ commutation relations, and that elements from different simple ideals of $[\.g,\.g]_\mb{C}$ commute.  This involves elementary but tedious manipulations of trigonometric identities, and we defer these calculations to the Supplementary Material.  Here we outline another proof technique, which is how we originally identified the isomorphism.

The idea is based on the following theorem,  which gives conditions under which one canonical basis for $\.{sl}(2,\mb{C})$ can be transformed into another canonical basis for $\.{sl}(2,\mb{C})$ --- one which is expressed in terms of two DLA elements (which we will take to be the DLA generators) and a chosen Cartan subalgebra.

\begin{restatable}{thm}{CanonicalFromAB}\label{thm:AB} Let $\bigcup_{k=1}^{n-1}\{U_k, V_k, H_k\}$ be a canonical basis for $[\.g,\.g]_\mb{C}\cong\bigoplus_{n-1}\.{sl}(2,\mb{C})$.
Without loss of generality, any two elements $A,B\in[\.g,\.g]_\mb{C}$ can be expressed as
\eql{
A&= \sum_{j=1}^{n-1}\lp  a_j^{(u)}U_j + a_j^{(v)} V_j+ a_j^{(h)}H_j\rp, \quad
B= \sum_{j=1}^{n-1}\lp  b_j^{(u)}U_j + b_j^{(v)} V_\alpha +b_j^{(h)}H_j\rp, \label{eq:AB-wlog}
}
with $a_j^{(u)}, a_j^{(v)}, a_j^{(h)}, b_j^{(u)}, b_j^{(v)}, b_j^{(h)}\in\mb{C}$ for all $j$.
Define, for each $k\in\{1,\ldots, n-1\}$,
\eql{
\tilde{U}_k^{(m_k, c_k)}&:= \left[H_k, -\frac{i}{c_k}\lp A + e^{-i\pi m_k}B\rp\right], \quad 
\tilde{V}_k^{(m_k, c_k)}:= \left[H_k, \frac{i}{c_k}\lp A + e^{i\pi m_k}B\rp\right], \label{eq:uvh-alpha}
}
where $m_k, c_k\in\mb{R}, c_k > 0$. Then, $\bigcup_{k=1}^{n-1}\{\tilde{U}_k^{(m_k, c_k)}, \tilde{V}_k^{(m_k, c_k)}, H_k\}$ is a canonical basis for $[\.g,\.g]_\mb{C}$ if and only if
\eql{
a_k^{(u)} + e^{i\pi m_k}b_k^{(u)}&=0, \label{eq:C1}\tag{C1} \\
a_k^{(v)} + e^{-i\pi m_k}b_k^{(v)}&=0, \label{eq:C2}\tag{C2} \\
a_k^{(u)} + e^{-i\pi m_k}b_k^{(u)}&\neq 0, \label{eq:C3}\tag{C3} \\
a_k^{(v)} + e^{i\pi m_k}b_k^{(v)}&\neq 0, \label{eq:C4}\tag{C4} \\
c_k^2  &=4\lp a_k^{(v)}b_k^{(u)} - a_k^{(u)} b_k^{(v)} \rp
 \lp e^{i\pi m_k} - e^{-i\pi m_k}\rp. \label{eq:C5}\tag{C5}
 }
\end{restatable}
We remark that Eqs. \eqref{eq:C1}-\eqref{eq:C4} imply that $x\neq 0$, for all $x \in\ \bigcup_{k=1}^{n-1}\{a_k^{(u)}, a_k^{(v)}, b_k^{(u)}, b_k^{(v)}\}$. 
The aim will be to consider the case where $A=X$ and $B=ZZ$, i.e., $A,B$ correspond to the DLA generators of $[\.g,\.g]_{\mb{C}}$\footnote{Strictly speaking,  we view $X$ and $ZZ$ as generators of $\.g_{\mb{C}}=\.c_\mb{C}\oplus [\.g,\.g]_\mb{C}$ and thus, without loss of generality they can be expressed as $X= A + A_c$, $ZZ = B + B_c$, where $A,B$ are as in Eq. \eqref{eq:AB-wlog} and $A_c,B_c$ are the central components of $X$ and $ZZ$.  However, as the central components vanish under taking commutator, they will not impact our analysis, and we can therefore ignore $A_c$ and $B_c$ in our discussion.}, and choose the $\{H_k\}$ in such a way that conditions Eqs.~\eqref{eq:C1}-\eqref{eq:C5} are satisfied, so that one can construct a canonical basis $\{\tilde{U}_k, \tilde{V}_k, H_k\}$ (suppressing the $(m_k,c_k)$ superscripts) via Eq.~\eqref{eq:uvh-alpha}, even without initially knowing an explicit form for the $\{U_k\}$ and $\{V_k\}$.


To proceed, let us define some notation. For any $A,B\in\.g,~ k\in\mb{N}$, we use $(AB)^k$ to denote $(ad_A ad_B)^{k-1}[A,B]$, and similarly $A(AB)^k = ad_A((AB)^k)$, $B^2(AB)^k = ad_B (ad_B ((AB)^k))$ etc.
When $A=X$, $B=ZZ$, we use Table~\ref{tab:Cn-adA-adB-on-orbits} and induction on $k$ to prove the following lemma.
\begin{restatable}{lemma}{CnGenBasis}\label{thm:Cn-generating-basis}
    {For $A=X, B=ZZ$, and any $k\in\{1,2,\ldots,n-1\}$, }
    \begin{align}
        (AB)^k & = \sum_{j=0}^{k-1} c_{k,j} YX^j Z, \label{eq:ABk-recursion}
    \end{align}  
    where the $c_{k,j}$ can be computed from the recursive relation $ c_{k,j} = 8(c_{k-1,j-1} + c_{k-1,j+1})$, with initial values $c_{1,0} = 2$ and $c_{0,j} = 0, \forall j\in \{0,1,\ldots,k\}$; and $c_{k,j} = 0$ if $k\notin [n]$ or $j\notin\{0,\ldots, k-1\}$. 
\end{restatable}
Furthermore, the recursive relation for the coefficients $c_{k,j}$ can be solved to give
\eql{
c_{k,j}&= \frac{2^{4k-2}}{n} \sum_{j'=1}^{n-1} \sin\frac{(j+1)j'\pi}{n} \sin \frac{j'\pi}{n} \cos^{k-1} \frac{j'\pi}{n},  \label{eq:ckj-solved}
}
which, when substituted into Eq.~\eqref{eq:ABk-recursion} can be used to prove the following relation between $(AB)^n$ and lower powers of $AB$.
\begin{restatable}{thm}{CnABnbyAbk}\label{thm:Cn-ABn-by-ABk}
    \eql{
    (AB)^n = \sum_{k=1}^{n-1} a_k (AB)^k, \label{eq:el-sym-poly}
    }
    where \[a_k = (-1)^{n-k+1} \sigma_{n-k} (16 \cos(\pi/n), \ldots, 16 \cos((n-1)\pi/n)),\]
    with $\sigma_k$ the $k$-th elementary symmetric polynomial in $n-1$ variables.
\end{restatable}
The expression Eq.~\eqref{eq:ABk-recursion} is useful, and from it a number of facts can be proved (see Supplementary Information). In particular:
\begin{restatable}{fact}{ABisCartan}\label{fact:ab-is-cartan}$\.h\defeq\spn_{\mb{C}}\{(AB), (AB)^2,\ldots, (AB)^{n-1}\}$ is a Cartan subalgebra of $[\.g,\.g]_\mb{C}$.
\end{restatable}

\begin{restatable}{fact}{GenNoOverlap}\label{fact:gen-no-overlap}The generators $A=X$ and $B=ZZ$ have no overlap in $\.h$, i.e., $\langle A, H\rangle = \langle B, H\rangle = 0$ for any $H\in \.h$.
\end{restatable}

\begin{restatable}{fact}{AsquaredAB}\label{fact:A2AB}$A^2(AB)^j = B^2(AB)^j = -16(AB)^j$.
\end{restatable}

We are now in a position to prove Theorem~\ref{thm:Cn-isomorphism}.
\begin{proof}[Proof of Theorem~\ref{thm:Cn-isomorphism}]

As the $n-1$ vectors $\{H_k\}_{k=1}^{n-1}$ 
in Theorem~\ref{thm:AB}
mutually commute, they form a Cartan subalgebra of $[\.g,\.g]_\mb{C}$. In addition, for any two Cartan subalgebras $\.h_1, \.h_2$ of a complex semisimple Lie algebra $\.k$ 
there exists a bijective linear map $\phi:\.k\ra\.k$ satisfying $[\phi(A), \phi(B)] =\phi([A,B])$ and $\phi(\.h_1) = \.h_2$ (see e.g,~\cite{knapp1996lie}). Thus, from Fact~\ref{fact:ab-is-cartan} above, without loss of generality we can take the $\{H_k\}_{k=1}^{n-1}$ such that $\spn_\mb{C}\{H_1, \ldots,  H_{n-1}\} = \spn_\mb{C}\{AB, (AB)^2,\ldots, (AB)^{n-1}\}$. 
Fact~\ref{fact:gen-no-overlap} then implies that the coefficients $a_j^{(h)}=b_j^{(h)} = 0$ in Eq.~\eqref{eq:AB-wlog}, for all $j$. Direct calculation gives $[A,[A,H_j]] = 4a_j^{(u)}a_j^{(v)} H_j $ and $ [B,[B, H_j]] = 4b_j^{(u)}b_j^{(v)} H_j$ and, together with Fact~\ref{fact:A2AB}, one can see that $a_j^{(u)}a_j^{(v)} = b_j^{(u)}b_j^{(v)} = -4$.  With  $a_j^{(h)}=b_j^{(h)} = 0$, evaluating the commutators $[A,B]$ and $[A,[B,H_j]]$ gives
\eql{
[A,B]&= \sum_{j=1}^{n-1}\nu_j H_j, \label{eq:AB-nuH}\\
[A,[B,H_j]]&= \lambda_j H_j, \quad \forall j\in\{1,2,\ldots, n-1\},\label{ABH-lambdaH}
}
where $\nu_j=a_j^{(u)}b_j^{(v)}-a_j^{(v)}b_j^{(u)}$ and $\lambda_j = 2(a_j^{(u)}b_j^{(v)}+a_j^{(v)}b_j^{(u)})$. Repeatedly applying Eq.~\eqref{ABH-lambdaH} to Eq.~\eqref{eq:AB-nuH} gives 
\eql{(AB)^k &= \sum_{j=1}^{n-1}\nu_j\lambda_j^{k-1}H_j \label{eq:ABk-nulambdaH}
}
for all $k\in\{1,\ldots, n\}$.  Substituting Eq.~\eqref{eq:ABk-nulambdaH} into Eq.~\eqref{eq:el-sym-poly} then implies, via Vieta's formulas (see Supplementary Information), that 
\eql{
\lambda_j = 16\cos(j\pi/n), \label{eq:lambda_j}
}
from which it follows that $a_j^{(u)}b_j^{(v)}+a_j^{(v)}b_j^{(u)} = 8\cos(j\pi/n)$. Combined with $a_j^{(h)}=b_j^{(h)} = 0$ and $a_j^{(u)}a_j^{(v)} = b_j^{(u)}b_j^{(v)} = -4$ gives
\eq{
\frac{a_j^{(u)}b_j^{(v)}+a_j^{(v)}b_j^{(u)}}{b_j^{(u)}b_j^{(v)}} = \frac{a_j^{(u)}}{b_j^{(u)}} + \frac{a_j^{(v)}}{b_j^{(v)}} = \frac{a_j^{(u)}}{b_j^{(u)}} + \frac{b_j^{(u)}}{a_j^{(u)}} = -2\cos(j\pi/n).
}
Thus, $a_j^{(u)}/b_j^{(u)}=-e^{\pm i j \pi/n}$. We take $a_j^{(u)}/b_j^{(v)} = -e^{i j\pi/n}$ (taking $a_j^{(u)}/b_j^{(v)} = -e^{-i j\pi/n}$ leads to another valid isomorphism), and $b_j^{(u)}=-2$, obtaining the following parameters:
\eq{
a_j^{(u)}= 2e^{i j\pi/n}, \quad a_j^{(v)}=-2e^{-i j\pi/n}, \quad b_j^{(u)}=-2,\quad b_j^{(v)}=2.
}
Taking $c_j = 8 \sin(j\pi/n)$, these parameters satisfy Eqs.~\eqref{eq:C1}-\eqref{eq:C5} and thus, by Theorem~\ref{thm:AB},
\begin{multline}\label{eq:new-canonical-basis}
   \left\{\left[H_k, -\frac{i}{8\sin(k\pi/n)}\lp X + e^{-i\pi k/n }ZZ\rp\right]\right.,\\ \left.\left[H_k, \frac{i}{8\sin(k\pi/n)}\lp X + e^{i\pi k/n }ZZ\rp\right],  H_k : 1\le k\le n-1\right\} 
\end{multline}
is a canonical basis for $n-1$ copies of $\.{sl}(2,\mb{C})$. What remains to do is express the $H_k$ in terms of $A=X$ and $B=ZZ$. Eq. \eqref{eq:ABk-nulambdaH} can be expressed as the matrix equation
\eql{
\bpm (AB) \\(AB)^2 \\ \vdots \\(AB)^{n-1}\epm &=\bpm 1 & 1 & \ldots & 1 \\ \lambda_1 & \lambda_2 &\ldots & \lambda_{n-1}\\\vdots \\
\lambda_1^{n-1} & \lambda_2^{n-1} &\ldots & \lambda_{n-1}^{n-1}\epm
\bpm \nu_1 H_1 \\ \nu_2 H_2 \\ \vdots \\\nu_{n-1}H_{n-1}\epm \label{eq:AB-vandermonde}
}
where 
$\lambda_j = 16\cos(j\pi/n)$ and 
$\nu_j=8i\sin(j\pi/n)$.  As the $\lambda_j$ are distinct, the Vandermonde matrix in Eq.~\eqref{eq:AB-vandermonde} can be inverted, to express the $H_j$ in terms of the $(AB)^k$. In the Supplementary Information, we show how this can be done which, when combined with Eq. \eqref{eq:ABk-recursion}, gives
\eql{
H_k &= -\frac{i}{2n}\sum_{j=1}^{n-1}\sin\frac{jk\pi}{n}YX^{j-1}Z. \label{eq:H-YXZ}
}
Finally, using Table~\ref{tab:Cn-adA-adB-on-orbits} to evaluate the commutators in Eq. \eqref{eq:new-canonical-basis} then gives the canonical basis of Theorem~\ref{thm:Cn-isomorphism}.
\end{proof}

\subsubsection{$\.g$-purities}\label{sec:purity}
Recall the expectation and variance of the cost function from Eq.~\eqref{eq:VQA-exp-var}:
\eq{
\av_\theta[\ell(\rho,{O};\theta)] &= \tr(\rho_{\.c}O_{\.c}),\\
\var_\theta[\ell(\rho,{O};\theta)] &=\sum_{k=1}^{n-1}\frac{ \mcP_{\.g_k}(\rho)\mcP_{\.g_k}(O)}{\dim(\.g_k)},
}
where, the $\.g_k$-purity of a Hermitian operator $H$ is defined as $\mcP_{\.g_k}(H)\defeq \sum_{j=1}^{\dim(\.g_k)}\abs{\tr (E_j^\dag H )}$, and $\{E_j : j=1,\ldots, \dim(\.g_k) \}$ is an orthonormal basis for $\.g_k$.

For QAOA-MaxCut, the normalized measurement operator $O = -\frac{i}{\sqrt{n}}ZZ \in i\.g$, and it is straightforward to use Theorems~\ref{thm:cycle-iso-su2} and \ref{thm:cycle-center}  to show that
\eql{
\mcP_{\.c}(O) &= \mcP_{\.g_k}(O) = \frac{2^n}{n}. \label{eq:op-purities}
}
To compute the purities of the initial state $\rho = \ket{+^n}\bra{+^n}$, where $\ket{+^n}=\frac{1}{\sqrt{2^n}}\sum_{z\in\{0,1\}^n}\ket{z}$, note that $\rho$ can be expressed as
\begin{align}
    \rho &= \frac{1}{2^n}\lp I + \sum_j X_j + \sum_{j\ne k} X_j X_k + \cdots + X_0 \ldots X_{n-1}\rp \nonumber\\
    &= \frac{1}{2^n}\lp I - i X - i X^{(2)} - \cdots - i {X^{(n-1)}} -i {X^{(n)}}\rp,\label{eq:rho-express}
\end{align}
where $X^{(t)}\defeq i\sum_{\text{distinct }j_1, j_2, \ldots, j_t\in \{0,1,\ldots,n-1\}}X_{j_1}X_{j_2}\cdots X_{j_t}$.  From Theorem~\ref{thm:Cn-basis},  
\eq{
\+B&=\{X, X^{n-1}, ZX^t Z, YX^t Y, YX^t Z : t \in\{0, 1, \ldots, n-2\}\}
}is a basis for $\.g$, from which we can see that the $X$ and $X^{n-1}$ components of $\rho$ are in $\.g$, and all other components $X^{(t)}$ with $t\ne 1$ or $n-1$ have zero overlap with $\.g$. Thus, for any Lie subalgebra $\.s$ of $\.g$, $\mcP_{\.s}(\rho)= -\frac{i}{2^n}\mcP_{\.s}(X + X^{(n-1)})$. By some basic calculations and trigonometric identities, it can then be shown that
\eql{
\mcP_{\.c}(\rho)=\frac{1}{2^{n-1}},\quad \mcP_{\.g_k}(\rho)= \frac{\parity(k)}{2^{n-2}}. \label{eq:state-purities}
}
From Eqs. \eqref{eq:op-purities} and \eqref{eq:state-purities}, and noting that $\dim(\.g_k)=3$ for all $1\le k\le n-1$, the variance is thus
\begin{align*}
\var_\theta[\ell(\rho,{O};\theta)] &=\sum_{k=1}^{n-1}\frac{ \mcP_{\.g_k}(\rho)\mcP_{\.g_k}(O)}{\dim(\.g_k)}= \sum_{k=1}^{n-1}\frac{\frac{\parity(k)
        }{2^{n-2}} \frac{2^n}{n}}{3}\\ 
        &= \frac{4}{3n} \sum_{k=1}^{n-1} \parity(k)
        =  \frac{2(n-\parity(n))}{3n}.
\end{align*}
The calculation of the expectation value of the cost function is elementary, given the basis for $\.c$ in Theorem~\ref{thm:cycle-center}. We defer details to the Supplementary Information, and merely state the result:
\eq{
\av_\theta[\ell(\rho,{O};\theta)] &= \tr(\rho_{\.c}O_{\.c}) = \frac{\parity(n)}{\sqrt{n}}.
}


\label{sec:DLA-cycle}

\subsection{Complete graphs}
Here we show how to construct a basis for $\.g=\.g_{K_n}$ (in the remainder of this section we will omit the subscript $K_n$).  Constructing a basis for $[\.g,\.g]$ is conceptually similar, and we defer details to the Supplementary Information.  The dimension of $\.c$ (as stated in Theorem~\ref{thm:kn-semisimple-center}) then follows from the fact that $\dim(\.c) = \dim(\.g)-\dim([\.g,\.g])$. We remark that, while we can exactly determine the dimension of $\.g_{K_n}$, we are not able to give a closed form expression for the variance of the loss function for $K_n$ (as we were able to for $C_n$, and thus we cannot draw a definitive conclusion about BPs in this case.

Recall the Pauli orbit-sums $X^pY^qZ^r$ from Eq.~\eqref{eq:XpYqZr-basis}, with the convention that we drop the superscript if it is $1$ and omit writing the Pauli operator if the superscript is $0$.  With this notation, the two generators are $X^1$ and $Z^2$. For convenience, we also adopt the convention that 
\[X^p Y^q Z^r=0, \quad \text{when}~p<0,~ q<0,~ r<0 \text{ or }p+q+r>n.\]
The following commutation relations are easily verified:
 \begin{align}
    [X^1, X^pY^qZ^r] =& 2(q+1)X^pY^{q+1}Z^{r-1}-2(r+1)X^pY^{q-1}Z^{r+1}, \label{eq:Kn-adA}\\
    [Z^2,X^pY^qZ^r] = & 
    \alpha_1 X^{p+1}Y^{q-1}Z^{r-1}+\alpha_2 X^{p+1}Y^{q-1}Z^{r+1} \nonumber\\
    &- \alpha_3 X^{p-1}Y^{q+1}Z^{r-1}-\alpha_4X^{p-1}Y^{q+1}Z^{r+1}, \label{eq:Kn-adB}
\end{align}    
    where 
\eq{
\alpha_1&=2(n-p-q-r+1)(p+1), &\alpha_2=2(p+1)(r+1),\\
\alpha_3&=2(n-p-q-r+1)(q+1), &\alpha_4=2(q+1)(r+1).
}   

In particular, $ad_{X^1}$ changes the parities of $Y$ and $Z$ and leaves the number of $X$ unchanged, and $ad_{Z^2}$ changes the parities of all $X$, $Y$ and $Z$. 

In the Supplementary Information, we prove the following and other related facts.

\begin{restatable}{lem}{AdZZPQR}\label{lem:adzz-pqr} For any $p$ and any odd $q\ge 3$ and any odd $r$,
\begin{align*}
    ad_{Z^2}(X^pY^qZ^{r}) =  & \beta_1 ad_{X^1}(X^{p+1}Y^{q-2}Z^{r})+\beta_2ad_{X^1}(X^{p+1}Y^{q-2}Z^{r+2})  -\\
&\beta_3ad_{X^1}(X^{p-1}Y^{q}Z^{r})-\beta_4ad_{X^1}(X^{p-1}Y^{q}Z^{r+2})+\beta_5 ad_{Z^2}(X^{p}Y^{q-2}Z^{r+2}),
\end{align*}
where 
    \eq{
        & \beta_1=\frac{(n-p-q-r+1)(p+1)}{q-1}, 
         \qquad\beta_2 =\frac{(r+1)(p+1)}{q-1}, \\
         &\beta_3=n-p-q-r+1, \qquad \beta_4= r+1,
          \qquad \beta_5=\frac{r+1}{q-1}.
    }
\end{restatable}

Below we use shorthand notation to denote the $p,q,r$ values in $X^pY^qZ^r$, e.g.,  ``$10e$'' denotes $p=1$, $q=0$, $r$ is even, ``$*oo$'' denotes $p$ can be anything, $q,r$ odd, etc. 
\begin{restatable}[``$01o,11o,10e$'']{fact}{ZeroOneOdd}\label{fact:Kn-01o-11o-10e}
     $Y^1Z^{r}, X^1Y^1Z^{r} \in [\.g,\.g]$ for all odd $r$, and $X^1Z^{r}\in \.g$ for all even $r$. 
\end{restatable}

\begin{restatable}[``$*oo$'' and ``$oee$''] {fact}{StarOddOdd}\label{fact:anyoddodd-oddeveneven}
\begin{enumerate}
\item []
    \item $X^pY^qZ^r\in[\.g,\.g]$,  for any $p$ and any odd $q,r$.
    
    \item $X^pY^qZ^r\in\.g$, for any odd $p$ and any even $q,r$, when $n$ is even.

    \item $X^1Y^qZ^r\in\.g$, for any even $q,r$, when $n$ is odd.
\end{enumerate}
\end{restatable}


\begin{restatable}[linear independence]{fact}{LinearIndependence}\label{fact:linear-independent}
{Let  $L_{X^1}\defeq \{[X^1, X^p Y^q Z^r] : q,r \text{ odd}\}$, $L_{Z^2}\defeq \{[Z^2, X^p Y^1 Z^r] : r \text{ odd}\}$. Then, the vectors in $L\defeq L_{X^1}\cup L_{Z^2}$ are linearly independent, and $\abs{L} = \abs{L_{X^1}} + \abs{L_{Z^2}}$. }
\end{restatable}

Using the above facts, we now construct basis for $\.g_{K_n}$ when $n$ is even. The odd $n$ case is similar and can be found in the Supplementary Information.  
In the following, let $ad_A(S)\defeq\{[A, t]: ~ t\in S\}$ for any set $S$.


Define the following sets.
\begin{align*}
    & J_1\defeq\{X^{p}Y^{q}Z^{r}: p \text{ is even, $q$ and $r$ are odd}\}\backslash\{Y^1Z^1\}, \\ 
    & J_2\defeq\{X^{p}Y^{1}Z^{r}: \text{$p$ and $r$ are odd} \}, \\
    & J_3\defeq \{X^{p}Y^{q}Z^{r}: \text{$p$, $q$ and $r$ are all odd and } q\ge 3\}, \\
    & J_4\defeq\{X^{p}Y^{q}Z^{r}: p \text{ is odd, $q$ and $r$ are even}\}, \\
     & J_5\defeq\{Z^2,Y^2,Y^1Z^1\},\\
    & L_1\defeq\{[X^1, v]: v\in J_1\}, \\
    & L_2\defeq\{[Z^2,v]: v\in J_2\}. 
\end{align*}
Note that (i) $J_2 \cup J_3$ contains those $X^{p}Y^{q}Z^{r}$ with odd $p, q, r$ (``ooo''); (ii) $J_1 \cup J_5$ is a superset of those $X^{p}Y^{q}Z^{r}$ with even $q$ and odd $q$ and $r$ (``eoo''); and (iii) vectors in $L_1$ and $L_2$ all have even $p,q,r$ (``eee''). These are indicated in the second row of Table \ref{table:ad-on-basis-even-with-explanation}.

\begin{restatable}{thm}{KnBasisEven}\label{thm:Kn-basis-even}
When $n$ is even,
    \begin{equation*}
        R\defeq\bigcup_{k=1}^5 J_k \cup L_1 \cup L_2  
    \end{equation*}
    is a basis for $\.g$, and $\dim(\.g) = \frac{1}{12}\lp n^3 + 6n^2 + 2n + 12\rp$.
\end{restatable}

\begin{proof}
Let $\.r\defeq \spn_{\mathbb{R}} R$. In the remainder of this section we will show that  (1) the basis vectors in $R$ are linearly independent, (2) $|R| = \frac{n^3+6n^2+2n+12}{12}$,(3) $\.g\subseteq \.r$, (4) $\.r\subseteq \.g$. 
\end{proof}
\paragraph{(1) The basis vectors in $R$ are linearly independent.} 
\begin{itemize}
    \item For any $i\in[5]$, the Pauli  strings in $J_i$ are mutually orthogonal, as distinct elements in $J_i$ have distinct values of $(p,q,r)$. For the same reason, all Pauli strings in $J_i$ are orthogonal to those in $J_j$, for $i\neq j$. The vectors in $\bigcup_{i=1}^5 J_i$ are thus linearly independent, {and $\abs{J} = \sum_{i=1}^5\abs{J_i}$}.
    
    \item {$L_1\subset L_{X^1}$, and $L_2 \subset L_{Z^2}$, where $L_{X^1},L_{Z^2}$ are defined in Fact \ref{fact:linear-independent}. By Fact \ref{fact:linear-independent}, the vectors in $L_1\cup L_2$ are linearly independent, and $\abs{L_1\cup L_2} = \abs{L_1} + \abs{L_2}$}. 
    \item Vectors in $L_1\cup L_2$ have (even, even, even) parity for $(p,q,r)$, and thus are linearly independent from vectors in $\bigcup_{j=1}^4 J_j \cup \{Y^1Z^1\}$. Finally, it can be seen from Eqs.~\eqref{eq:Kn-adA} and \eqref{eq:Kn-adB} that the Pauli string orbits $Y^2$ and $Z^2$ do not appear as any component term in $L_1\cup L_2$.
\end{itemize}

\paragraph{(2) $|R| = \frac{n^3+6n^2+2n+12}{12}$:} {The above proof of linear independence also implies that  $\abs{R} = \sum_{i=1}^5\abs{J_i} + \sum_{i=1}^2\abs{L_i}$.} The results follows from the fact that
\begin{align*}
    &|J_1|=|L_1|=\sum_{t=1}^{n/2} t(n/2-t+1)-1=\frac{n^3+6n^2+8n-48}{48},\\
    & |J_2|=|L_2|= \sum_{t=1}^{n/2-1} (n/2-t)=\frac{n^2-2n}{8},\\
    &|J_3|=\sum_{t=2}^{n/2-1}(t-1)(n/2-t)=\frac{n^3-6n^2+8n}{48},\\
    & |J_4|=\frac{n}{2}+\sum_{t=1}^{n/2-1}(t+1)(n/2-t)=\frac{n^3+6n^2+8n}{48},\\
    & |J_5|=3.
\end{align*}

\paragraph{(3) $\.g\subseteq \.r$.} As $X^1\in J_4$ and $Z^2 \in J_5$, thus $X^1,Z^2\in \.r$. It suffices to show that $\.r$ is closed under $ad_{X^1}$ and $ad_{Z^2}$. This can be verified using Table \ref{table:ad-on-basis-even-with-explanation}, where the terms in parentheses indicate the proof methods we used, summarized as follows:
\begin{table}[]
    \centering
     \caption{Adjoint map on basis vectors of $\.r$ ($n$ even). The second row gives the signature of $(p,q,r)$ for the elements $X^p Y^q Z^r$ in the sets. Parentheses indicate proof method (see main text).}
    \label{table:ad-on-basis-even-with-explanation}\resizebox{\textwidth}{!}{
    \begin{tabular}{c|c c c c c | c c}
         &  $J_1$ & $J_2$ & $J_3$ &$J_4$ & $J_5$ & $L_1$ & $L_2$ \\
        \hline 
        \multirow{2}*{\color{blue}{ $(p,q,r)$}} & \multirow{2}*{\color{blue}{$eoo - 011$}} & \multirow{2}*{\color{blue}{$o1o$}} & \multirow{2}*{\color{blue}{$ooo - o1o$}} & \multirow{2}*{\color{blue}{$oee$}} & \multirow{2}*{\color{blue}{$020, 002, 011$}}& \color{blue}{$[X^1, \text{$eoo$}]$} & \color{blue}{$[Z^2, \text{$o1o$}]$}  \\
        & & & & & & \color{blue}{$\subseteq \text{$eee$}$} & \color{blue}{$\subseteq \text{$e0e$}\cup \text{$e2e$} \subseteq \text{$eee$}$} \\
        \hline
        \multirow{2}*{$ad_{X^1}$}  & $L_1$ & $J_4$ & $J_4$  & $J_2\cup J_3$  & $J_5$ & $J_1\cup J_5$ & $J_1\cup J_5$\\
        & \color{blue}{(by def)} & \color{blue}{(parity)} & \color{blue}{(parity)} & \color{blue}{(parity)} & \color{blue}{(computation)} & \color{blue}{(parity)} & \color{blue}{(parity)} \\
        \hline
        \multirow{2}*{$ad_{Z^2}$}  & $J_4$ & $L_2$ & $L_1\cup L_2$ & $J_1\cup J_5$ & $J_2\cup J_4$ & $J_2\cup J_3$ & $J_2\cup J_3$ \\
        & \color{blue}{(parity)} & \color{blue}{(by def)} & \color{blue}{(complicated)} & \color{blue}{(parity)} & \color{blue}{(computation)} & \color{blue}{(parity)} & \color{blue}{(parity)} 
    \end{tabular}}
\end{table}
\begin{itemize}
    \item ``by def'': $ad_{X^1}(J_1) \subseteq L_1$ and $ad_{Z^2}(J_2) \subseteq L_2$ hold by definition of $L_1$ and $L_2$. 
    
    \item ``parity'': The result follows from how the adjoint map changes the parities. For instance, $ad_{X^1}(J_4) \subseteq \spn_{\mathbb R} J_2 \cup J_3$ because the parities of $(p,q,r)$ for $J_4$ is $(odd, even, even)$ (as indicated by ``$oee$'' under $J_4$ in the second row), and the $ad_{X^1}$ map changes the parities of $Y$ and $Z$, which makes the new parity $(odd, odd, odd)$, thus in $\spn_{\mbR}J_2 \cup J_3$. Other ``parity'' entries can be easily proved in a similar way. 
    
    \item ``computation'': $J_5$ has three specific orbit-sums, and the adjoint actions on them can be verified by direct computation: $ad_{X^1}(Z^2)=2Y^1Z^1$, $ad_{X^1}(Y^1Z^1)=4Y^2 -4Z^2$ and $ad_{X^1}(Y^2)=-2Y^1Z^1$ are all in  $\spn_{\mathbb R}  J_5$; $ad_{Z^2}(Z^2)=0$; $ad_{Z^2}(Y^1Z^1)= 2(n-1)X^1+4X^1Z^2 \in  \spn_{\mathbb R} J_4$ and $ad_{Z^2}(Y^2)= 2X^1Y^1Z^1\in  \spn_{\mathbb R}  J_2$;
    
    \item ``complicated'': $ad_{Z^2}(J_3)$ is the only complicated case. We now show $ad_{Z^2}(J_3)$ to be in $\spn_{\mathbb R} (L_1\cup L_2)$.
    Let $p,q,r$ be odd, $q\ge 3$. From Lemma~\ref{lem:adzz-pqr}
    \begin{multline}\label{eq:Kn-adB-induction}
          ad_{Z^2}(X^pY^qZ^{r})=  \beta_1\underbrace{ad_{X^1}(X^{p+1}Y^{q-2}Z^{r})}_{\in L_1}+\beta_2\underbrace{ad_{X^1}(X^{p+1}Y^{q-2}Z^{r+2})}_{\in L_1} \\
        -\beta_3\underbrace{ad_{X^1}(X^{p-1}Y^{q}Z^{r})}_{\in L_1}-\beta_4\underbrace{ad_{X^1}(X^{p-1}Y^{q}Z^{r+2})}_{\in L_1}+\beta_5 ad_{Z^2}(X^{p}Y^{q-2}Z^{r+2}),
    \end{multline}
for some $\beta_1, \ldots \beta_5\in\mb{R}$.
The first 4 terms in Eq.~\eqref{eq:Kn-adB-induction} are all in $L_1$, and the last term has $Y$'s exponent decreased from $q$ to $q-2$. Apply Eq.~\eqref{eq:Kn-adB-induction} again to this last term, and continue this process until $Y$'s exponent is decreased to 1, yielding an expression for $ad_{Z^2}(X^pY^qZ^{r})$ as $4(q-1)/2 = 2(q-1)$ terms in $L_1$, plus a last term of form $ad_{Z^2}(X^{p}Y^{1}Z^{q+r-1})$. Since $p$ and $q+r-1$ are both odd, $ad_{Z^2}(X^{p}Y^{1}Z^{q+r-1})\in L_2$.
\end{itemize}

\paragraph{(4) $\.r\subseteq \.g$.}
We will show that  $J_k\subseteq \.g$ for all $k\in [5]$. $L_1 = ad_A(J_1)\subseteq \.g, L_2=ad_B(J_2)\subseteq \.g$ then follows immediately.

\begin{itemize}
    \item By definition,
    \eq{
    J_1\cup \{Y^1 Z^1\} &= \{X^p Y^q Z^r: p \text{ is even, } q \text{ and }r \text{ are odd}\},\\
    J_2\cup J_3 &=\{X^p Y^q Z^r: p,q,r \text{ are odd}\},
    }
    and, by Fact~\ref{fact:anyoddodd-oddeveneven}, $X^p Y^q Z^r\in \.g$ for all $p$, all odd $q$ and odd $r$.  Thus, $J_1\cup \{Y^1Z^1\}\subseteq \.g$ and $J_2\cup J_3\subseteq \.g$.
    
    \item By Fact~\ref{fact:anyoddodd-oddeveneven}, $X^p Y^q Z^r\in \.g$ for all odd $p$ and all even $q, r$. Thus, $J_4\subseteq \.g$.

    \item Fact \ref{fact:Kn-01o-11o-10e} gives $Y^1 Z^1\in \.g$ and therefore $ad_{X^1}(Y^1 Z^1)=Y^2 - Z^2\in \.g$. Since $Z^2\in\.g$, $Y^2$ is in $\.g$ as well. Thus, $J_5\subseteq\.g$.
\end{itemize}

\label{sec:DLA-complete-graph}

\bibliographystyle{quantum}
\bibliography{dla}

\begin{thebibliography}{10}

\bibitem{cerezo2021variational}
Marco Cerezo, Andrew Arrasmith, Ryan Babbush, Simon~C Benjamin, Suguru Endo, Keisuke Fujii, Jarrod~R McClean, Kosuke Mitarai, Xiao Yuan, Lukasz Cincio, et~al.
\newblock ``Variational quantum algorithms''.
\newblock \href{https://dx.doi.org/10.1038/s42254-021-00348-9}{Nature Reviews Physics {\bf 3}, 625--644}~(2021).

\bibitem{grover1996fast}
Lov~K Grover.
\newblock ``A fast quantum mechanical algorithm for database search''.
\newblock In Proceedings of the twenty-eighth annual ACM symposium on Theory of computing.
\newblock \href{https://dx.doi.org/10.1145/237814.237866}{Pages 212--219}.
\newblock ~(1996).

\bibitem{shor1994algorithms}
Peter~W Shor.
\newblock ``Algorithms for quantum computation: discrete logarithms and factoring''.
\newblock In Proceedings 35th annual symposium on foundations of computer science.
\newblock \href{https://dx.doi.org/10.1109/SFCS.1994.365700}{Pages 124--134}.
\newblock IEEE~(1994).

\bibitem{harrow2009quantum}
Aram~W Harrow, Avinatan Hassidim, and Seth Lloyd.
\newblock ``Quantum algorithm for linear systems of equations''.
\newblock \href{https://dx.doi.org/10.1103/PhysRevLett.103.150502}{Physical review letters {\bf 103}, 150502}~(2009).

\bibitem{quantumalgorithm}
Stephen Jordan.
\newblock ``Quantum algorithm zoo''.
\newblock https://quantumalgorithmzoo.org/.

\bibitem{babbush2021focus}
Ryan Babbush, Jarrod~R McClean, Michael Newman, Craig Gidney, Sergio Boixo, and Hartmut Neven.
\newblock ``Focus beyond quadratic speedups for error-corrected quantum advantage''.
\newblock \href{https://dx.doi.org/10.1103/PRXQuantum.2.010103}{PRX Quantum {\bf 2}, 010103}~(2021).

\bibitem{hoefler2023disentangling}
Torsten Hoefler, Thomas H{\"a}ner, and Matthias Troyer.
\newblock ``Disentangling hype from practicality: On realistically achieving quantum advantage''.
\newblock \href{https://dx.doi.org/10.1145/3571725}{Communications of the ACM {\bf 66}, 82--87}~(2023).

\bibitem{peruzzo2014variational}
Alberto Peruzzo, Jarrod McClean, Peter Shadbolt, Man-Hong Yung, Xiao-Qi Zhou, Peter~J Love, Al{\'a}n Aspuru-Guzik, and Jeremy~L O’brien.
\newblock ``A variational eigenvalue solver on a photonic quantum processor''.
\newblock \href{https://dx.doi.org/10.1038/ncomms5213}{Nature Communications {\bf 5}, 4213}~(2014).

\bibitem{mcclean2018barren}
Jarrod~R McClean, Sergio Boixo, Vadim~N Smelyanskiy, Ryan Babbush, and Hartmut Neven.
\newblock ``Barren plateaus in quantum neural network training landscapes''.
\newblock \href{https://dx.doi.org/10.1038/s41467-018-07090-4}{Nature Communications {\bf 9}, 4812}~(2018).

\bibitem{ortiz2021entanglement}
Carlos Ortiz~Marrero, M{\'a}ria Kieferov{\'a}, and Nathan Wiebe.
\newblock ``Entanglement-induced barren plateaus''.
\newblock \href{https://dx.doi.org/10.1103/PRXQuantum.2.040316}{PRX Quantum {\bf 2}, 040316}~(2021).

\bibitem{patti2021entanglement}
Taylor~L Patti, Khadijeh Najafi, Xun Gao, and Susanne~F Yelin.
\newblock ``Entanglement devised barren plateau mitigation''.
\newblock \href{https://dx.doi.org/10.1103/PhysRevResearch.3.033090}{Physical Review Research {\bf 3}, 033090}~(2021).

\bibitem{pesah2021absence}
Arthur Pesah, Marco Cerezo, Samson Wang, Tyler Volkoff, Andrew~T Sornborger, and Patrick~J Coles.
\newblock ``Absence of barren plateaus in quantum convolutional neural networks''.
\newblock \href{https://dx.doi.org/10.1103/PhysRevX.11.041011}{Physical Review X {\bf 11}, 041011}~(2021).

\bibitem{holmes2022connecting}
Zo{\"e} Holmes, Kunal Sharma, Marco Cerezo, and Patrick~J Coles.
\newblock ``Connecting ansatz expressibility to gradient magnitudes and barren plateaus''.
\newblock \href{https://dx.doi.org/10.1103/PRXQuantum.3.010313}{PRX Quantum {\bf 3}, 010313}~(2022).

\bibitem{larocca2022diagnosing}
Martin Larocca, Piotr Czarnik, Kunal Sharma, Gopikrishnan Muraleedharan, Patrick~J Coles, and Marco Cerezo.
\newblock ``Diagnosing barren plateaus with tools from quantum optimal control''.
\newblock \href{https://dx.doi.org/10.22331/q-2022-09-29-824}{Quantum {\bf 6}, 824}~(2022).

\bibitem{sharma2022trainability}
Kunal Sharma, Marco Cerezo, Lukasz Cincio, and Patrick~J Coles.
\newblock ``Trainability of dissipative perceptron-based quantum neural networks''.
\newblock \href{https://dx.doi.org/10.1103/PhysRevLett.128.180505}{Physical Review Letters {\bf 128}, 180505}~(2022).

\bibitem{friedrich2023quantum}
Lucas Friedrich and Jonas Maziero.
\newblock ``Quantum neural network cost function concentration dependency on the parametrization expressivity''.
\newblock \href{https://dx.doi.org/10.1038/s41598-023-37003-5}{Scientific Reports {\bf 13}, 9978}~(2023).

\bibitem{martin2023barren}
Enrique~Cervero Mart{\'\i}n, Kirill Plekhanov, and Michael Lubasch.
\newblock ``Barren plateaus in quantum tensor network optimization''.
\newblock \href{https://dx.doi.org/10.22331/q-2023-04-13-974}{Quantum {\bf 7}, 974}~(2023).

\bibitem{cerezo2021cost}
Marco Cerezo, Akira Sone, Tyler Volkoff, Lukasz Cincio, and Patrick~J Coles.
\newblock ``Cost function dependent barren plateaus in shallow parametrized quantum circuits''.
\newblock \href{https://dx.doi.org/10.1038/s41467-021-21728-w}{Nature Communications {\bf 12}, 1791}~(2021).

\bibitem{abbas2021power}
Amira Abbas, David Sutter, Christa Zoufal, Aur{\'e}lien Lucchi, Alessio Figalli, and Stefan Woerner.
\newblock ``The power of quantum neural networks''.
\newblock \href{https://dx.doi.org/10.1038/s43588-021-00084-1}{Nature Computational Science {\bf 1}, 403--409}~(2021).

\bibitem{holmes2021barren}
Zo{\"e} Holmes, Andrew Arrasmith, Bin Yan, Patrick~J Coles, Andreas Albrecht, and Andrew~T Sornborger.
\newblock ``Barren plateaus preclude learning scramblers''.
\newblock \href{https://dx.doi.org/10.1103/PhysRevLett.126.190501}{Physical Review Letters {\bf 126}, 190501}~(2021).

\bibitem{shaydulin2022importance}
Ruslan Shaydulin and Stefan~M Wild.
\newblock ``Importance of kernel bandwidth in quantum machine learning''.
\newblock \href{https://dx.doi.org/10.1103/PhysRevA.106.042407}{Physical Review A {\bf 106}, 042407}~(2022).

\bibitem{thanasilp2023subtleties}
Supanut Thanasilp, Samson Wang, Nhat~Anh Nghiem, Patrick Coles, and Marco Cerezo.
\newblock ``Subtleties in the trainability of quantum machine learning models''.
\newblock \href{https://dx.doi.org/10.1007/s42484-023-00103-6}{Quantum Machine Intelligence {\bf 5}, 21}~(2023).

\bibitem{uvarov2021barren}
AV~Uvarov and Jacob~D Biamonte.
\newblock ``On barren plateaus and cost function locality in variational quantum algorithms''.
\newblock \href{https://dx.doi.org/10.1088/1751-8121/abfac7}{Journal of Physics A: Mathematical and Theoretical {\bf 54}, 245301}~(2021).

\bibitem{kashif2023impact}
Muhammad Kashif and Saif Al-Kuwari.
\newblock ``The impact of cost function globality and locality in hybrid quantum neural networks on {NISQ} devices''.
\newblock \href{https://dx.doi.org/10.1088/2632-2153/acb12f}{Machine Learning: Science and Technology {\bf 4}, 015004}~(2023).

\bibitem{khatri2019quantum}
Sumeet Khatri, Ryan LaRose, Alexander Poremba, Lukasz Cincio, Andrew~T Sornborger, and Patrick~J Coles.
\newblock ``Quantum-assisted quantum compiling''.
\newblock \href{https://dx.doi.org/10.22331/q-2019-05-13-140}{Quantum {\bf 3}, 140}~(2019).

\bibitem{wang2021noise}
Samson Wang, Enrico Fontana, Marco Cerezo, Kunal Sharma, Akira Sone, Lukasz Cincio, and Patrick~J Coles.
\newblock ``Noise-induced barren plateaus in variational quantum algorithms''.
\newblock \href{https://dx.doi.org/10.1038/s41467-021-27045-6}{Nature Communications {\bf 12}, 6961}~(2021).

\bibitem{stilck2021limitations}
Daniel Stilck~Fran{\c{c}}a and Raul Garcia-Patron.
\newblock ``Limitations of optimization algorithms on noisy quantum devices''.
\newblock \href{https://dx.doi.org/10.1038/s41567-021-01356-3}{Nature Physics {\bf 17}, 1221--1227}~(2021).

\bibitem{garcia2024effects}
Diego Garc{\'\i}a-Mart{\'\i}n, Mart{\'\i}n Larocca, and Marco Cerezo.
\newblock ``Effects of noise on the overparametrization of quantum neural networks''.
\newblock \href{https://dx.doi.org/10.1103/PhysRevResearch.6.013295}{Physical Review Research {\bf 6}, 013295}~(2024).

\bibitem{ragone2023unified}
Michael Ragone, Bojko~N Bakalov, Fr{\'e}d{\'e}ric Sauvage, Alexander~F Kemper, Carlos Ortiz~Marrero, Mart{\'\i}n Larocca, and M~Cerezo.
\newblock ``A lie algebraic theory of barren plateaus for deep parameterized quantum circuits''.
\newblock \href{https://dx.doi.org/10.1038/s41467-024-49909-3}{Nature Communications {\bf 15}, 7172}~(2024).

\bibitem{fontana2024characterizing}
Enrico Fontana, Dylan Herman, Shouvanik Chakrabarti, Niraj Kumar, Romina Yalovetzky, Jamie Heredge, Shree~Hari Sureshbabu, and Marco Pistoia.
\newblock ``Characterizing barren plateaus in quantum ans{\"a}tze with the adjoint representation''.
\newblock \href{https://dx.doi.org/10.1038/s41467-024-49910-w}{Nature Communications {\bf 15}, 7171}~(2024).

\bibitem{diaz2023showcasing}
NL~Diaz, Diego Garc{\'\i}a-Mart{\'\i}n, Sujay Kazi, Martin Larocca, and M~Cerezo.
\newblock ``Showcasing a barren plateau theory beyond the dynamical {L}ie algebra''~(2023).
\newblock  \href{http://arxiv.org/abs/2310.11505}{arXiv:2310.11505}.

\bibitem{goh2023lie}
Matthew~L. Goh, Martin Larocca, Lukasz Cincio, M.~Cerezo, and Fr\'ed\'eric Sauvage.
\newblock ``Lie-algebraic classical simulations for quantum computing''.
\newblock \href{https://dx.doi.org/10.1103/3y65-f5w6}{Phys. Rev. Res. {\bf 7}, 033266}~(2025).

\bibitem{zhang2022quantum}
Bingzhi Zhang, Akira Sone, and Quntao Zhuang.
\newblock ``Quantum computational phase transition in combinatorial problems''.
\newblock \href{https://dx.doi.org/10.1038/s41534-022-00596-2}{npj Quantum Information {\bf 8}, 87}~(2022).

\bibitem{larocca2023theory}
Martin Larocca, Nathan Ju, Diego Garc{\'\i}a-Mart{\'\i}n, Patrick~J Coles, and Marco Cerezo.
\newblock ``Theory of overparametrization in quantum neural networks''.
\newblock \href{https://dx.doi.org/10.1038/s43588-023-00467-6}{Nature Computational Science {\bf 3}, 542--551}~(2023).

\bibitem{kiani2020learning}
Bobak~Toussi Kiani, Seth Lloyd, and Reevu Maity.
\newblock ``Learning unitaries by gradient descent''~(2020).
\newblock  \href{http://arxiv.org/abs/2001.11897}{arXiv:2001.11897}.

\bibitem{wiersema2020exploring}
Roeland Wiersema, Cunlu Zhou, Yvette de~Sereville, Juan~Felipe Carrasquilla, Yong~Baek Kim, and Henry Yuen.
\newblock ``Exploring entanglement and optimization within the {H}amiltonian variational ansatz''.
\newblock \href{https://dx.doi.org/10.1103/PRXQuantum.1.020319}{PRX Quantum {\bf 1}, 020319}~(2020).

\bibitem{monbroussou2023trainability}
L{\'e}o Monbroussou, Jonas Landman, Alex~B Grilo, Romain Kukla, and Elham Kashefi.
\newblock ``Trainability and expressivity of {H}amming-weight preserving quantum circuits for machine learning''~(2023).
\newblock  \href{http://arxiv.org/abs/2309.15547}{arXiv:2309.15547}.

\bibitem{farhi2014quantum}
Edward Farhi, Jeffrey Goldstone, and Sam Gutmann.
\newblock ``A quantum approximate optimization algorithm''~(2014).
\newblock  \href{http://arxiv.org/abs/1411.4028}{arXiv:1411.4028}.

\bibitem{blekos2024review}
Kostas Blekos, Dean Brand, Andrea Ceschini, Chiao-Hui Chou, Rui-Hao Li, Komal Pandya, and Alessandro Summer.
\newblock ``A review on quantum approximate optimization algorithm and its variants''.
\newblock \href{https://dx.doi.org/10.1016/j.physrep.2024.03.002}{Physics Reports {\bf 1068}, 1--66}~(2024).

\bibitem{wiersema2024classification}
Roeland Wiersema, Efekan K{\"o}kc{\"u}, Alexander~F Kemper, and Bojko~N Bakalov.
\newblock ``Classification of dynamical lie algebras of 2-local spin systems on linear, circular and fully connected topologies''.
\newblock \href{https://dx.doi.org/10.1038/s41534-024-00900-2}{npj Quantum Information {\bf 10}, 110}~(2024).

\bibitem{aguilar2024full}
Gerard Aguilar, Simon Cichy, Jens Eisert, and Lennart Bittel.
\newblock ``Full classification of {P}auli lie algebras''~(2024).
\newblock  \href{http://arxiv.org/abs/2408.00081}{arXiv:2408.00081}.

\bibitem{choquette2021quantum}
Alexandre Choquette, Agustin Di~Paolo, Panagiotis~Kl Barkoutsos, David S{\'e}n{\'e}chal, Ivano Tavernelli, and Alexandre Blais.
\newblock ``Quantum-optimal-control-inspired ansatz for variational quantum algorithms''.
\newblock \href{https://dx.doi.org/10.1103/PhysRevResearch.3.023092}{Physical Review Research {\bf 3}, 023092}~(2021).

\bibitem{meyer2023exploiting}
Johannes~Jakob Meyer, Marian Mularski, Elies Gil-Fuster, Antonio~Anna Mele, Francesco Arzani, Alissa Wilms, and Jens Eisert.
\newblock ``Exploiting symmetry in variational quantum machine learning''.
\newblock \href{https://dx.doi.org/10.1103/PRXQuantum.4.010328}{PRX Quantum {\bf 4}, 010328}~(2023).

\bibitem{lloyd2018quantum}
Seth Lloyd.
\newblock ``Quantum approximate optimization is computationally universal''~(2018).
\newblock  \href{http://arxiv.org/abs/1812.11075}{arXiv:1812.11075}.

\bibitem{morales2020universality}
Mauro~ES Morales, Jacob~D Biamonte, and Zolt{\'a}n Zimbor{\'a}s.
\newblock ``On the universality of the quantum approximate optimization algorithm''.
\newblock \href{https://dx.doi.org/10.1007/s11128-020-02748-9}{Quantum Information Processing {\bf 19}, 1--26}~(2020).

\bibitem{albertini2018controllability}
Francesca Albertini and Domenico D’Alessandro.
\newblock ``Controllability of symmetric spin networks''.
\newblock \href{https://dx.doi.org/10.1063/1.5004652}{Journal of Mathematical Physics{\bf 59}}~(2018).

\bibitem{kazi2024universality}
Sujay Kazi, Martin Larocca, and Marco Cerezo.
\newblock ``On the universality of {S}n-equivariant k-body gates''.
\newblock \href{https://dx.doi.org/10.1088/1367-2630/ad4819}{New Journal of Physics {\bf 26}, 053030}~(2024).

\bibitem{schatzki2024theoretical}
Louis Schatzki, Martin Larocca, Quynh~T Nguyen, Frederic Sauvage, and Marco Cerezo.
\newblock ``Theoretical guarantees for permutation-equivariant quantum neural networks''.
\newblock \href{https://dx.doi.org/10.1038/s41534-024-00804-1}{npj Quantum Information {\bf 10}, 12}~(2024).

\bibitem{harary1969graph}
F.~Harary.
\newblock ``Graph theory''.
\newblock Addison-Wesley. ~(1969).

\bibitem{regev2004quantum}
Oded Regev.
\newblock ``Quantum computation and lattice problems''.
\newblock \href{https://dx.doi.org/10.1137/S0097539703440678}{SIAM Journal on Computing {\bf 33}, 738--760}~(2004).

\bibitem{boneh1995quantum}
Dan Boneh and Richard~J Lipton.
\newblock ``Quantum cryptanalysis of hidden linear functions''.
\newblock In Advances in Cryptology—CRYPT0’95: 15th Annual International Cryptology Conference Santa Barbara, California, USA, August 27--31, 1995 Proceedings 15.
\newblock \href{https://dx.doi.org/10.1007/3-540-44750-4_34}{Pages 424--437}.
\newblock Springer~(1995).

\bibitem{beals1997quantum}
Robert Beals.
\newblock ``Quantum computation of {F}ourier transforms over symmetric groups''.
\newblock In Proceedings of the twenty-ninth annual ACM symposium on Theory of computing.
\newblock \href{https://dx.doi.org/10.1145/258533.258548}{Pages 48--53}.
\newblock ~(1997).

\bibitem{ettinger1999quantum}
Mark Ettinger and Peter Hoyer.
\newblock ``A quantum observable for the graph isomorphism problem''~(1999).
\newblock  \href{http://arxiv.org/abs/9901029}{arXiv:9901029}.

\bibitem{lloyd1995almost}
Seth Lloyd.
\newblock ``Almost any quantum logic gate is universal''.
\newblock \href{https://dx.doi.org/10.1103/PhysRevLett.75.346}{Physical Review Letters {\bf 75}, 346}~(1995).

\bibitem{jurdjevic1997geometric}
Velimir Jurdjevic.
\newblock ``Geometric control theory''.
\newblock Cambridge university press. ~(1997).

\bibitem{kandala2017hardware}
Abhinav Kandala, Antonio Mezzacapo, Kristan Temme, Maika Takita, Markus Brink, Jerry~M Chow, and Jay~M Gambetta.
\newblock ``Hardware-efficient variational quantum eigensolver for small molecules and quantum magnets''.
\newblock \href{https://dx.doi.org/10.1038/nature23879}{Nature {\bf 549}, 242--246}~(2017).

\bibitem{herrman2022multi}
Rebekah Herrman, Phillip~C Lotshaw, James Ostrowski, Travis~S Humble, and George Siopsis.
\newblock ``Multi-angle quantum approximate optimization algorithm''.
\newblock \href{https://dx.doi.org/10.1038/s41598-022-10555-8}{Scientific Reports {\bf 12}, 6781}~(2022).

\bibitem{kokcu2024classification}
Efekan K{\"o}kc{\"u}, Roeland Wiersema, Alexander~F Kemper, and Bojko~N Bakalov.
\newblock ``Classification of dynamical lie algebras generated by spin interactions on undirected graphs''.
\newblock \href{https://dx.doi.org/10.1063/5.0283277}{Journal of Mathematical Physics{\bf 67}}~(2026).

\bibitem{kazi2024analyzing}
Sujay Kazi, Mart\'{\i}n Larocca, Marco Farinati, Patrick~J. Coles, M.~Cerezo, and Robert Zeier.
\newblock ``Analyzing the quantum approximate optimization algorithm: Ans\"atze, symmetries, and lie algebras''.
\newblock \href{https://dx.doi.org/10.1103/yfwq-yqmk}{PRX Quantum {\bf 6}, 040345}~(2025).

\bibitem{nguyen2024theory}
Quynh~T Nguyen, Louis Schatzki, Paolo Braccia, Michael Ragone, Patrick~J Coles, Frederic Sauvage, Martin Larocca, and Marco Cerezo.
\newblock ``Theory for equivariant quantum neural networks''.
\newblock \href{https://dx.doi.org/10.1103/PRXQuantum.5.020328}{PRX Quantum {\bf 5}, 020328}~(2024).

\bibitem{matos2023characterization}
Gabriel Matos, Chris~N Self, Zlatko Papi{\'c}, Konstantinos Meichanetzidis, and Henrik Dreyer.
\newblock ``Characterization of variational quantum algorithms using free fermions''.
\newblock \href{https://dx.doi.org/10.22331/q-2023-03-30-966}{Quantum {\bf 7}, 966}~(2023).

\bibitem{mao2025qaoa}
Rui Mao, Pei Yuan, Jonathan Allcock, and Shengyu Zhang.
\newblock ``{QAOA-MaxCut} has barren plateaus for almost all graphs''~(2025).
\newblock  \href{http://arxiv.org/abs/2512.24577}{arXiv:2512.24577}.

\bibitem{herrman2021impact}
Rebekah Herrman, Lorna Treffert, James Ostrowski, Phillip~C Lotshaw, Travis~S Humble, and George Siopsis.
\newblock ``Impact of graph structures for {QAOA} on {MaxCut}''.
\newblock \href{https://dx.doi.org/10.1007/s11128-021-03232-8}{Quantum Information Processing {\bf 20}, 289}~(2021).

\bibitem{shaydulin2021classical}
Ruslan Shaydulin, Stuart Hadfield, Tad Hogg, and Ilya Safro.
\newblock ``Classical symmetries and the quantum approximate optimization algorithm''.
\newblock \href{https://dx.doi.org/10.1007/s11128-021-03298-4}{Quantum Information Processing {\bf 20}, 359}~(2021).

\bibitem{knapp1996lie}
Anthony~W Knapp and Anthony~William Knapp.
\newblock ``Lie groups beyond an introduction''.
\newblock \href{https://dx.doi.org/10.1007/978-1-4757-2453-0}{Volume 140}.
\newblock Springer. ~(1996).

\bibitem{d2024controllability}
Domenico D'Alessandro and Yasemin Isik.
\newblock ``Controllability of the periodic quantum ising spin chain''~(2024).
\newblock  \href{http://arxiv.org/abs/2405.00898}{arXiv:2405.00898}.

\bibitem{zhang2023tensorcircuit}
Shi-Xin Zhang, Jonathan Allcock, Zhou-Quan Wan, Shuo Liu, Jiace Sun, Hao Yu, Xing-Han Yang, Jiezhong Qiu, Zhaofeng Ye, Yu-Qin Chen, et~al.
\newblock ``Tensorcircuit: a quantum software framework for the nisq era''.
\newblock \href{https://dx.doi.org/10.22331/q-2023-02-02-912}{Quantum {\bf 7}, 912}~(2023).

\bibitem{hall2013lie}
Brian~C. Hall.
\newblock ``Lie groups, {L}ie algebras, and representations''.
\newblock \href{https://dx.doi.org/10.1007/978-1-4614-7116-5_16}{Springer}. ~(2013).

\bibitem{anschuetz2023efficient}
Eric~R Anschuetz, Andreas Bauer, Bobak~T Kiani, and Seth Lloyd.
\newblock ``Efficient classical algorithms for simulating symmetric quantum systems''.
\newblock \href{https://dx.doi.org/10.22331/q-2023-11-28-1189}{Quantum {\bf 7}, 1189}~(2023).

\bibitem{hota2014real}
Kaushalya~Rani Hota.
\newblock ``Real forms of simple {L}ie algebras''~(2014).

\bibitem{erdmann2006introduction}
Karin Erdmann and Mark~J Wildon.
\newblock ``Introduction to {L}ie algebras''.
\newblock \href{https://dx.doi.org/10.1007/1-84628-490-2}{Volume 122}.
\newblock Springer. ~(2006).

\bibitem{coolidge2004treatise}
Julian~Lowell Coolidge.
\newblock ``A treatise on algebraic plane curves''.
\newblock Volume~1.
\newblock Courier Corporation. ~(2004).

\bibitem{mirsky2012introduction}
Leonid Mirsky.
\newblock ``An introduction to linear algebra''.
\newblock Courier Corporation. ~(2012).

\bibitem{rawashdeh2019simple}
Edris~Ahmad Rawashdeh.
\newblock ``A simple method for finding the inverse matrix of {V}andermonde matrix''.
\newblock Matematicki Vesnik {\bf 3}, 207--213~(2019).
\newblock  url:~\url{https://euclid.matf.bg.ac.rs/index.php/vesnik/article/view/mv19303}.

\end{thebibliography}

\newpage

\appendix
{\centering\section*{Supplementary Information for ``On the dynamical Lie algebras of quantum approximate
optimization algorithms''}}
Additional details and proofs of our results stated in the main text can be found here.
\appendix

\section{General DLAs}

In the Methods section, a procedure for computing a basis for a DLA was informally described. A formal method for doing so is given in Algorithm~\ref{alg:DLA-generation} below.

\begin{algorithm}[!htb]
    \caption{Generate DLA}
    \label{alg:DLA-generation}
    \textbf{Input}: a set of elements $\+A = \{H_1, \ldots, H_s\}$ in a finite-dimensional Lie algebra $\.g$ over field $\mbF$.\\
    \textbf{Output}: $(\+B,k)$, where $\+B$ is a basis of $\langle \+A \rangle_{\text{Lie},\mbF}$, i.e. $\langle \+A \rangle_{\text{Lie},\mbF} = \spn_{\mbF}\{H: H\in \+B\}$, and $k=\operatorname{Deg}(\langle \+A\rangle_{\text{Lie},\mb{F}})$.\\  
    \begin{algorithmic}[1]
        \STATE Let $\+B_0$ be a maximal subset of linearly independent elements from  $\+H$
        \STATE $V_0 := \spn_{\mbF} \{H: H\in \+B_0\}$
        \FOR{ $k = 1, 2, \ldots$ }
        \STATE $\+B_k := \+B_{k-1}$, $V_k:= V_{k-1}$
            \FOR{$H^{(0)}\in \+B_{0}$} 
                \FOR{ $H^{(k-1)}\in \+B_{k-1} \backslash \+B_{k-2}$}
                \STATE compute $H = [H^{(0)}, H^{(k-1)}]$
                \IF{ $H\notin V_k$}
                    \STATE  let $\+B_k := \+B_{k-1}\cup \{H\}$ and $V_k:= \spn_{\mbF} \{V_k,H\}$
                \ENDIF
                \ENDFOR
            \ENDFOR
            \IF{$\+B_k = \+B_{k-1}$}
                \STATE  return $(\+B_k,k-1)$
            \ENDIF
        \ENDFOR
    \end{algorithmic}
\end{algorithm}

In the remainder of this section we give proofs of Lemmas~\ref{lem:pi-invariance} and \ref{thm:DLA-parity}, Proposition~\ref{lem:X_I_commute}, and Facts~\ref{fact:adH-diag}-\ref{fact:dim-cartan-greater-three}. Recall that for any Pauli string $P = P_1 P_2 \ldots P_n\in\+P_n$, 
and any permutation $\pi\in S_n$, \[\pi(P) = P_{\pi(1)} P_{\pi(2)} \ldots P_{\pi(n)}\] 
defines an action $S_n$ on the set $\mcP_n$ of all $n$-qubit Pauli strings. 
By linearity this can be extended to the whole of $\.s\.l(2^n,\mb{C})$:
\eql{\label{eq:aut(G)-action}
    \pi\left(\sum_P \alpha_P P\right) = \sum_P \alpha_P P_{\pi(1)} P_{\pi(2)} \ldots P_{\pi(n)}, \quad \forall \alpha_P\in \mbC. 
}

\LemPiInvar*

\begin{proof}Let $A,B\in \+A$, and consider their Pauli string decompositions $A = i\sum_P \alpha_P P_1P_2\ldots P_n$ and $B = i\sum_Q \beta_Q Q_1Q_2\ldots Q_n$. Then, 
    \begin{align*}
        &\pi([A,B]) \\ =& \pi\left(\left[i\sum_P \alpha_P P_1P_2\ldots P_n, i\sum_Q \beta_Q Q_1Q_2\ldots Q_n\right]\right) \\
         = & \pi\left(i^2\sum_P \alpha_P P_1P_2\ldots P_n \sum_Q \beta_Q Q_1Q_2\ldots Q_n - i^2\sum_Q \beta_Q Q_1Q_2\ldots Q_n \sum_P \alpha_P P_1P_2\ldots P_n \right) \\
         = & \pi\left(-\sum_{P,Q} \alpha_P \beta_Q \otimes_{j=1}^n (P_j Q_j) + \sum_{P,Q} \alpha_P \beta_Q\otimes_{j=1}^n (Q_j P_j) \right)  \\
         = & -\sum_{P,Q} \alpha_P \beta_Q \otimes_{j=1}^n (P_{\pi(j)} Q_{\pi(j)}) + \sum_{P,Q} \alpha_P \beta_Q \otimes_{j=1}^n (Q_{\pi(j)} P_{\pi(j)}) \\
         = & -\sum_P \alpha_P P_{\pi(1)}P_{\pi(2)}\ldots P_{\pi(n)} \sum_Q \beta_Q Q_{\pi(1)}Q_{\pi(2)}\ldots Q_{\pi(n)} \\ 
        & \quad + \sum_Q \beta_Q Q_{\pi(1)}Q_{\pi(2)}\ldots Q_{\pi(n)} \sum_P \alpha_P P_{\pi(1)}P_{\pi(2)}\ldots P_{\pi(n)} \\
         = & \pi(A)\pi(B) - \pi(B)\pi(A) \\
         = & AB-BA = [A,B].
    \end{align*}
Thus, $\pi$-invariance is preserved by the commutator. The result follows from linearity (Eq.~\eqref{eq:aut(G)-action}).
\end{proof}

We say that a Pauli string $P = P_1 P_2 \cdots P_n$ is of \emph{type $(n_X,n_Y,n_Z)$} if the numbers of Pauli $X$, $Y$, and $Z$ in the string are $n_X$, $n_Y$, and $n_Z$, respectively. A Pauli string will be called $YZ$-even if $n_Y + n_Z$ is even. 


\DLAParity*

\begin{proof}
   Let $\+X = \{X_1X_2\ldots X_n, I_1I_2\ldots I_n\}$. We prove this by induction on the generation steps. 
    The base cases are the DLA generators (ignoring the prefactor of $i$), $\sum_j X_j$ and $\sum_{(j,k)\in E} Z_j Z_k$, which have $(n_Y,n_Z) = (0,0)$ and $(0,2)$, respectively, thus satisfying the claimed parity. 
    Now consider any $YZ$-even Pauli string $P = P_1 P_2 \cdots P_n\notin \+X$, and we will show 
    that $[\sum_j X_j, P]$ and $[\sum_{(j,k)\in E} Z_j Z_k, P]$ are also in the span of Pauli strings satisfying these conditions. To that end, it suffices to consider each individual $[X_j, P]$ and $[Z_j Z_k, P]$. Without loss of generality, assume $j=1$ and $k=2$. From the $\.{su}(2)$ commutation relations
    \eq{
    [X_1,X_1]&=0,\quad [X_1, Y_1]= 2iZ_1, \quad [X_1, Z_1]=-2iY_1,
    }
    and the fact that $[X_1,I_1]=0$, it follows that (i) $[X_1, P] = [X_1,P_1]P_2\ldots P_n$ either has $n_Y+n_Z$ equal to that of $P$, or else  $[X_1, P] = 0$; (ii) $P\notin\+X\Ra[X_1,P]\notin\+X$.

    Similarly, the values of $[P_1P_2,Z_1Z_2]$ are given in the following table.
    \begin{center}
    \resizebox{\textwidth}{!}{\begin{tabular}{c|c|c|c|c|c|c|c|c|c|c|c|c|c|c|c|c}
        $[\cdot,\cdot]$ & $I_1I_2$ & $I_1X_2$ & $I_1Y_2$ & $I_1Z_2$ & $X_1I_2$ & $X_1X_2$ & $X_1Y_2$ &$X_1Z_2$ & $Y_1I_2$ & $Y_1X_2$ & $Y_1Y_2$ &$Y_1Z_2$ & $Z_1I_2$ & $Z_1X_2$ & $Z_1Y_2$ &$Z_1Z_2$ \\
        \hline
        $Z_1Z_2$ & $0$ & $2iZ_1Y_2$ & $-2iZ_1X_2$ & $0$ & $2iY_1Z_2$ & $0$ & $0$ & $2iY_1$ & $-2iX_1Z_2$ & $0$ & $0$ & $-2iX_1$ & $0$ & $2iY_2$ & $-2iX_2$ & $0$\\
    \end{tabular}}
\end{center}
    Again, it follows that (i) $[Z_1Z_2,P] = [Z_1Z_2, P_1P_2]P_3\ldots P_n$ either has $n_Y+n_Z$ of $[Z_1Z_2,P]$ equal to that of $P$ plus $a$, with $a \in \{0, 2, -2\}$ or else has  $[Z_1Z_2, P] = 0$; (ii) $P\notin\+X\Ra[Z_1Z_2,P]\notin \+X$. 
\end{proof}

\XICommute*

\begin{proof}
    Recall that $\.g_G$ has the generating set $\+A=\{i\sum_{j=1}^n X_j,i\sum_{(s,t)\in E}Z_sZ_t\}$. From the Jacobi identity it is straightforward to see that $[P,H]=0$ for all $H\in\.g_G$ if and only if $[P,A]=0$ for all $A\in\+A$.

    Let $P_{-j}=\bigotimes_{k\in [n]-\{j\}}P_k$, then $[P,i\sum_{j=1}^n X_j]=i\sum_{j=1}^n [P_j,X_j]\otimes P_{-j}$. For the Pauli string $[P_j,X_j]\otimes P_{-j}$, the $j$-th position is $[P_j,X_j]$, which is either 0, or some Pauli operator different from $P_j$. But for the Pauli string $[P_k,X_k]\otimes P_{-k}$, the $j$-th position is $P_j$ if $j\neq k$. Therefore, to ensure $[P,i\sum_{j=1}^n X_j]=0$, we need $[P_j,X_j]=0$ for all $1\le j \le n$. Therefore, $P_j=X_j$ or $P_j=I_j$ for all $j$. 
    
    Let $P_{-\{s,t\}}=\bigotimes_{k\in [n]-\{s,t\}}P_k$, then $[P,i\sum_{(s,t)\in E} Z_sZ_t]=i\sum_{j=1}^n [P_sP_t,Z_sZ_t]\otimes P_{-\{s,t\}}$. 
    For the Pauli string $[P_sP_t,Z_sZ_t]\otimes P_{-\{s,t\}}$, the $s$-th and $t$-th positions are $[P_sP_t,Z_sZ_t]$. 
    But for the Pauli string $[P_kP_\ell,Z_kZ_\ell]\otimes P_{-\{k,\ell\}}$, the $s$-th position is $P_s$ or the $t$-th position is $P_t$ if $(s,t)\neq(k,\ell)$.
     Therefore, 
    to ensure $[P,i\sum_{(s,t)\in E} Z_sZ_t]=0$, we need $[P_sP_t,Z_sZ_t]=0$ for all $(s,t)\in E$. Since we have shown that $P_s,P_t\in\{I,X\}$, we know that $P_sP_t=I_sI_t$ or $P_sP_t=X_sX_t$, i.e., $P_s=P_t$ for any $(s,t)\in E$. As $G$ is connected the result follows. 
\end{proof}

\AdHDiag*
\begin{proof} $H$ has the form $H=\sum_j c_j M_j$, where $c_j\in\mb{C}$ and $M_j\in\.{su}(2^n)$. Each $M_j$ is diagonalizable, and diagonalizable operators that mutually commute are simultaneously diagonalizable. Therefore $H$ is diagonalizable. The result follows from that fact that, for $\.g$  a matrix Lie algebra over $\mb{C}$ or $\mb{R}$, if $H\in\.g$ is diagonalizable then $ad_H:\.g\ra\.g$ is diagonalizable (see e.g.,\cite{erdmann2006introduction}). 
\end{proof}

\CartanSameDim*

\begin{proof} This follows from the fact that, for any two Cartan subalgebras $\.h_1, \.h_2$ of a complex semisimple Lie algebra $\.k$, 
there exists a bijective linear map $\phi:\.k\ra\.k$ satisfying $[\phi(A), \phi(B)] =\phi([A,B])$ and $\phi(\.h_1) = \.h_2$ (see e.g,~\cite{knapp1996lie}).
    
\end{proof}

\CartanSemisimple*

\begin{proof}
    Elements in the same $\.h_i$ commute as $\.h_i$ is a Cartan subalgebra. Elements in different $\.h_i$ and $\.h_j$ commute because of the Lie algebra direct sum definition. This $\.h$ is also maximal Abelian because any other element $x_1+\cdots + x_k$ with $x_i \in \.g_i$ cannot commute with any $\.h_i$ with $x_i\ne 0$---otherwise $\.h_i$ is not maximal and thus not Cartan subalgebra in $\.g_i$. By Fact~\ref{fact:adH-diag}, $ad_H$ is diagonalizable. 
\end{proof}

\DimCartanGreaterThree*

\begin{proof} Follows directly from Theorem~\ref{thm:classification}.
\end{proof}


\section{Cycle Graphs - Verification of Commutation Relations}

Here we prove Theorem \ref{thm:Cn-isomorphism}, restated below, by showing that the $\{U_k, V_k, H_k\}$ vectors satisfy the $\.{sl}(2,\mb{C}) $ commutation relations.  Throughout this section, we will use $\.g$ to denote $\.g_{C_n}$ .

\ExplicitIso*

For ease of later computations, we first extend the definition of Pauli orbit-sums. Recall that the orbit-sums $YX^k Y$, $ZX^k Z$ and $YX^k Z$ (see Table \ref{tab:Cn-adA-adB-on-orbits}) are only defined for $k=0, 1, \ldots, n-2$, with the exceptions of $YX^{-1}Y \defeq -X$ and $ZX^{n-1}Z \defeq X^{n-1}$. Now we extend the definition to all integers $k$.
\begin{defi}
    Define the extended orbits $YX^k Y$, $ZX^k Z$ and $YX^k Z$ as in Table \ref{table:Cn-def-generalized-orbit}.
 
    \begin{table}[]
    \centering 
    \caption{Extended Pauli orbit-sums.}
    \label{table:Cn-def-generalized-orbit}
    \begin{tabular}{c|c c c c c}
      Range of integer $k$    & $(-\infty,-1]$ & $-1$ & $[0, n-2]$ & $n-1$ & $[n-1,+\infty)$ \\ \hline
      $YX^kY$  & $ZX^{-k-2}Z$ & $-X$ & $YX^kY$ & $X^{n-1}$ & $ZX^{2n-k-2}Z$\\
      $ZX^kZ$   &  $YX^{-k-2}Y$ & $-X$ & $ZX^kZ$ & $X^{n-1}$ & $YX^{2n-k-2}Y$\\
      $YX^kZ$  & $-YX^{-k-2}Z$ & $0$ & $YX^kZ$ & $0$ & $-YX^{2n-k-2}Z$\\
    \end{tabular}
    
    \end{table}
\end{defi}
By definition, these extended orbit-sums have a period of $2n$: $YX^kY = YX^{2n+k}Y$, $ZX^kZ = ZX^{2n+k}Z$, and $YX^kZ = YX^{2n+k}Z$ for any $k\in \mbZ$.
By direct calculation, the commutator relations of these orbits can be computed as in Table \ref{table:Cn-generalized-orbit-commutator}, which generalizes Table \ref{tab:Cn-adA-adB-on-orbits}.

\begin{table}[] 
    \centering 
    \caption{Commutators of extended Pauli orbit-sums.}
    \label{table:Cn-generalized-orbit-commutator}
     \resizebox{\textwidth}{!}{\begin{tabular}{c | c c c c}
        $[\cdot, \cdot]$  & $YX^tY$ & $ZX^tZ$ & $YX^tZ$ \\
        \hline
        $YX^sY$  & $-2YX^{s-t-1}Z$ & $-2YX^{s+t+1}Z$ & $4(ZX^{t-s-1}Z-YX^{t+s+1}Y)$ \\
        $ZX^sZ$ & $2YX^{s+t+1}Z$ & $2YX^{s-t-1}Z$ & $4(ZX^{t+s+1}Z-YX^{t-s-1}Y)$ \\
        $YX^sZ$ & $4(YX^{t+s+1}Y-ZX^{s-t-1}Z)$ & $4(YX^{s-t-1}Y-ZX^{t+s+1}Z)$ & $0$
    \end{tabular}}
    
\end{table}

We will also use the following elementary facts.

\begin{fact}\label{fact:expsin}
        For any $\alpha,\beta,\alpha', \beta'\in \mbR$,
        \begin{equation}\label{eq:expsin}
            e^{i\alpha}\sin \beta+e^{i\alpha'}\sin \beta'=-ie^{i\frac{\alpha+\beta+\alpha'+\beta'}{2}}\cos\frac{\alpha+\beta-\alpha'-\beta'}{2}+ie^{i\frac{\alpha-\beta+\alpha'-\beta'}{2}}\cos\frac{\alpha-\beta-\alpha'+\beta'}{2}.
        \end{equation}
    \end{fact}

    \begin{fact}\label{lem:sumodd}
        Let $\theta=k\pi/2n$ for some integers $1\le |k| \le 2n-1$ and $n\ge 1$. Then 
        \begin{equation*}
            \sum_{j=0}^{n-1}\cos(2j+1)\theta=0.
        \end{equation*}
    \end{fact}
We are now in a position to prove Theorem~\ref{thm:Cn-isomorphism}.

\begin{proof}[Proof of Theorem~\ref{thm:Cn-isomorphism}]
The canonical basis in Theorem~\ref{thm:Cn-isomorphism} has 9 commutator relations to verify (Eq. \eqref{eq:sl2C-canonical-basis}): three within each $\{U_k, V_k, H_k\}$  and six across different sets $\{U_k, V_k, H_k\}$ and $\{U_\ell, V_\ell, H_\ell\}$ (namely $[U_k, U_\ell]$, $[U_k, V_\ell]$, $[H_k, U_\ell]$, $[V_k, V_\ell]$, $[H_k, V_\ell]$, and $[H_k, H_\ell]$ all being 0). We can prove these by directly computing the commutators, which involves elementary but tedious manipulations of trigonometric identities. 

\noindent{\bf (1) $[H_k,U_\ell]=0$ if $k\neq \ell$ and $[H_k,U_k]=2U_k$.} Plugging the definitions of $H_k$ and $U_\ell$ into the commutator $[H_k, U_\ell]$ and using Table \ref{table:Cn-generalized-orbit-commutator}, we have 
\begin{multline*}
    [H_k,U_\ell] = \frac{i}{2n^2} \sum_{q=0}^{n-1} \sum_{p=1}^{n-1}\big(e^{-i\frac{\ell (q+1)\pi}{n}}\sin \frac{kp\pi}{n} (YX^{p+q-1}Y-ZX^{p-q-1}Z)\\ + e^{i\frac{\ell q\pi}{n}}\sin \frac{kp\pi}{n} (YX^{p-q-2}Y-ZX^{p+q}Z)\big).
\end{multline*}
According to the definitions in Table \ref{table:Cn-def-generalized-orbit}, $YX^{k}Y\defeq ZX^{2n-k-2}Z$ and $ZX^{k}Z\defeq YX^{2n-k-2}Y$ if $k\ge n-1$ and $YX^{k}Y\defeq ZX^{-k-2}Z$ and $ZX^{k}Z\defeq YX^{-k-2}Y$ if $k\le -1$.
Therefore, we transform (1)  $YX^{p+q-1}Y\to ZX^{2n-p-q-1}Z$ and $ZX^{p+q}Z  \to YX^{2n-p-q-2}Y$  if $n-q\le p \le n-1$; (2) $ZX^{p-q-1}Z \to YX^{-p+q-1}Y$ and $YX^{p-q-2}Y \to ZX^{-p+q}Z$ if $1\le p \le q$. In this way, $[H_k,U_\ell]$ can be written as follows, where the four lines correspond to the four terms in the RHS of the above equality of $[H_k,U_\ell]$. (We also slightly adjust the range of the $p$ to make the inner summation over $q$ in a legal range.)
\begin{align*}
   &[H_k,U_\ell] 
   \\ = &\frac{i}{2n^2}\Big(\sum_{q=0}^{n-2}\sum_{p=1}^{n-q-1} e^{-i\frac{\ell(q+1)\pi}{n}}\sin \frac{kp\pi}{n}YX^{p+q-1}Y+\sum_{q=1}^{n-1}\sum_{p=n-q}^{n-1} e^{-i\frac{\ell(q+1)\pi}{n}}\sin \frac{kp\pi}{n}ZX^{2n-p-q-1}Z\\
        &- \sum_{q=1}^{n-1}\sum_{p=1}^{q} e^{-i\frac{\ell(q+1)\pi}{n}}\sin\frac{kp\pi}{n}YX^{-p+q-1}Y- \sum_{q=0}^{n-2}\sum_{p=q+1}^{n-1}e^{-i\frac{\ell(q+1)\pi}{n}}\sin\frac{kp\pi}{n}ZX^{p-q-1}Z\\
        &+ \sum_{q=1}^{n-1}\sum_{p=1}^{q} e^{i\frac{\ell q \pi}{n}} \sin \frac{kp\pi}{n} ZX^{-p+q}Z+ \sum_{q=0}^{n-2}\sum_{p=q+1}^{n-1} e^{i\frac{\ell q \pi}{n}} \sin \frac{kp\pi}{n} YX^{p-q-2}Y\\
        &-\sum_{q=0}^{n-2}\sum_{p=1}^{n-q-1} e^{i\frac{\ell q \pi}{n}} \sin \frac{kp\pi}{n} ZX^{p+q}Z-\sum_{q=1}^{n-1}\sum_{p=n-q}^{n-1} e^{i\frac{\ell q \pi}{n}} \sin \frac{kp\pi}{n} YX^{2n-p-q-2}Y\Big). 
\end{align*}

Now we change the inner variable $p$ to $j$ to transform all $YX^{\cdots} Y$ terms to $YX^{j-1}Y$, and all $ZX^{\cdots} Z$ terms to $ZX^{j}Z$. 
Then we can obtain that
\begin{align*}
  [H_k,U_\ell]
   = &\frac{i}{2n^2}\Big(\sum_{q=0}^{n-2}\sum_{j=q+1}^{n-1} e^{-i\frac{\ell(q+1)\pi}{n}}\sin \frac{k(j-q)\pi}{n}YX^{j-1}Y \\&+\sum_{q=1}^{n-1}\sum_{j=n-q}^{n-1} e^{-i\frac{\ell(q+1)\pi}{n}}\sin \frac{k(2n-q-j-1)\pi}{n}ZX^{j}Z\\
        &- \sum_{q=1}^{n-1}\sum_{j=0}^{q-1} e^{-i\frac{\ell(q+1)\pi}{n}}\sin\frac{k(q-j)\pi}{n}YX^{j-1}Y \\& - \sum_{q=0}^{n-2}\sum_{j=0}^{n-q-2}e^{-i\frac{\ell(q+1)\pi}{n}}\sin\frac{k(j+q+1)\pi}{n}ZX^{j}Z\\
        &+ \sum_{q=1}^{n-1}\sum_{j=0}^{q-1} e^{i\frac{\ell q \pi}{n}} \sin \frac{k(q-j)\pi}{n}ZX^{j}Z+ \sum_{q=0}^{n-2}\sum_{j=0}^{n-q-2} e^{i\frac{\ell q \pi}{n}} \sin \frac{k(q+j+1)\pi}{n} YX^{j-1}Y\\
        &-\sum_{q=0}^{n-2}\sum_{j=q+1}^{n-1} e^{i\frac{\ell q \pi}{n}} \sin \frac{k(j-q)\pi}{n} ZX^{j}Z\\ &-\sum_{q=1}^{n-1}\sum_{j=n-q}^{n-1} e^{i\frac{\ell q \pi}{n}} \sin \frac{k(2n-q-j-1)\pi}{n} YX^{j-1}Y\Big).
\end{align*}
We add the following four terms which are all equal to $0$ to the RHS of the above equation:
\begin{align*}
   & e^{-i\frac{\ell(q+1)\pi}{n}}\sin \frac{k(q-q)\pi}{n}YX^{q-1}Y, &e^{-i\frac{\ell(q+1)\pi}{n}}\sin \frac{k(-(n-q-1)-q-1)\pi}{n}ZX^{n-q-1}Z, & \ \\ 
   & e^{i\frac{\ell q\pi}{n}}\sin \frac{k(q-q)\pi}{n}ZX^qZ, &e^{i\frac{\ell q \pi}{n}}\sin \frac{k((n-q-1)+q+1)}{n}YX^{n-q-2}Y. & \ 
\end{align*}
 Then we have
\begin{multline*}
     [H_k,U_\ell] =  \frac{i}{2n^2}\sum_{j=0}^{n-1}\sum_{q=0}^{n-1} \Big(\big(e^{-i\frac{\ell(q+1)\pi}{n}}\sin \frac{k(j-q)\pi}{n}+e^{i\frac{\ell q\pi}{n}}\sin \frac{k(j+q+1)\pi}{n}\big)YX^{j-1}Y 
        \\
         +\big(e^{-i\frac{\ell (q+1)\pi}{n}}\sin \frac{k(-j-q-1)\pi}{n}+e^{i\frac{\ell q\pi}{n}}\sin \frac{k(q-j)\pi}{n}\big)ZX^{j}Z\Big).
\end{multline*}
Using Fact \ref{fact:expsin}, we can obtain that
\begin{multline*}
    [H_k,U_\ell] =\\ \frac{i}{2n^2} \sum_{j=0}^{n-1}\sum_{q=0}^{n-1}\Big(\big( -ie^{i\frac{(k(2j+1)-\ell)\pi}{2n}}\cos\frac{(k+\ell)(2q+1)\pi}{2n}+ie^{-i\frac{(k(2j+1)+\ell)\pi}{2n}} \cos \frac{(k-\ell)(2q+1)\pi}{2n}\big) YX^{j-1}Y\\
         +\big( -ie^{-i\frac{(k(2j+1)+\ell)\pi}{2n}}\cos \frac{-(k+\ell)(2q+1)\pi}{2n}+ie^{i\frac{(k(2j+1)-\ell)\pi}{2n}} \cos\frac{(k-\ell)(2q+1)\pi}{2n}\big) ZX^jZ\Big).
\end{multline*}

By Fact \ref{lem:sumodd}, we know that if $k\ne \ell$, then $\sum_{q=0}^{n-1} \cos \frac{(k\pm \ell) (2q+1)\pi }{2n} = 0$ and thus $[H_k,U_\ell] = 0$. If $k = \ell$, then $\sum_{q=0}^{n-1} \cos \frac{(k+ \ell) (2q+1)\pi }{2n} = 0$ and \[[H_k,U_k] = -\frac{1}{2n} \sum_{j=0}^{n-1} \lp e^{-ik(j+1)\pi/n} YX^{j-1} Y + e^{i kj\pi/n} ZX^j Z \rp = 2U_k.\] 

\noindent{\bf (2) $[H_k,V_\ell]=0$ if $k\neq \ell$ and $[H_k,V_k]=-2V_k$.}\\
Plugging the definitions of $H_k$ and $V_\ell$ into the commutator $[H_k, V_\ell]$ and using Table \ref{table:Cn-generalized-orbit-commutator},
\begin{multline*}
    [H_k,V_\ell] 
        =\frac{-i}{2n^2}\sum_{q=0}^{n-1}\sum_{p=1}^{n-1}\Big( e^{i\frac{\ell(q+1)\pi}{n}}\sin \frac{kp\pi}{n} (YX^{p+q-1}Y-ZX^{p-q-1}Z)\\ +e^{-i\frac{\ell q\pi}{n}}\sin \frac{kp\pi}{n} (YX^{p-q-2}Y-ZX^{p+q}Z)\Big).
\end{multline*}

        According to the definitions in Table \ref{table:Cn-def-generalized-orbit}, $YX^{k}Y\defeq ZX^{2n-k-2}Z$ and $ZX^{k}Z\defeq YX^{2n-k-2}Y$ if $k\ge n-1$ and $YX^{k}Y\defeq ZX^{-k-2}Z$ and $ZX^{k}Z\defeq YX^{-k-2}Y$ if $k\le -1$.
Therefore, we transform (1)  $YX^{p+q-1}Y\to ZX^{2n-p-q-1}Z$ and $ZX^{p+q}Z  \to YX^{2n-p-q-2}Y$  if $n-q\le p \le n-1$; (2) $ZX^{p-q-1}Z \to YX^{-p+q-1}Y$ and $YX^{p-q-2}Y \to ZX^{-p+q}Z$ if $1\le p \le q$. In this way, $[H_k,V_\ell]$ can be written as follows, where the four lines correspond to the four terms in the RHS of the above equality of $[H_k,V_\ell]$. (We also slightly adjust the range of the $p$ to make the inner summation over $q$ in a legal range.)
\begin{align*}
     & [H_k,V_\ell] \\
        = & -\frac{i}{2n^2}\sum_{q=0}^{n-2}\sum_{p=1}^{n-q-1}e^{i\frac{\ell(q+1)\pi}{n}}\sin \frac{kp\pi}{n}YX^{p+q-1}Y-\frac{i}{2n^2}\sum_{q=1}^{n-1}\sum_{p=n-q}^{n-1}e^{i\frac{\ell(q+1)\pi}{n}}\sin \frac{kp\pi}{n}ZX^{2n-p-q-1}Z\\
        &+\frac{i}{2n^2}\sum_{q=1}^{n-1}\sum_{p=1}^{q}e^{i\frac{\ell(q+1)\pi}{n}}\sin \frac{kp\pi}{n}YX^{-p+q-1}Y+\frac{i}{2n^2}\sum_{q=0}^{n-2}\sum_{p=q+1}^{n-1}e^{i\frac{\ell(q+1)\pi}{n}}\sin \frac{kp\pi}{n}ZX^{p-q-1}Z\\
        & -\frac{i}{2n^2}\sum_{q=1}^{n-1}\sum_{p=1}^{q}e^{-i\frac{\ell q\pi}{n}}\sin \frac{kp\pi}{n}ZX^{-p+q}Z-\frac{i}{2n^2}\sum_{q=0}^{n-2}\sum_{p=q+1}^{n-1}e^{-i\frac{\ell q\pi}{n}}\sin \frac{kp\pi}{n}YX^{p-q-2}Y\\
        &+\frac{i}{2n^2}\sum_{q=0}^{n-2}\sum_{p=1}^{n-q-1}e^{-i\frac{\ell q\pi}{n}}\sin \frac{kp\pi}{n}ZX^{p+q}Z+\frac{i}{2n^2}\sum_{q=1}^{n-1}\sum_{p=n-q}^{n-1}e^{-i\frac{\ell q\pi}{n}}\sin \frac{kp\pi}{n}YX^{2n-p-q-2}Y.
\end{align*}
Now we change the inner variable $p$ to $j$ to transform all $YX^{\cdots} Y$ terms to $YX^{j-1}Y$, and all $ZX^{\cdots} Z$ terms to $ZX^{j}Z$.
\begin{align*}
    [H_k,V_\ell]
        = & -\frac{i}{2n^2}\sum_{q=0}^{n-2}\sum_{j=q+1}^{n-1}e^{i\frac{\ell(q+1)\pi}{n}}\sin \frac{k(j-q)\pi}{n}YX^{j-1}Y \\ &-\frac{i}{2n^2}\sum_{q=1}^{n-1}\sum_{j=n-q}^{n-1}e^{i\frac{\ell(q+1)\pi}{n}}\sin \frac{k(2n-j-q-1)\pi}{n}ZX^{j}Z\\
        &+\frac{i}{2n^2}\sum_{q=1}^{n-1}\sum_{j=0}^{q-1}e^{i\frac{\ell(q+1)\pi}{n}}\sin \frac{k(q-j)\pi}{n}YX^{j-1}Y \\ & +\frac{i}{2n^2}\sum_{q=0}^{n-2}\sum_{j=0}^{n-q-2}e^{i\frac{\ell(q+1)\pi}{n}}\sin \frac{k(j+q+1)\pi}{n}ZX^{j}Z\\
        & -\frac{i}{2n^2}\sum_{q=1}^{n-1}\sum_{j=0}^{q-1}e^{-i\frac{\ell q\pi}{n}}\sin \frac{k(q-j)\pi}{n}ZX^{j}Z  \\ &-\frac{i}{2n^2}\sum_{q=0}^{n-2}\sum_{j=0}^{n-q-2}e^{-i\frac{\ell q\pi}{n}}\sin \frac{k(q+j+1)\pi}{n}YX^{j-1}Y\\
        &+\frac{i}{2n^2}\sum_{q=0}^{n-2}\sum_{j=q+1}^{n-1}e^{-i\frac{\ell q\pi}{n}}\sin \frac{k(j-q)\pi}{n}ZX^{j}Z \\ &+\frac{i}{2n^2}\sum_{q=1}^{n-1}\sum_{j=n-q}^{n-1}e^{-i\frac{\ell q\pi}{n}}\sin \frac{k(2n-j-q-1)\pi}{n}YX^{j-1}Y
\end{align*}
We add the following four terms which are equal to $0$ to the RHS of the above equation:
\begin{align*}
   & e^{i\frac{\ell(q+1)\pi}{n}}\sin \frac{k(q-q)\pi}{n}YX^{q-1}Y, &e^{i\frac{\ell(q+1)\pi}{n}}\sin \frac{k(-(n-q-1)-q-1)\pi}{n}ZX^{n-q-1}Z, \\ 
   & e^{-i\frac{\ell q\pi}{n}}\sin \frac{k(q-q)\pi}{n}ZX^qZ, &e^{-i\frac{\ell q \pi}{n}}\sin \frac{k((n-q-1)+q+1)}{n}YX^{n-q-2}Y.
\end{align*}
 Then we have
 \begin{align*}
     [H_k,V_\ell]
        =&- \frac{i}{2n^2}\sum_{j=0}^{n-1}\sum_{q=0}^{n-1}\left(e^{i\frac{\ell(q+1)\pi}{n}}\sin \frac{k(j-q)\pi}{n}+e^{-i\frac{\ell q\pi}{n}}\sin \frac{k(j+q+1)}{n}\right)YX^{j-1}Y \\&+ \frac{i}{2n^2}\sum_{j=0}^{n-1}\sum_{q=0}^{n-1}\left(e^{i\frac{\ell(q+1)\pi}{n}}\sin \frac{k(j+q+1)\pi}{n}+e^{-i\frac{\ell q\pi}{n}}\sin \frac{k(j-q)}{n}\right)ZX^{j}Z.
 \end{align*}
According to Eq. \eqref{eq:expsin}, we can obtain that
\begin{align*}
    &[H_k,V_\ell] \\= &-\frac{i}{2n^2} \sum_{j=0}^{n-1}\sum_{q=0}^{n-1}\Big( -ie^{i\frac{(k(2j+1)+\ell)\pi}{2n}} \cos \frac{(\ell-k)(2q+1)\pi}{2n}+ i e^{i\frac{(\ell-k(2j+1))\pi}{2n}} \cos \frac{(\ell+k)(2q+1)\pi}{2n}\Big) YX^{j-1}Y\\
        &+\frac{i}{2n^2} \sum_{j=0}^{n-1}\sum_{q=0}^{n-1} \Big( -ie^{i\frac{(k(2j+1)+\ell)\pi}{2n}} \cos \frac{(k+\ell)(2q+1)\pi}{2n}+ie^{i\frac{(\ell-k(2j+1))\pi}{2n}} \cos \frac{(\ell-k)(2q+1)\pi}{2n}\Big) ZX^{j}Z.
\end{align*}
By Lemma \ref{lem:sumodd}, we know that if $k\ne \ell$, then $\sum_{q=0}^{n-1} \cos \frac{(k\pm \ell) (2q+1)\pi }{2n} = 0$ and thus $[H_k,V_\ell] = 0$. If $k = \ell$, then $\sum_{q=0}^{n-1} \cos \frac{(k+ \ell) (2q+1)\pi }{2n} = 0$ and \[ [H_k,V_k] =  -\frac{1}{2n} \sum_{j=0}^{n-1} \lp e^{ik(j+1)\pi/n} YX^{j-1} Y + e^{-i kj\pi/n} ZX^j Z\rp = -2V_k. \]

\noindent{\bf (3) $[U_k,V_\ell]=0$ if $k\neq \ell$ and $[U_k,V_k]=H_k$.}\\
First, we prove that $[U_k,V_k]=H_k$. Plugging the definitions of $U_k$ and $V_k$ into the commutator $[U_k, V_k]$ and using Table \ref{table:Cn-generalized-orbit-commutator},
\begin{multline*}
 [U_k,V_k]
    =  \frac{1}{8n^2}\sum_{p=0}^{n-1}\sum_{q=0}^{n-1}\Big( \Big( e^{-i\frac{k(p-q)\pi}{n}}-e^{i\frac{k(p-q)\pi}{n}} \Big) YX^{p-q-1}Z\\+\Big( e^{-i\frac{k(p+q+1)\pi}{n}}-e^{i\frac{k(p+q+1)\pi}{n}}\Big) YX^{p+q}Z\Big).   
\end{multline*}
Then we can obtain that
\[[U_k,V_k]
    =  -\frac{i}{4n^2}\sum_{p=0}^{n-1}\sum_{q=0}^{n-1}\Big( \sin \frac{k(p-q)\pi}{n} YX^{p-q-1}Z+\sin \frac{k(p+q+1)\pi}{n} YX^{p+q}Z\Big).\]
According to the definition in Table \ref{table:Cn-def-generalized-orbit}, $YX^{k}Z\defeq -YX^{-k-2}Z$ if $k\le -1$ and $YX^{k}Z\defeq -YX^{2n-k-2}Z$ if $k\ge n-1$. we transform $YX^{p-q-1}Z \to -YX^{-p+q-1}Z$ if $p\le q\le n-1$ and $YX^{p+q}Z \to -Y^{2n-p-q-2}Z$ if $n-p-1\le q \le n-1$. Then we have
\begin{align*}
    &[U_k,V_k]\\
    = &-\frac{i}{4n^2}\Big(\sum_{p=1}^{n-1}\sum_{q=0}^{p-1} \sin \frac{k(p-q)\pi}{n} YX^{p-q-1}Z+\sum_{p=0}^{n-1}\sum_{q=p}^{n-1} \sin \frac{k(p-q)\pi}{n}(-YX^{-p+q-1}Z)\\
     &+\sum_{p=0}^{n-2}\sum_{q=0}^{n-p-2}\sin \frac{k(p+q+1)\pi}{n} YX^{p+q}Z+\sum_{p=0}^{n-1}\sum_{q=n-p-1}^{n-1}\sin \frac{k(p+q+1)\pi}{n}(-YX^{2n-p-q-2}Z)\Big).
\end{align*}
For the terms $YX^{p-q-1}Z$, $YX^{-p+q-1}Z$, $YX^{p+q}Z$ and $YX^{2n-p-q-2}Z$, we replace them by term $YX^{j-1}Z$. Therefore, we have
    \begin{align*}
     [U_k,V_k]
    = &-\frac{i}{4n^2}\Big(\sum_{p=1}^{n-1}\sum_{j=1}^{p} \sin \frac{kj\pi}{n} YX^{j-1}Z-\sum_{p=0}^{n-1}\sum_{j=0}^{n-p-1} \sin \frac{-kj\pi}{n} YX^{j-1}Z\\
    &+\sum_{p=0}^{n-2}\sum_{j=p+1}^{n-1}\sin \frac{kj\pi}{n} YX^{j-1}Z-\sum_{p=0}^{n-1}\sum_{j=n-p}^{n}\sin \frac{k(2n-j)\pi}{n} YX^{j-1}Z\Big)
    \end{align*}
    The above equation can be simplified as
    \begin{align*}
        [U_k,V_k]   = & -\frac{i}{4n^2}\Big(\sum_{j=1}^{n-1} n\sin\frac{kj\pi}{n}YX^{j-1}Z+\sum_{j=0}^{n} n\sin\frac{kj\pi}{n}YX^{j-1}Z\Big)\\
    = & -\frac{i}{2n} \sum_{j=1}^{n-1}\sin \frac{kj\pi}{n}YX^{j-1}Z
    = H_k.
    \end{align*}
    
Second, we prove that $[U_k,V_\ell]=0$ if $k\neq \ell$. For any $x,y\in \{U_j,V_j: 1\le j \le n-2\}$, $[x,y]$ can be represented as $[x,y]=\sum_{j=0}^{n-2}\alpha_j YX^jZ$ 
based on Table \ref{table:Cn-generalized-orbit-commutator} and definitions of $U_j,V_j$. Then for any $1\le k \le n-1$,
\begin{equation}\label{eq:HUV}
    [H_k,[x,y]]=\sum_{j=0}^{n-2}\alpha_j [H_k,YX^jZ]= -\frac{i}{2n}\sum_{j=0}^{n-2}\sum_{j'=1}^{n-1}\alpha_j\sin \frac{kj'\pi}{n} [YX^{j'-1}Z,YX^jZ]=0,
\end{equation}
since $[YX^{j'-1}Z,YX^jZ]=0$ (Table \ref{table:Cn-generalized-orbit-commutator}).  According to $[H_k,U_k]=2U_k$, $[V_\ell,H_k]=0$, Eq. \eqref{eq:HUV} and Jacobi identity, we have
\begin{align*}
    [U_k,V_\ell] &=-\frac{1}{2}[V_\ell,[H_k,U_k]]=\frac{1}{2}[H_k,[U_k,V_\ell]]+\frac{1}{2}[U_k,[V_\ell,H_k]]=\frac{1}{2}[H_k,[U_k,V_\ell]]=0.
\end{align*}

\noindent{\bf (4) $[U_k,U_\ell]=0$ and $[V_k,V_\ell]=0$.}\\
If $k=\ell$, then $[U_k,U_\ell]=[V_k,V_\ell]=0$. 
If $k\neq \ell$, based on $[H_\ell,U_\ell]=2U_\ell$, $[H_\ell,V_\ell]=-2V_\ell$, $[H_\ell, U_k]=[H_\ell, V_k]=0$, Eq. \eqref{eq:HUV} and Jacobi identity, we have 
\begin{align*}
    &[U_k,U_\ell]=\frac{1}{2}[U_k,[H_\ell,U_\ell]]=-\frac{1}{2}[H_\ell,[U_\ell,U_k]]-\frac{1}{2}[U_\ell, [U_k,H_\ell]]=\frac{1}{2}[H_\ell,[U_k,U_\ell]]=0,\\
    &[V_k,V_\ell]=-\frac{1}{2}[V_k,[H_\ell,V_\ell]]=\frac{1}{2}[H_\ell,[V_\ell,V_k]]+\frac{1}{2}[V_\ell, [V_k,H_\ell]]=-\frac{1}{2}[H_\ell,[V_k,V_\ell]]=0.
\end{align*}

\noindent{\bf (5) $[H_k,H_\ell]=0$.}\\
Based on $[YX^{j-1}Z, YX^{j'-1}Z]=0$ (Table \ref{table:Cn-generalized-orbit-commutator}), we have
\[[H_k,H_\ell]=-\frac{1}{4n^2} \sum_{j=1}^{n-1}\sum_{j'=1}^{n-1} \sin \frac{kj\pi}{n} \sin \frac{\ell j'\pi}{n} [YX^{j-1}Z,YX^{j'-1}Z] = 0.\]
\end{proof}


\section{Cycle Graphs - Other Proofs}

This sections includes proofs of Theorems~\ref{thm:Cn-basis}, \ref{thm:cycle-center}, \ref{thm:AB}, \ref{thm:Cn-ABn-by-ABk}; Facts~\ref{fact:adH-diag}-\ref{fact:dim-cartan-greater-three}, and Lemmas \ref{lem:direct-sum-complexification}, \ref{thm:Cn-generating-basis}. 
We also show how to derive Eq.~\eqref{eq:lambda_j}, that shows that $\lambda_j = 16\cos(j\pi/n)$, and Eq.~\eqref{eq:H-YXZ}, which expresses the $H_k$ operators in terms of the $YX^{j}Z$ operators. Throughout this section, we will use $\.g$ to denote $\.g_{C_n}$.

\CnBasis*

\begin{proof}
Define a Lie algebra $\.s \defeq \spn_{\mbR} \+B$. Since the elements in $\+B$  are all summations of distinct Pauli strings, they are orthogonal (with respect to the Hilbert-Schmidt inner product $\langle A,B\rangle = \tr(A^\dagger B)$), and thus also linearly independent. Therefore $\dim(\.s) = 3n-1$, and what remains is to show that $\.g = \.s$.

\paragraph{1. The inclusion $\.g \subseteq  \.s$.} Note that both generators $X$ and ${ZZ}$ are in $\+B$. Table \ref{tab:Cn-adA-adB-on-orbits} shows the actions $ad_{X}$ and $ad_{{ZZ}}$ on the elements in the basis $\+B$, and the actions can be verified by direct calculations. We can see from the table that the actions map each basis element in $\+B$ to within $\.s$. Since $\.g$ is generated by repeated applications of $ad_X$ and $ad_{ZZ}$ to the generators (as in Algorithm \ref{alg:DLA-generation}), $\langle \{X,ZZ\}\rangle_{\text{Lie},\mbR}\subseteq\.s$.

\paragraph{2. The inclusion $\.s \subseteq  \.g$.} We need to show that all the basis vectors in Eq. \eqref{eq:Cn-basis} can be generated from $X$ and ${ZZ}$. Using the notation in Table \ref{tab:Cn-adA-adB-on-orbits}, it suffices to show that (1) $Z X^t Z$, $Y X^t Y$ and $Y X^t Z$ can be generated for all $t= 0, 1, \ldots, n-2$, and (2) 
$ZX^{n-1}Z \defeq X^{n-1}$  can be generated. 

When $t=0$, ${YX^{t-1}Y}$ and ${ZX^tZ}$ are the generators $X$ and $ZZ$. The claim follows from the following relations, which hold for all $t=0,1,\ldots, n-2$:
\begin{align*}
    {YX^tZ} & = {\frac{1}{2}} [X, {ZX^tZ}], \\
    {YX^tY} & = {ZX^tZ} + \frac{1}{4}  [{X}, {YX^tZ}],  \\
    {ZX^{t+1}Z} & = {YX^{t-1}Y} + \frac{1}{4} [{ZZ}, {YX^tZ}].
\end{align*}  
\end{proof}

\CycleCenter*

\begin{proof}
    Any $H\in\.g$ can be expressed as a linear combination of the basis vectors, viz., 
    \[H = \alpha X + \sum_{t=0}^{n-2}(\beta_{1,t} {ZX^tZ} + \beta_{2,t} {YX^tY} + \beta_{3,t} {YX^tZ}) + \gamma {X^{n-1}},\quad \text{for }\alpha, \beta_{j,k},\gamma \in\mathbb{R}.\]
    It is easily verified by induction (and Jacobi identity) that $H\in \.c$ if and only if $[X,H] = [ZZ,H] = 0$. Using Table \ref{tab:Cn-adA-adB-on-orbits}:
    \begin{align*}
         [X,H] 
        = \ & \alpha [X,X] + \sum_{t=0}^{n-2} (\beta_{1,t} [X, {ZX^tZ}] + \beta_{2,t} [X,{YX^tY}] + \beta_{3,t} [X,{YX^tZ}]) + \gamma [X,{X^{n-1}}] \\
        =\ & {\sum_{t=0}^{n-2}2(\beta_{1,t}-\beta_{2,t}) {YX^tZ}+\sum_{t=0}^{n-2} 4\beta_{3,t}({YX^tY} - {ZX^tZ})}
    \end{align*}
    As the basis vectors are linearly independent, $[X,H] = 0$ requires that 
    \[\beta_{1,t}=\beta_{2,t}, \ \beta_{3,t}=0, \quad \forall t\in \{0, 1, \ldots, n-2\}.\]
    To simplify notation, let $\beta_k:=\beta_{1,k}=\beta_{2,k}$. Then $H = \alpha X + \sum_{t=0}^{n-2}(\beta_{t} {ZX^tZ} + \beta_{t} {YX^tY} ) + \gamma {X^{n-1}}$. Again, from Table \ref{tab:Cn-adA-adB-on-orbits}:
    \begin{align*}
        [{ZZ},H]  & =  \alpha [{ZZ},X] + \sum_{t=0}^{n-2} (\beta_{t} [{ZZ}, {ZX^tZ}] + \beta_{t} [{ZZ},{YX^tY}]) + \gamma [{ZZ},{X^{n-1}}]\\
          &{ =-2 \alpha {YZ} -\sum_{t=1}^{n-2} 2 \beta_{t} {YX^{t-1}Z} + \sum_{t=0}^{n-3} 2\beta_{t} {YX^{t+1}Z} - 2\gamma {YX^{n-2}Z}}\\
          &{= -2 (\alpha+\beta_1) {YZ} + \sum_{t=1}^{n-3}2(\beta_{t-1} -\beta_{t+1})YX^tZ +2(\beta_{n-3}-\gamma)YX^{n-2}Z}
    \end{align*}
    Forcing $[{ZZ},H] = 0$ gives that
    \begin{align*}
        & \alpha+\beta_{1}=0, \quad  \beta_{n-3}=\gamma, \text{ and } \beta_{t-1}=\beta_{t+1}, \quad   \forall t = 1, 2, \ldots, n-3.
    \end{align*}
    If $n$ is odd, then $-\alpha=\beta_1=\beta_3=\cdots =\beta_{n-4}=\beta_{n-2}$ and $\beta_0=\beta_2=\cdots =\beta_{n-5}=\beta_{n-3}=\gamma$; if $n$ is even, then $-\alpha=\beta_1=\beta_3=\cdots =\beta_{n-5}=\beta_{n-3}=\gamma$ and $\beta_0=\beta_2=\cdots =\beta_{n-4}=\beta_{n-2}$.
\end{proof}

\DirectSumComplexification*

\begin{proof}
    \begin{enumerate}
        \item Suppose that $\.g = \.g_1 \oplus \cdots \oplus \.g_n$. Take any basis $\mathcal B_j$ for vector space $\.g_j$, and combine them as a basis $\mathcal B = \mathcal B_1\cup \cdots \cup \mathcal B_n$ for vector space $\.g$. Then each element $A$ in $\.g$ is a block diagonal matrix $diag(A_1, \ldots, A_n)$ under this basis $\mathcal B$. By the definition of complexification, $\.g_{\mb{C}}$ contains the matrices $A+Bi$ where $A,B\in \.g$. But $A+Bi$ is exactly $diag(A_1 + B_1 i, \ldots, A_n + B_n i)$,  which form the complex Lie algebra $(\.g_1)_{\mb{C}} \oplus \cdots \oplus (\.g_n)_{\mb{C}}$. 
        \item Suppose that $\.g_{\mb{C}} = \.g_1' \oplus \cdots \oplus \.g_n'$ for some complex Lie algebras $\.g_1', \ldots, \.g_n'$. Again, take the basis $\mathcal B_j'$ for vector space $\.g_j'$, and combine them as a basis $\mathcal B' = \mathcal B_1'\cup \cdots \cup \mathcal B_n'$ for vector space $\.g_{\mb{C}}$. Now take any matrix $A\in \.g$, and write it in blocks as $A = [A_{jk}]_{j,k\in [n]}$, where $A_{jk}$ is the $(j,k)$-th block of $A$. Since the complexification $\.g_{\mb{C}}$ contains all matrices $A+Bi$ where $A,B\in \.g$, in particular, it contains $A+Ai$. For each off-diagonal block $(j,k)$ with $j\ne k$, $A+Ai$ inside this block is $A_{jk} + A_{jk}i$. Suppose $A_{jk} = B+Ci$ with $B,C$ both real matrices. Then $A_{jk} + A_{jk}i = (B-C) + (B+C)i$. Since all matrices in $\.g_{\mb{C}}$ has 0 off-diagonal blocks, we know that $B-C = B+C = 0$, which enforces $B = C = 0$. Therefore, any matrix $A\in \.g$ has all off-diagonal blocks being 0. Therefore $A = diag(A_1, \ldots, A_n)$, and thus $\.g = \.g_1\oplus \cdots \oplus \.g_n$ for some $\.g_1, \ldots, \.g_n$, with matrices in $\.g_j$ inside $\.g_j'$ as a space, and $\.g_j' = (\.g_j)_{\mb{C}}$.
    \end{enumerate}
\end{proof}

\CanonicalFromAB*

\begin{proof} From the $\.{sl}(2,\mb{C})$ commutation relations we have
\eq{
[H_k, A]&= \sum_{j=1}^{n-1}\left[ H_k, a_j^{(u)}U_j + a_j^{(v)}V_j + a_j^{(h)}H_j\right]=2\lp a_k^{(u)}U_k - a_k^{(v)}V_k\rp,\\ 
[H_k, B] &= 2\lp b_k^{(u)}U_k - b_k^{(v)}V_k\rp.
}
It follows from Eq.~\eqref{eq:uvh-alpha} that
\eql{
\tilde{U}_k^{(m_k, c_k)} &= -\frac{2i}{c} \lp \lp a_k^{(u)} + e^{-i\pi m_k} b_k^{(u)}\rp U_k - \lp a_k^{(v)} + e^{-i\pi m_k}b_k^{(v)}\rp V_k\rp, \label{eq:u_alpha_m_c}\\
\tilde{V}_k^{(m_k, c_k)} &= \frac{2i}{c} \lp\lp a_k^{(u)} + e^{i\pi m_k} b_k^{(u)}\rp U_k - \lp a_k^{(v)} + e^{i\pi m_k}b_k^{(v)}\rp V_k\rp,\label{eq:v_alpha_m_c}
}
and thus
\eq{
[H_k, \tilde{U}_k^{(m_k, c_k)}]&= -\frac{4i}{c} \lp \lp a_k^{(u)} + e^{-i\pi m_k} b_k^{(u)}\rp U_k + \lp a_k^{(v)} + e^{-i\pi m_k}b_k^{(v)}\rp V_k\rp, \\
[H_k, \tilde{V}_k^{(m_k, c_k)}]&= \frac{4i}{c} \lp\lp a_k^{(u)} + e^{i\pi m_k} b_k^{(u)}\rp U_k + \lp a_k^{(v)} + e^{i\pi m_k}b_k^{(v)}\rp V_k\rp.
}
The requirements that $[H_k, \tilde{V}_k^{(m_k, c_k)}]=-2\tilde{V}_k^{(m_k, c_k)}$ and $[H_k, \tilde{U}_k^{(m_k, c_k)}]=2\tilde{U}_k^{(m_k, c_k)}$ give Eq.~\eqref{eq:C1} and Eq.~\eqref{eq:C2}, respectively. However, if Eq.~\eqref{eq:C1} and Eq.~\eqref{eq:C2} are satisfied, then Eq.~\eqref{eq:C3} and Eq.~\eqref{eq:C4} must also be true, otherwise one or both of $\tilde{U}_k, \tilde{V}_k$ will be zero.  Next, we consider
\eq{
[&\tilde{U}_k^{(m_k, c_k)}, \tilde{V}_k^{(m_k, c_k)}]\\
=& \frac{4}{c^2} \lp -\lp a_k^{(u)}+e^{-i\pi m_k}b_k^{(u)}\rp
 \lp a_k^{(v)} + e^{i\pi m_k}b_k^{(v)}\rp
 +
\lp a_k^{(u)} + e^{i\pi m_k}b_k^{(u)}\rp 
\lp a_k^{(v)} + e^{-i\pi m_k}b_k^{(v)}\rp\rp H_k \\
=&\frac{4}{c^2}\lp a_k^{(v)}b_k^{(u)} - a_k^{(u)} b_k^{(v)} \rp
 \lp e^{i\pi m_k} - e^{-i\pi m_k}\rp H_k,
}
and Eq.~\eqref{eq:C5} follows from the requirement that $[\tilde{U}_k^{(m_k, c_k)}, \tilde{V}_k^{(m_k, c_k)}]=H_k$. 

In the other direction, if Eq.~$\eqref{eq:C1}$ to Eq.~$\eqref{eq:C5}$ hold then the $\.{sl}(2,\mb{C})$ commutation relations Eq.~\eqref{eq:sl2C-canonical-basis} are satisfied by $\tilde{U}_k^{(m,c)}, \tilde{V}_k^{(m,c)},H_k$, for all $k\in[n-1]$, and we see from Eq.~\eqref{eq:u_alpha_m_c} and Eq.~\eqref{eq:v_alpha_m_c} that $\tilde{U}_k^{(m_k,c_k)}$ and $\tilde{V}_k^{(m_k,c_k)}$ are proportional to $U_k$ and $V_k$, respectively, and non-zero. 
\end{proof}

\CnGenBasis*

\begin{proof}
   The proof is by induction on $k$. For $k=1$, we have $AB = [A,B] = 2YZ$. For $k=2, \ldots, n$, assuming that $(AB)^{k-1} = \sum_{j=0}^{k-2} c_{k-1,j} YX^j Z$, we can  directly calculate (using  the rules in Table \ref{tab:Cn-adA-adB-on-orbits}) that 
   \begin{align*}
     (AB)^k = AB(AB)^{k-1} =& \sum_{j=0}^{k-2} 4c_{k-1,j} [X,ZX^{j+1} Z - YX^{j-1} Y] \\= & \sum_{j=0}^{k-2} 8c_{k-1,j} (YX^{j+1} Z + YX^{j-1} Z).  
   \end{align*}
    Noting that $c_{0,-1} = c_{k-1,k-1} = c_{k-1,k} = 0$ and $YX^{-1}Z = 0$, we obtain \[(AB)^k = 8\sum_{j=0}^{k-1} c_{k-1,j-1} YX^{j} Z + 8\sum_{j=0}^{k-1} c_{k-1,j+1} YX^{j} Z = \sum_{j=0}^{k-1} 8(c_{k-1,j-1} + c_{k-1,j+1}) YX^{j} Z,\] 
    as claimed.
\end{proof}

The recursive relation for the coefficients $c_{k,j}$ in Lemma~\ref{thm:Cn-generating-basis} be solved to give an explicit formula, as shown in the next lemma. In the proof, we will use the tridiagonal Toeplitz matrix $T_{n}(a,b,c)$, whose subdiagonal elements are all $a$, diagonal elements are all $b$, superdiagonal elements are all $c$, and all other elements are $0$.
\begin{lem}\label{thm:Cn-Cartan-coeffi}
    For any $(k,j)$ with $1\le k \le n$ and $0\le j \le k-1$, we have $(AB)^k  = \sum_{j=0}^{k-1} c_{k,j} YX^j Z$, with
    \[c_{k,j} = 2\cdot 8^{k-1} (T^{k-1})_{1,j+1} = \frac{2^{4k-2}}{n} \sum_{j'=1}^{n-1} \sin\frac{(j+1)j'\pi}{n} \sin \frac{j'\pi}{n} \cos^{k-1} \frac{j'\pi}{n},\]
    where $T$ is the tridiagonal Toeplitz matrix $T = T_{n-1}(1,0,1)$.
\end{lem}
\begin{proof}
    Define an $(n-1)\times (n-1)$ matrix $C'$ with 
    the $(k,j)$-th entry being $c_{k,j-1}$. Let $T$ be the tridiagonal Toeplitz matrix $T = T_{n-1}(1,0,1)$, whose subdiagonal and superdiagonal elements are $1$, and all other elements are 0. 
    The recursion in Eq. \eqref{eq:ABk-recursion} can be then rephrased as $\bra{k} C' = 8 \bra{k-1} C' T$; indeed the $j$-th entry of the RHS is 
    \[8 \sum_{j'=1}^{n-1} C'_{k-1,j'} T_{j',j} = 8 \sum_{j'=1}^{n-1} c_{k-1,j'-1} T_{j',j} = 8 (c_{k-1,j-2} + c_{k-1, j}) = c_{k,j-1} = C'_{k,j},\] 
    the $j$-th entry of the LHS. Repeatedly applying this yields 
    \[c_{k,j} = \bra{k} C' \ket{j+1} = 8^{k-1} \bra{1} C' T^{k-1} \ket{j+1} = 8^{k-1} \sum_{t=1}^{n-1} C'_{1,t} (T^{k-1})_{t,j} = 2\cdot 8^{k-1} (T^{k-1})_{1,j+1},\] 
    where in the last equality we used the observation that the first row of $C$ is $(2, 0, \ldots, 0)$. 

    The power of $T$ can be computed by its spectral decomposition, which is well known to be $T = \sum_{j=1}^{n-1} \mu_j \ket{v_j}\bra{v_j}$ with eigenvalues $\mu_j = 2\cos \frac{j\pi}{n}$ and (normalized) eigenvectors $\ket{v_j} = \sqrt{\frac{2}{n}} \sum_{j'} \sin \frac{jj'\pi}{n}\ket{j'}$. The power of $T$ is therefore $T^{k-1} = \sum_{j=1}^{n-1} 2^{k-1} \cos^{k-1} \frac{j\pi}{n} \ket{v_j}\bra{v_j}$, and 
    \begin{align*}
        c_{k,j} = 2\cdot 8^{k-1}  (T^{k-1})_{1,j+1}& = 2\cdot 16^{k-1} \sum_{j'=1}^{n-1} (\cos^{k-1} \frac{j'\pi}{n}) \frac{2}{n} \sin\frac{(j+1)j'\pi}{n} \sin\frac{j'\pi}{n}  \\ &= \frac{2^{4k-2}}{n} \sum_{j'=1}^{n-1} \sin\frac{(j+1)j'\pi}{n} \sin \frac{j'\pi}{n} \cos^{k-1} \frac{j'\pi}{n}.
    \end{align*}
\end{proof}

To prove Theorem~\ref{thm:Cn-ABn-by-ABk} we will make use of Vieta's formulas and the Cayley-Hamilton theorem:

\begin{lem}[Vieta's formulas, \cite{coolidge2004treatise}]\label{lem:vieta} Let $P(x)=\sum_{j=0}^n a_j x^j$ be a degree-$n$ polynomial with coefficients $a_j\in\mb{C}$, and let $z_1, \ldots, z_n\in\mb{C}$ be the roots of $P$. Then, for $1\le k \le n$,
\eq{
(-1)^{k}a_{n-k}/a_n&=\sigma_k(z_1, \ldots, z_n),
}
where $\sigma_k(z_1, \ldots, z_n)\defeq \sum_{1\le j_1<j_2<\cdots < j_k\le n}z_{j_1}z_{j_2}\cdots z_{j_k}$ is the elementary symmetric polynomial of degree $k$.
\end{lem}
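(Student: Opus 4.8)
The plan is to prove Vieta's formulas by factoring $P$ completely over $\mb{C}$ and then expanding the resulting product. Since $P$ has degree $n$ with leading coefficient $a_n\neq 0$, and $z_1,\ldots,z_n$ are its roots listed with multiplicity, the fundamental theorem of algebra gives the factorization $P(x)=a_n\prod_{i=1}^n(x-z_i)$. The entire content of the lemma then reduces to the elementary polynomial identity
\[
\prod_{i=1}^n(x-z_i)=\sum_{k=0}^n(-1)^k\,\sigma_k(z_1,\ldots,z_n)\,x^{n-k},
\]
after which one compares the coefficient of $x^{n-k}$ on the two sides of $P(x)=a_n\sum_{k=0}^n(-1)^k\sigma_k(z_1,\ldots,z_n)\,x^{n-k}$ to get $a_{n-k}=(-1)^k a_n\,\sigma_k(z_1,\ldots,z_n)$, i.e. $(-1)^k a_{n-k}/a_n=\sigma_k(z_1,\ldots,z_n)$ for $1\le k\le n$.

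To establish the expansion identity I would argue combinatorially. Expanding $\prod_{i=1}^n(x-z_i)$ by the distributive law, each term of the expansion corresponds to selecting, for every factor $i\in[n]$, either the summand $x$ or the summand $-z_i$. Selecting $-z_i$ precisely on a size-$k$ subset $\{j_1<\cdots<j_k\}\subseteq[n]$ contributes $(-1)^k z_{j_1}\cdots z_{j_k}\,x^{n-k}$; summing over all such subsets groups these contributions into $(-1)^k\sigma_k(z_1,\ldots,z_n)\,x^{n-k}$, and summing over $k=0,\ldots,n$ (with $\sigma_0=1$) gives the claimed identity. Alternatively, the identity follows by a routine induction on $n$, using the recursion $\sigma_k(z_1,\ldots,z_n)=\sigma_k(z_1,\ldots,z_{n-1})+z_n\,\sigma_{k-1}(z_1,\ldots,z_{n-1})$ in the inductive step.

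I do not expect any genuine obstacle: this is a classical fact, and both the factorization and the expansion are standard. The only point deserving a word of care is that the roots $z_1,\ldots,z_n$ must be counted with multiplicity, so that $a_n\prod_{i=1}^n(x-z_i)$ is literally equal to $P$ as a polynomial rather than merely sharing its zero set; this is exactly what the fundamental theorem of algebra supplies for a degree-$n$ polynomial over $\mb{C}$, and it is implicit in the hypothesis that $z_1,\ldots,z_n$ are "the roots of $P$".
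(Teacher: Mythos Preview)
Your proof is correct and is the standard argument for Vieta's formulas. The paper, however, does not prove this lemma at all: it is stated as a classical fact with a citation to \cite{coolidge2004treatise} and used as a black box, so there is no ``paper's own proof'' to compare against. Your argument via the factorization $P(x)=a_n\prod_i(x-z_i)$ and the combinatorial expansion of the product is exactly the textbook proof one would supply if asked to fill in the details.
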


\begin{lem}[Cayley-Hamilton theorem, \cite{mirsky2012introduction}]\label{lem:cayley-ham} Let $A\in\mb{C}^{n\times n}$ be a square matrix, and $p_A(\lambda) = \det(\lambda I-A)$ its characteristic polynomial. Then $p_A(A) = 0$.
\end{lem}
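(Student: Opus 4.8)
The plan is to give the standard adjugate-matrix proof, which is purely algebraic and needs nothing beyond the defining property of the classical adjoint. Write $B(\lambda) \defeq \mathrm{adj}(\lambda I - A)$ for the adjugate of $\lambda I - A$. The identity $M\,\mathrm{adj}(M) = \det(M)\,I$, specialized to $M = \lambda I - A$, gives
\[
(\lambda I - A)\,B(\lambda) = p_A(\lambda)\,I ,
\]
an equality of matrices with polynomial entries. Since every entry of $B(\lambda)$ is a signed $(n-1)\times(n-1)$ minor of $\lambda I - A$, it has degree at most $n-1$ in $\lambda$, so I can expand $B(\lambda) = \sum_{k=0}^{n-1}\lambda^k B_k$ with constant matrices $B_k \in \mb{C}^{n\times n}$, and write $p_A(\lambda) = \sum_{k=0}^{n}c_k\lambda^k$ with $c_n = 1$.

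First I would insert these expansions into the displayed identity and equate coefficients of each power of $\lambda$ --- legitimate because the scalar indeterminate $\lambda$ commutes with everything --- obtaining the relations $B_{k-1} - A B_k = c_k I$ for $0 \le k \le n$, under the conventions $B_{-1} = B_n = 0$. The decisive step is then to left-multiply the relation indexed by $k$ by $A^k$ and sum over $k$: the left-hand side telescopes to $0$, while the right-hand side collapses to $\sum_{k=0}^n c_k A^k = p_A(A)$, yielding $p_A(A) = 0$.

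The one place that genuinely requires care --- and the classic trap --- is that one must \emph{not} try to conclude by formally substituting $\lambda = A$ into $(\lambda I - A)B(\lambda) = p_A(\lambda)I$; this is invalid since $A$ need not commute with the coefficient matrices $B_k$. The coefficient-matching together with left-multiplication by the matched powers of $A$ is precisely the device that circumvents this, and it is the only subtlety; the rest is bookkeeping. A slicker but less self-contained alternative would be to verify the claim first for diagonalizable $A$ --- immediate, as $p_A$ kills every eigenvalue and hence every eigenvector --- and then extend by density, using that $A \mapsto p_A(A)$ is a polynomial (hence continuous) map $\mb{C}^{n\times n}\to\mb{C}^{n\times n}$ and that the non-diagonalizable matrices lie in the zero set of the discriminant of the characteristic polynomial, a proper Zariski-closed set. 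I would present the adjugate argument as the main proof.
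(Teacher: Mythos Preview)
Your adjugate proof is correct and is one of the standard textbook arguments for Cayley--Hamilton; the coefficient-matching followed by left-multiplication by $A^k$ and telescoping is exactly the right way to avoid the illegitimate ``substitute $\lambda=A$'' move, and you have flagged that pitfall clearly.

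However, note that the paper does not actually prove this lemma: it is stated as a preliminary with a citation to a linear algebra textbook and no proof is given. So there is no ``paper's own proof'' to compare against. Your argument would serve perfectly well as a self-contained justification if one were desired, and the alternative density-of-diagonalizable-matrices approach you sketch is also valid, though it relies on topology that is extraneous to the rest of the paper.
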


\CnABnbyAbk*

\begin{proof}
    For any fixed $k$ consider the vector $c_k \defeq  (c_{k,0}, \ldots, c_{k,n-2})^T$, where $c_{k,j}$ are the coefficients in Lemma~\ref{thm:Cn-Cartan-coeffi}. Note that $T = T_{n-1}(1,0,1)$ is symmetric, therefore the Proposition~\ref{thm:Cn-Cartan-coeffi} shows that $c_k = 2\cdot 8^{k-1} T^{k-1}\ket{e_1}$. Let $p_T(\lambda) = \lambda^{n-1} - \sum_{k=0}^{n-2} d_k \lambda^k$ be the characteristic polynomial of $T$. By Vieta's formulas (Lemma \ref{lem:vieta}), we know that 
    \begin{equation}\label{eq:Cn-Toeplitz-recur-coeffi}
        d_k = (-1)^{n-k} \sigma_{n-k-1} \left(2\cos \frac{\pi}{n}, \ldots, 2\cos \frac{(n-1)\pi}{n}\right),    
    \end{equation} 
    where we used the fact that the roots of the characteristic polynomial of a diagonalizable square matrix $A$ are just the eigenvalues of $A$. By Lemma \ref{lem:cayley-ham} we have 
    \begin{equation}\label{eq:Toeplitz-recursion}
        T^{n-1} = \sum_{k=0}^{n-2} d_k T^k. 
    \end{equation}
    Therefore, Lemmas \ref{thm:Cn-generating-basis} and \ref{thm:Cn-Cartan-coeffi} combined gives $(AB)^n = \sum_{j=0}^{n-1} 2\cdot 8^{n-1} \left(T^{n-1}\right)_{1,j+1} YX^j Z$. Plugging Eq.~\eqref{eq:Toeplitz-recursion} in and changing the order of summation, we see that 
    \[(AB)^n = \sum_{k=1}^{n-1} 8^{n-k} d_{k-1} \sum_{j=0}^{n-1}  2\cdot 8^{k-1} \left(T^{k-1}\right)_{1,j+1} YX^j Z.\]
    Now apply Lemmas \ref{thm:Cn-generating-basis} and \ref{thm:Cn-Cartan-coeffi} again to change it back to $(AB)^k$ form: \[(AB)^n = \sum_{k=1}^{n-1} 8^{n-k} d_{k-1} (AB)^k.\] Finally, use Eq.~\eqref{eq:Cn-Toeplitz-recur-coeffi} to replace $d_k$, and absorb the factor $8^{n-k}$ into the function $\sigma_{n-k}$, we get the claimed formula for $(AB)^n$.
\end{proof}

\ABisCartan*

\begin{proof} From Lemma \ref{thm:Cn-generating-basis} and Table \ref{table:Cn-generalized-orbit-commutator} it follows that $(AB)^j$ and $(AB)^k$ commute. Thus $\.h$ is Abelian. From Lemma \ref{thm:Cn-generating-basis}, each $(AB)^k$ contains an orbit $YX^{k-1}Z$ that $(AB)^1, \ldots, (AB)^{k-1}$ do not have. Thus, $(AB)^1, \ldots, (AB)^{k-1}$ are linearly independent, and $\dim(\.h) = n-1$. Recall that $[\.g,\.g]_\mb{C}$ is isomorphic to the direct sum of the $n-1$ copies of $\.{sl}(2,\mbC)$, which implies that any Cartan subalgebra is of dimension $n-1$. Putting these together, we see that $\.h$ is a maximal Abelian subalgebra. By Fact~\ref{fact:adH-diag}, $ad_H$ is diagonalizable, and $\.h$ is thus a Cartan subalgebra. 
\end{proof}

\GenNoOverlap*

\AsquaredAB*

\begin{proof}
    By rules in Table \ref{tab:Cn-adA-adB-on-orbits} and Lemma \ref{thm:Cn-generating-basis},
    \[
    AA(AB)^k = A \sum_{j=0}^{k-1} 4c_{k,j} (YX^j Y - ZX^j Z) = \sum_{j=0}^{k-1} 4c_{k,j} (-2YX^j Z - 2YX^j Z) = -16(AB)^k
    \]
    and 
    \[
    BB(AB)^k=B \sum_{j=0}^{k-1} 4c_{k,j} (ZX^{j+1} Z-YX^{j-1} Y) = \sum_{j=0}^{k-1} 4c_{k,j} (-2YX^j Z-2YX^j Z) = -16(AB)^k.
    \]
\end{proof}

We now show how to derive Eq.~\eqref{eq:lambda_j}, stated below as Proposition~\ref{prop:lambda_k}.

\begin{prop}\label{prop:lambda_k}For all $k\in [n-1], \lambda_k = 16\cos(k\pi/n)$.
\end{prop}

\begin{proof}
Recall from Eq.~\eqref{eq:el-sym-poly} that  $(AB)^n = \sum_{k=1}^{n-1} a_k (AB)^k$, where \[a_k = (-1)^{n-k+1}\sigma_{n-k}(16\cos(\pi/n), \ldots,  16\cos((n-1)\pi/n),\] and $\sigma_k$ is the $k^{\rm th}$ elementary symmetric polynomial in $n-1$ variables. Eq.~\eqref{eq:ABk-nulambdaH} says that $(AB)^k = \sum_{j=1}^{n-1} \nu_j\lambda_j^{k-1}H_j$, for all $k\in [n]$.  Combining these two expressions gives
\eq{
\sum_{j=1}^{n-1}\nu_j \lambda_j^{n-1}H_j &= \sum_{k=1}^{n-1}a_k \sum_{j=1}^{n-1} \nu_j \lambda_j^{k-1}H_j 
\Ra \sum_{j=1}^{n-1}\lp \lambda_j^{n-1}-\sum_{k=1}^{n-1}a_k \lambda_j^{k-1}\rp\nu_j H_j=0
}
Since $(AB), (AB)^2\ldots, (AB)^{n-1}$ are linearly independent, from Eq.~\ref{eq:AB-vandermonde} we see that the $\lambda_j$ must be distinct and the $\nu_j$ must be non-zero.  Thus, for each $j\in [n-1]$, $\lambda_j^{n-1} - \sum_{k=1}^{n-1}a_k\lambda_j^{k-1}=0$, i.e., the $\lambda_j$ are the $n-1$ distinct roots of the polynomial equation
\eq{
x^n + \sum_{k=1}^{n-1} (-1)^{n-k}\sigma_{n-k}\lp 16\cos (\pi/n), \ldots, 16\cos((n-1)\pi/n)\rp x^k.
}
By Vieta's formulas (Lemma~\ref{lem:vieta}), the roots of this equation are \[16\cos(\pi/n), 16\cos(2\pi/n), \ldots, 16\cos((n-1)\pi/n)\] and we are free to choose the ordering of the roots so that $\lambda_j = 16\cos(j\pi/n)$.
\end{proof}

Finally, we show how to derive Eq.~\eqref{eq:H-YXZ}, stated below as Proposition~\ref{prop:H-YXZ}. We shall need the following lemma regarding inverses of Vandermonde matrices:

\begin{lem}[Inverse of Vandermonde matrix, \cite{rawashdeh2019simple}]\label{eq:inverse-vandermonde}
    Let $\lambda_1,\lambda_2,\cdots, \lambda_{n}$ be distinct real numbers and $\Lambda=[\Lambda_{km}]\in \mbR^{n\times n}$  with $\Lambda_{km}=\lambda_m^{k-1}$ be the Vandermonde matrix. Then the inverse of $\Lambda$ is the matrix whose elements are
    \[(\Lambda^{-1})_{km}=(-1)^{n-m} \frac{\sigma_{n-m}(\lambda_{[n]-k})}
    {\prod_{1 \leq j \leq n, j\neq k} (\lambda_k - \lambda_j)},\]
    where $ \lambda_{[n]-k}= \lambda_{1}, \ldots,
    \lambda_{k-1}, \lambda_{k+1}, \ldots , \lambda_{n}$, $\sigma_0(\lambda_{[n]-k})\defeq 1$ and $\sigma_j(\lambda_{[n]-k})$ is the elementary symmetric polynomial of degree $j$ for $j\ge 1$.
\end{lem}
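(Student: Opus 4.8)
The plan is to recognize the claimed matrix as the coordinate representation, in the monomial basis $\{1, x, \dots, x^{n-1}\}$, of the Lagrange interpolation operator at the nodes $\lambda_1, \dots, \lambda_n$, and then to extract its entries via Vieta's formulas (Lemma~\ref{lem:vieta}).

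First I would fix the dictionary between $\Lambda$ and polynomial evaluation: the $m$-th column of $\Lambda$ is the power vector $(1, \lambda_m, \lambda_m^2, \dots, \lambda_m^{n-1})^{\mathsf{T}}$, so the rows of $\Lambda^{\mathsf{T}}$ are exactly these power vectors and, for any polynomial $p(x) = \sum_{k=1}^n a_k x^{k-1}$ with coefficient vector $a = (a_1, \dots, a_n)^{\mathsf{T}}$, one has $(\Lambda^{\mathsf{T}} a)_m = p(\lambda_m)$. Since the $\lambda_m$ are distinct, $\Lambda^{\mathsf{T}}$ (hence $\Lambda$) is invertible, and $(\Lambda^{\mathsf{T}})^{-1}$ maps the vector of values $\big(p(\lambda_1), \dots, p(\lambda_n)\big)^{\mathsf{T}}$ back to $a$. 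The Lagrange interpolation formula gives $p(x) = \sum_{k=1}^n p(\lambda_k)\, L_k(x)$ with $L_k(x) = \prod_{j \neq k} \frac{x - \lambda_j}{\lambda_k - \lambda_j}$; writing $L_k(x) = \sum_{m=1}^n c_{mk}\, x^{m-1}$, this says $\big((\Lambda^{\mathsf{T}})^{-1}\big)_{mk} = c_{mk}$, and therefore $(\Lambda^{-1})_{km} = \big((\Lambda^{\mathsf{T}})^{-1}\big)_{mk} = c_{mk}$ is the coefficient of $x^{m-1}$ in $L_k(x)$.

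It then remains to expand $L_k$ explicitly. Writing $L_k(x) = \big(\prod_{j \neq k} (\lambda_k - \lambda_j)\big)^{-1} \prod_{j \neq k}(x - \lambda_j)$, the numerator is a monic degree-$(n-1)$ polynomial whose roots are $\{\lambda_j : j \neq k\}$; by Vieta's formulas (Lemma~\ref{lem:vieta}) its coefficient of $x^{n-1-d}$ equals $(-1)^d \sigma_d(\lambda_{[n]-k})$. Substituting $d = n - m$ yields $c_{mk} = (-1)^{n-m} \sigma_{n-m}(\lambda_{[n]-k}) / \prod_{j \neq k}(\lambda_k - \lambda_j)$, which is exactly the claimed value of $(\Lambda^{-1})_{km}$. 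As a sanity check, and as an alternative route that sidesteps the transpose bookkeeping, one may instead take the matrix $M$ with these entries as an ansatz and verify $M \Lambda = I$ directly: the $(k, \ell)$ entry is $\sum_{m=1}^n M_{km} \lambda_\ell^{m-1} = L_k(\lambda_\ell) = \delta_{k\ell}$, using the interpolation property $L_k(\lambda_\ell) = \delta_{k\ell}$, after which invertibility of $\Lambda$ (its determinant $\prod_{i < j}(\lambda_j - \lambda_i)$ is nonzero) promotes this one-sided inverse to the two-sided inverse.

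I do not anticipate a genuine mathematical obstacle---this is a classical identity---so the only real care is combinatorial bookkeeping: keeping straight which index labels rows and which labels columns, the shift by one between the \emph{degree} of a monomial and its \emph{position} $m$ in the coefficient vector, and the single transpose relating $\Lambda$ to the evaluation matrix, so that the sign $(-1)^{n-m}$ and the omitted index in $\lambda_{[n]-k}$ attach to the correct variable.
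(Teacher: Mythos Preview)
Your proof is correct and complete. The paper does not supply its own proof of this lemma; it is stated as a cited preliminary result (from \cite{rawashdeh2019simple}) without argument, so there is no in-paper proof to compare against.
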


\begin{prop}\label{prop:H-YXZ} For all $k\in[n-1]$, $H_k = -\frac{i}{2n}\sum_{j=1}^{n-1}\sin\frac{jk\pi}{n}YX^{j-1}Z.$  
\end{prop}

\begin{proof} 

Define $C\in \mbR^{(n-1)\times (n-1)}$ by $C = [c_{k,j}]$ where $c_{k,j}$ is as in Lemma \ref{thm:Cn-generating-basis}. Define the Vandermonde matrix $\Lambda \in \mbR^{(n-1)\times (n-1)}$ by $\Lambda_{kj} = \lambda_j^{k-1}$. Let $N = diag(\nu_1, \ldots, \nu_{n-1})$. 
    Denote the formal vectors 
    \begin{align*}
        &\overrightarrow{AB} = (AB, (AB)^2, \ldots, (AB)^{n-1})^T, \\
        &\overrightarrow{YZ} = (YZ, YXZ, \ldots, YX^{n-2}Z)^T,\\
        &\overrightarrow{H} = (H_1, H_2, \ldots, H_{n-1})^T.
    \end{align*}
Lemma~\ref{thm:Cn-generating-basis} and Eq.~\eqref{eq:AB-vandermonde} imply that
\begin{equation}\label{eq:Cn-AB-YZ-H}
        \overrightarrow{AB} = C\cdot  \overrightarrow{YZ} \quad \text{and} \quad \overrightarrow{AB} = \Lambda \cdot N \cdot \overrightarrow{H},
    \end{equation}
where $\lambda_j = 16\cos(j\pi/n)$ and $\nu_j = 8i\sin(j\pi/n)$.
Combining the two equations in Eq.~\eqref{eq:Cn-AB-YZ-H} gives 
    \begin{equation}\label{eq:relation-AB-YZ}
        N\cdot \overrightarrow{H} = \Lambda^{-1}\cdot C\cdot \overrightarrow{YZ}.
    \end{equation}
    To evaluate $(\Lambda^{-1})_{jm}$ we define the degree $n-2$ polynomials
    \begin{equation}\label{eq:poly-pk}
      p_k(x) = \prod_{1 \leq j \leq n-1, j\neq k} (x - \lambda_j),  
    \end{equation}
    for $1 \leq i \leq n-1.$ Then Lemma \ref{eq:inverse-vandermonde} implies that
    \begin{equation}\label{eq:van-inv}
    (\Lambda^{-1})_{km} = (-1)^{n-1-m} \frac{\sigma_{n-1-m}(\lambda_{[n-1]-k})}
    {p_k(\lambda_{k})}.
    \end{equation}
    Putting this into Eq.~\eqref{eq:relation-AB-YZ}, we have
    \[\nu_k H_k  = \sum_{m=1}^{n-1} \sum_{j=1}^{n-1} (\Lambda^{-1})_{km}c_{m,j-1}YX^{j-1}Z.\]
    Substituting the $c$ values (Eq.~\eqref{eq:ckj-solved}), the $\Lambda^{-1}$ values (Eq. \eqref{eq:van-inv}), and $\lambda_{j} = 16\cos (j\pi/n)$, we have 
    \begin{align*}
        \nu_k H_k=&\frac{4}{n \cdot p_k(\lambda_{k})}  \sum_{m=1}^{n-1}\sum_{j=1}^{n-1} (-1)^{n-1-m} \sigma_{n-1-m} (\lambda_{[n-1]-k}) 
    \lp\sum_{\ell=1}^{n-1} \sin \frac{\ell j\pi}{n}  \sin \frac{\ell\pi}{n} \lambda_\ell^{m-1}\rp YX^{j-1}Z,\\
     =& \frac{4}{n p_k(\lambda_{k})}  \sum_{j=1}^{n-1}\sum_{\ell=1}^{n-1} \sin \frac{\ell j\pi}{n}  \sin \frac{\ell \pi}{n} 
\sum_{m=1}^{n-1} (-1)^{n-1-m} \sigma_{n-1-m} (\lambda_{[n-1]-k})  \lambda_\ell^{m-1} YX^{j-1}Z.
    \end{align*}
    The coefficient of $x^{n-2}$ in the degree $(n-2)$ polynomial $p_k(x)$  (Eq. \eqref{eq:poly-pk}) is $1$. For any $1\le \ell\le n-1$ and $\ell\neq k$, $\lambda_\ell$ is the root of $p_k(x)$. By Vieta's formulas (Lemma \ref{lem:vieta}), $p_k(x)$ can be represented as
    \[p_k(x)=\sum_{m=1}^{n-1} (-1)^{n-1-m} \sigma_{n-1-m} (\lambda_{[n-1]-k})  x^{m-1},\]  
    from which $\nu_kH_k$ can be simplified to
    \[ \nu_k H_k=\frac{4}{n} \sum_{j=1}^{n-1} \sum_{\ell=1}^{n-1} \frac{\sin \frac{\ell j\pi}{n}  \sin \frac{\ell \pi}{n}  p_k(\lambda_\ell )} {p_k(\lambda_{k})}YX^{j-1}Z.\]
    Since $p_k(\lambda_\ell)=0$ for all $k\neq \ell$, we have
    \[\nu_k H_k= \frac{4}{n} \sum_{j=1}^{n-1} \frac{\sin \frac{kj\pi}{n}  \sin \frac{k\pi}{n}  p_k(\lambda_k)} {p_k(\lambda_{k})} YX^{j-1}Z=\frac{4\sin(k\pi/n)}{n} \sum_{j=1}^{n-1} \sin \frac{kj\pi}{n}YX^{j-1}Z,\]  
and substituting $\nu_j = 8i\sin(j\pi/n)$ completes the proof.
\end{proof}


\section{Cycle Graphs - Purity Calculations}
Here we prove Theorem~\ref{thm:Cn-exp-var}, a proof of which was outlined in Section~\ref{sec:purity} of the main text.  We first need some preliminary results.  Recall from the main text that, for any Lie subalgebra $\.s\subseteq\.g_{C_n}$, $\mcP_{\.s}(\rho)= -\frac{i}{2^n}\mcP_{\.s}(X + X^{(n-1)})$.

\begin{lem}\label{lem:Cn-purity}
    For any Lie subalgebra $\.s\subseteq \.g_{C_n}$, take an orthogonal basis $B_1,\ldots ,B_s$ of $\.s$. Suppose $B_j = \sum_H c_{j,H} H$ is the linear combination of $B_j$ in orbits $H$ of $n$ distinct Pauli strings, and let $c_j$ be the coefficient vector $(c_{j,H})_H$. Then 
    \[\mcP_{\.s}(\rho) = \frac{n}{2^n} \sum_{j=1}^s \frac{|c_{j,X} + c_{j,X^{n-1}}|^2 }{\|c_j\|_2^2}, \quad \mcP_{\.s}(O) = 2^n \sum_{j=1}^s \frac{|c_{j,ZZ}|^2}{\|c_j\|_2^2}\]
\end{lem}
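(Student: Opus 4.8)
The plan is to unfold the definition of $\.s$-purity and reduce every Hilbert--Schmidt inner product that appears to a sum over Pauli-string orbits, using the fact that distinct orbits are supported on disjoint sets of Pauli strings and are therefore orthogonal. Since $\mcP_{\.s}(H)=\sum_{j}|\tr(E_j^\dagger H)|^2$ depends only on an \emph{orthonormal} basis, I would first replace $B_1,\dots,B_s$ by $E_j\defeq B_j/\|B_j\|_F$, so that $\mcP_{\.s}(H)=\sum_{j=1}^{s}|\langle B_j,H\rangle|^2/\|B_j\|_F^2$ for $H\in\{\rho,O\}$, with $\langle A,B\rangle=\tr(A^\dagger B)$.

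Next I would evaluate the denominators. By hypothesis each orbit $H$ occurring in the $B_j$ is a sum of $n$ distinct Pauli strings, each with $\|P\|_F^2=2^n$, and distinct orbits are mutually orthogonal; hence $\|H\|_F^2=n\,2^n$ for every such orbit, and consequently $\|B_j\|_F^2=\sum_H|c_{j,H}|^2\|H\|_F^2=n\,2^n\|c_j\|_2^2$. (All coefficients $c_{j,H}$ may in fact be taken real here, so $|c_{j,H}|^2=c_{j,H}^2$.)

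For the numerator with $\rho=2^{-n}J_{2^n}$ I would invoke the expansion from the proof of Theorem~\ref{thm:g-purity-Cn}, $J_{2^n}=I-iX-iX^2-\cdots-iX^n$, where $X^t$ is the all-$X$ orbit of weight $t$. Since $B_j$ is traceless and, by the explicit basis of Eq.~\eqref{eq:Cn-basis} (and Theorem~\ref{thm:DLA-parity} for the $t=n$ term), the only all-$X$ orbits it can overlap are $X=X^1$ and $X^{n-1}$, orbit-orthogonality gives $\langle B_j,\rho\rangle=2^{-n}\bigl(-i\,n\,2^n\bigr)\bigl(\overline{c_{j,X}}+\overline{c_{j,X^{n-1}}}\bigr)$, so $|\langle B_j,\rho\rangle|^2=n^2|c_{j,X}+c_{j,X^{n-1}}|^2$; dividing by $\|B_j\|_F^2=n\,2^n\|c_j\|_2^2$ and summing over $j$ produces the claimed expression for $\mcP_{\.s}(\rho)$. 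Similarly, for $O=-\tfrac{i}{\sqrt n}\,ZZ$, where $ZZ$ is itself an orbit of $n$ distinct Pauli strings, orthogonality yields $\langle B_j,O\rangle=-\tfrac{i}{\sqrt n}\,\overline{c_{j,ZZ}}\cdot n\,2^n$, hence $|\langle B_j,O\rangle|^2=n\,4^n|c_{j,ZZ}|^2$, and dividing by $\|B_j\|_F^2$ and summing gives $\mcP_{\.s}(O)$.

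The computation is essentially bookkeeping, so the only real care needed is (i) keeping the factors of $i$ and the complex conjugations straight, and (ii) confirming that the components of $\rho$ and $O$ lying outside $\.g$---the identity and the all-$X$ orbits $X^t$ for $2\le t\le n$---contribute nothing, which is precisely what the basis in Eq.~\eqref{eq:Cn-basis} together with Theorem~\ref{thm:DLA-parity} guarantee; no further structural input about $\.g$ is required.
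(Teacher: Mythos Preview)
Your proposal is correct and follows essentially the same route as the paper: normalize the $B_j$, use orbit orthogonality to compute $\|B_j\|_F^2=n\,2^n\|c_j\|_2^2$, and observe that only the $X$, $X^{n-1}$ (for $\rho$) and $ZZ$ (for $O$) components survive in $\langle B_j,\cdot\rangle$. The only cosmetic difference is that the paper first replaces $\rho$ by its projection onto $\.g$ (via Theorem~\ref{thm:g-purity-Cn}) and works with $X+X^{n-1}$ directly, whereas you keep the full $J_{2^n}$ expansion and argue orthogonality term by term; the arithmetic and justification are the same.
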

\begin{proof}
    Let $\bar B_j= B_j / \|B_j\|_F$, then $\bar B_1, \ldots, \bar B_s$ is an orthonormal basis of $\.s$. 
    Denote by $\Pi_{\.s}$ the orthogonal projection onto the subspace $\.s$. Then by definition
    \[
        \mcP_{\.s}(\rho) = 4^{-n} \|\Pi_{\.s} (X+X^{n-1})\|_F^2 =  4^{-n} \sum_{j=1}^s |\langle X+X^{n-1},\bar{B}_j\rangle |^2 
        = 4^{-n} \sum_{j=1}^s \frac{|\langle X+X^{n-1},{B}_j\rangle |^2}{\|B_j\|_F^2}.
    \]
    $B_j$ is a linear combination of different (and orthogonal) Pauli orbit-sums, and only the components of $X$ and $X^{n-1}$ have contribution to the inner product in the numerator. More specifically, 
    \[
        \langle X+X^{n-1},{B}_j\rangle = \langle X+X^{n-1},c_{j,X} X + c_{j,X^{n-1}} X^{n-1} \rangle = n2^n (c_{j,X} + c_{j,X^{n-1}}), 
    \]
    as $\langle X, X\rangle = \langle X^{n-1}, X^{n-1}\rangle = n2^n$ and $\langle X, X^{n-1}\rangle = 0$. Finally note that $\|B_j\|_F^2 = \|c_j\|_2^2 n2^n$, therefore  
    \[
        \mcP_{\.s}(\rho) = \frac{n}{2^n} \sum_{j=1}^s \frac{|c_{j,X} + c_{j,X^{n-1}}|^2 }{\|c_j\|_2^2}.
    \]
    Very similar calculations yield that 
    \[
        \mcP_{\.s}(O) = \frac{1}{n} \|\Pi_{\.s} ZZ\|_F^2 = \frac{1}{n} \sum_{j=1}^s \frac{|\langle ZZ,{B}_j\rangle |^2}{\|B_j\|_F^2}  = \frac{1}{n} \sum_{j=1}^s \frac{|\langle ZZ, c_{j,ZZ} ZZ \rangle |^2}{\|c_j\|_2^2 n2^n} = 2^n \sum_{j=1}^s \frac{|c_{j,ZZ}|^2}{\|c_j\|_2^2 }.
    \]
 
\end{proof}
\begin{thm}\label{thm:c-purity-Cn}
    The purity of the initial state $\rho$ and the measurement $O$ in the center $\.c$ of $\.g_{C_n}$ is 
    \[\mcP_{\.c}(\rho) = 
    \parity(n)/2^{n-1} \quad \text{and} \quad \mcP_{\.c}(O) = 2^n/n,\]
    where $\parity(n) = 1$ if $n$ is odd and 0 if $n$ is even.
\end{thm}
\begin{proof}
    Recall $c_1$ and $c_2$ from Theorem \ref{thm:cycle-center} such that $\.c=\spn_\mb{R}\{c_1, c_2\}$:
     \begin{align*}
    c_1 = - X +\sum_{t=1}^{\frac{n-1}{2}} \big({ZX^{2t-1}Z} + {YX^{2t-1}Y}\big),\text{~~and~~}
     c_2 = {X^{n-1}} +\sum_{t=0}^{\frac{n-3}{2}}\big({ZX^{2t}Z} + {YX^{2t}Y}\big)
\end{align*}
if $n$ is odd, and 
\begin{align*}
    c_1  = {X^{n-1}} - X + \sum_{t=1}^{\frac{n-2}{2}}\big({ZX^{2t-1}Z} + {YX^{2t-1}Y}\big),\text{~~and~~} c_2  = \sum_{t=0}^{\frac{n-2}{2}} \big({ZX^{2t}Z} + {YX^{2t}Y}\big)
\end{align*}
if $n$ is even.
 
    Note that $c_1$ and $c_2$ both have $n$ orbit-sums all with $\pm 1$ coefficients. Thus $\|c_1\|_2^2 = \|c_2\|_2^2 = n$.
    
    \paragraph{$n$ odd:} $c_1$ has $X$ and no  $X^{n-1}$ or $ZZ$; $c_2$ has $X^{n-1}$ and  $ZZ$ but no $X$. By Lemma \ref{lem:Cn-purity}, we have
    \[
        \mcP_{\.c}(\rho) = \frac{n}{2^n} \left(\frac{1}{n} + \frac{1}{n}\right) = \frac{1}{2^{n-1}}, \quad \text{and} \quad \mcP_{\.c}(O) = 2^n\left (0+\frac{1}{n}\right) = \frac{2^n}{n}.
    \]

    \paragraph{$n$ even:} $c_1$ has a component of $X^{n-1}-X$ (which is orthogonal to $X+X^{n-1}$) and no component of $ZZ$; $c_2$ has a component  of $ZZ$ and no component of $X$ or $X^{n-1}$. Again, by Lemma \ref{lem:Cn-purity}:
    \[
        \mcP_{\.c}(\rho) = \frac{n}{2^n} \left(0+0\right) = 0, \quad \text{and} \quad \mcP_{\.c}(O) = 2^n\left (0+\frac{1}{n}\right) = \frac{2^n}{n}.
    \]
\end{proof}

\begin{thm}\label{thm:Cn-purity-gj}
    The purity of the initial state $\rho$ and the measurement $O$ in the subspace $\.g_k$ of $\.g_{C_n}$ is 
    \[\mcP_{\.g_k}(\rho) =
    \frac{\parity(k)}{2^{n-2}}, \quad \text{ and } \quad \mcP_{\.g_k}(O) = \frac{2^{n}}{n}.\]
\end{thm}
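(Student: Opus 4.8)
The plan is to apply Lemma~\ref{lem:Cn-purity} to the subalgebra $\.g_k = \spn_\mbR\{\tilde X_k, \tilde Y_k, \tilde Z_k\}$, reading off the required orbit coefficients directly from the explicit Pauli-orbit expansions of $\tilde X_k,\tilde Y_k,\tilde Z_k$ in Eqs.~\eqref{eq:su2-zk}--\eqref{eq:su2-yk-expanded}. First I would verify that $\{\tilde X_k, \tilde Y_k, \tilde Z_k\}$ is an \emph{orthogonal} basis with a common Frobenius norm: orthogonality follows either from the disjointness of the Pauli-orbit supports, or from the $\.{su}(2)$ bracket relations (conjugating the triple by a one-parameter subgroup rotates the other two generators, forcing equal norms and mutual orthogonality). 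Using the orbit norms $\|X\|_F^2 = \|X^{n-1}\|_F^2 = \|YX^tY\|_F^2 = \|ZX^tZ\|_F^2 = n\,2^n$ and $\|YX^tZ\|_F^2 = 2n\,2^n$, the orthogonality of distinct orbits, and the identity $\sum_{j=1}^{n-1}\sin^2\tfrac{kj\pi}{n} = \tfrac{n}{2}$, one then computes $\|\tilde X_k\|_F^2 = \|\tilde Y_k\|_F^2 = \|\tilde Z_k\|_F^2 = 2^{n-2}$.

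Next, exactly as in the proof of Theorem~\ref{thm:g-purity-Cn}, $\rho$ has nonzero overlap with $\.g_k$ only through the orbits $X$ and $X^{n-1}$, and $O = -\tfrac{i}{\sqrt n}ZZ$ only through $ZZ$. From Eqs.~\eqref{eq:su2-zk}--\eqref{eq:su2-yk-expanded} I would record: $\tilde Z_k$ has zero component on each of $X$, $X^{n-1}$, $ZZ$; $\tilde X_k$ has $X$-component $\tfrac{1}{2n}\sin\tfrac{k\pi}{n}$, $X^{n-1}$-component $\tfrac{1}{2n}\sin\tfrac{k(n-1)\pi}{n}$, and no $ZZ$; and $\tilde Y_k$ has $X$-component $-\tfrac{1}{2n}\cos\tfrac{k\pi}{n}$, $X^{n-1}$-component $\tfrac{1}{2n}\cos\tfrac{k(n-1)\pi}{n}$, and $ZZ$-component $\tfrac{1}{2n}$. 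With the identities $\sin\tfrac{k(n-1)\pi}{n}=(-1)^{k+1}\sin\tfrac{k\pi}{n}$ and $\cos\tfrac{k(n-1)\pi}{n}=(-1)^{k}\cos\tfrac{k\pi}{n}$, this gives $\langle X+X^{n-1},\tilde Z_k\rangle = 0$, $\langle X+X^{n-1},\tilde X_k\rangle = \tfrac{2^n}{2}\big(1+(-1)^{k+1}\big)\sin\tfrac{k\pi}{n}$, $\langle X+X^{n-1},\tilde Y_k\rangle = \tfrac{2^n}{2}\big((-1)^k-1\big)\cos\tfrac{k\pi}{n}$, and $\langle ZZ,\tilde Y_k\rangle = \tfrac{1}{2n}\|ZZ\|_F^2 = \tfrac{2^n}{2}$, while $\langle ZZ,\tilde X_k\rangle = \langle ZZ,\tilde Z_k\rangle = 0$.

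Assembling via Lemma~\ref{lem:Cn-purity} --- i.e.\ for $\rho$ summing $4^{-n}|\langle X+X^{n-1},B_j\rangle|^2/\|B_j\|_F^2$ and for $O$ summing $n^{-1}|\langle ZZ,B_j\rangle|^2/\|B_j\|_F^2$ over $B_j\in\{\tilde X_k,\tilde Y_k,\tilde Z_k\}$ --- then finishes the proof. For $O$ only $\tilde Y_k$ contributes, giving $\mcP_{\.g_k}(O) = \tfrac{1}{n}\cdot(2^n/2)^2/2^{n-2} = 2^n/n$. For $\rho$: when $k$ is even both relevant inner products vanish, so $\mcP_{\.g_k}(\rho)=0$; when $k$ is odd they equal $2^n\sin\tfrac{k\pi}{n}$ and $-2^n\cos\tfrac{k\pi}{n}$, whose squares sum to $4^n$, so $\mcP_{\.g_k}(\rho) = 4^{-n}\cdot 4^n/2^{n-2} = 1/2^{n-2}$. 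Both cases are captured by the prefactor $\sin^2\tfrac{k\pi}{2}$, which equals $1$ for odd $k$ and $0$ for even $k$.

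I expect the only genuinely delicate point to be the norm bookkeeping in the first step: one must check that the orbit $YX^tZ$ comprises $2n$ distinct Pauli strings --- the $n$ cyclic shifts of $Y\cdots Z$ and the $n$ cyclic shifts of $Z\cdots Y$, which do not collide for $n\ge 3$ --- since this is precisely what forces the three generators of $\.g_k$ to share the norm $2^{n-2}$. (Note that $\tilde Z_k$ is built from $YX^tZ$ orbits and hence is not literally of the form to which Lemma~\ref{lem:Cn-purity} applies, but since $\tilde Z_k$ has no overlap with $X$, $X^{n-1}$ or $ZZ$ its contribution to both purities is zero, so the lemma is effectively used only on $\tilde X_k$ and $\tilde Y_k$.) Everything else reduces to elementary trigonometric identities.
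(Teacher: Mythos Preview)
Your proposal is correct and follows essentially the same approach as the paper: both apply Lemma~\ref{lem:Cn-purity} to the basis $\{\tilde X_k,\tilde Y_k,\tilde Z_k\}$, read off the $X$, $X^{n-1}$, and $ZZ$ coefficients from the explicit orbit expansions in Eqs.~\eqref{eq:su2-zk}--\eqref{eq:su2-yk-expanded}, and reduce to the trigonometric identity $\sum_{j}\sin^2\tfrac{kj\pi}{n}=\sum_j\cos^2\tfrac{kj\pi}{n}=\tfrac{n}{2}$. You are simply more explicit than the paper in verifying orthogonality, computing the common norm $2^{n-2}$, and noting that the $2n$-string orbit $YX^tZ$ (hence $\tilde Z_k$) falls outside the literal scope of Lemma~\ref{lem:Cn-purity} but contributes zero anyway.
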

\begin{proof}
    The subspace $\.g_k = \spn_{\mbR}\{\tilde X_k, \tilde Y_k, \tilde Z_k\}$ as defined in Theorem~\ref{thm:cycle-iso-su2}. By Lemma \ref{lem:Cn-purity} and some basic trigonometrical identities,
    we can see that the purity of $\rho$ is 
    \begin{align*}
        \mcP_{\.g_k}(\rho) & = \frac{n}{2^n} \cdot \frac{\lp\sin\frac{k\pi}{n} + \sin \frac{k(n-1)\pi}{n}\rp^2}{\sum_{j=0}^{n-1}\lp \sin^2 \frac{k(j+1)\pi}{n} + \sin^2 \frac{kj\pi}{n}\rp } + \frac{n}{2^n} \cdot \frac{\lp\cos\frac{k(n-1)\pi}{n} - \cos \frac{k\pi}{n}\rp^2}{\sum_{j=0}^{n-1}\lp \cos^2 \frac{k(j+1)\pi}{n} + \cos^2 \frac{kj\pi}{n}\rp} \\
        &=\frac{n}{2^n} \cdot \frac{\lp 2\sin \frac{k\pi}{2}\cos \frac{k(n-2)\pi}{2n}\rp^2}{\sum_{j=0}^{n-1}\lp 2\sin^2 \frac{kj\pi}{n}\rp } + \frac{n}{2^n} \cdot \frac{\lp -2 \sin \frac{k\pi}{2}\sin \frac{k(n-2)\pi}{2n}\rp^2}{\sum_{j=0}^{n-1}\lp 2\cos^2 \frac{kj\pi}{n}\rp }\\
        &=\frac{n}{2^n} \cdot \frac{4\sin^2\frac{k\pi}{2}\cos^2 \frac{k(n-2)\pi}{2n}}{\sum_{j=0}^{n-1}\lp 1-\cos \frac{2kj\pi}{n}\rp } + \frac{n}{2^n} \cdot \frac{4\sin^2\frac{k\pi}{2}\sin^2 \frac{k(n-2)\pi}{2n}}{\sum_{j=0}^{n-1}\lp \cos \frac{2kj\pi}{n}+1\rp }\\
        &=\frac{n}{2^n} \cdot \frac{4 \sin^2\frac{k\pi}{2}\cos^2 \frac{k(n-2)\pi}{2n}}{n} + \frac{n}{2^n} \cdot \frac{4\sin^2\frac{k\pi}{2}\sin^2 \frac{k(n-2)\pi}{2n}}{n }\\
        & = \frac{\sin^2\frac{k\pi}{2}}{2^{n-2}} = \frac{\parity(k)}{2^{n-2}},
    \end{align*}
    and the purity of $O$ is 
    \begin{multline}
        \mcP_{\.g_k}(O) = \frac{2^n}{\sum_{j=0}^{n-1}\lp \cos^2 \frac{k(j+1)\pi}{n} + \cos^2 \frac{kj\pi}{n}\rp } \\= \frac{2^n}{\sum_{j=0}^{n-1}\lp 2 \cos^2 \frac{kj\pi}{n}\rp}=\frac{2^n}{\sum_{j=0}^{n-1}\lp  \cos \frac{2kj\pi}{n}+1\rp}
        =\frac{2^n}{n}.
    \end{multline}
\end{proof}

We are now in a position to prove Theorem~\ref{thm:Cn-exp-var}.

\CnExpVar*

\begin{proof}
    Let $\{\tilde{c}_j=c_j/\sqrt{n^2 2^n}: j=1,2\}$ denote the orthonormal basis of $\.c$, where $c_1,c_2$ are defined in Theorem \ref{thm:cycle-center}. The $\rho_{\.c}$ and $O_{\.c}$ are $\rho_{\.c}=\langle \tilde{c}_1,\rho \rangle  \tilde{c}_1+\langle \tilde{c}_2,\rho \rangle  \tilde{c}_2$ and  $O_{\.c}=\langle \tilde{c}_1, O \rangle  \tilde{c}_1+\langle \tilde{c}_2, O \rangle  \tilde{c}_2$. Since $\tilde{c}_1$ and $\tilde{c_2}$ consist of distinct Pauli orbit-sums, we have
     \begin{align*}
        \tr(\rho_{\.c}O_{\.c})&=\langle \tilde{c}_1,\rho \rangle  \langle \tilde{c}_1,O \rangle \tr((\tilde{c}_1)^2)+\langle \tilde{c}_2,\rho \rangle  \langle \tilde{c}_2,O \rangle \tr((\tilde{c}_2)^2) \\ &=\frac{\langle c_1,\rho \rangle  \langle c_1,O \rangle}{(n^2 2^n)^2} \tr(c_1^2)+\frac{\langle c_2,\rho \rangle  \langle c_2,O \rangle}{(n^2 2^n)^2} \tr(c_2^2).
     \end{align*}
    If $n$ is odd, we can compute that $\langle c_1, O\rangle=0$ and
    \begin{align*}
          \langle c_2, \rho \rangle & =  \langle X^{n-1}, -\frac{i}{2^n}X^{n-1} \rangle=-in, \quad
          \langle c_2, O\rangle  =  \langle ZZ, \frac{-i}{\sqrt{n}}ZZ \rangle=-i\sqrt{n}2^n,\quad 
          \tr(c_2^2)=-n^2 2^n.
    \end{align*}
    Therefore, $\tr(\rho_{\.c}O_{\.c})=\frac{1}{\sqrt{n}}$ if $n$ is odd . If $n$ is even, $\langle c_1,O\rangle=\langle c_2,\rho\rangle=0$, which implies $\tr(\rho_{\.c}O_{\.c})=0$.
    Directly computing Eq.~\eqref{eq:VQA-exp-var} gives
    \[
        \av_\theta[\ell(\rho,{O};\theta)] = \tr(\rho_{\.c}O_{\.c})=\frac{\parity(n)}{\sqrt{n}}.
    \]
    Similarly, combining Theorem \ref{thm:Cn-purity-gj} and Eq.~\eqref{eq:VQA-exp-var} gives
    \[
        \var_\theta[\ell(\rho,{O};\theta)] =\sum_{k=1}^{n-1}\frac{ \mcP_{\.g_k}(\rho)\mcP_{\.g_k}(O)}{\dim(\.g_k)}
        = \sum_{k=1}^{n-1}\frac{\frac{\parity(k)
        }{2^{n-2}} \frac{2^n}{n}}{3} 
        = \frac{4}{3n} \sum_{k=1}^{n-1} \parity(k)
        =  \frac{2(n-\parity(n))}{3n}.
    \]
\end{proof}

\section{Complete Graphs - Properties of Pauli Orbit-sums}

Here we give proofs of Lemma~\ref{lem:adzz-pqr}, as well as Facts~\ref{fact:Kn-01o-11o-10e}, \ref{fact:anyoddodd-oddeveneven}, \ref{fact:linear-independent}. In this section $\.g$ will be used to denote $\.g_{K_n}$. We first prove the following additional facts (Facts~\ref{fact:Kn-p1o-to-poo}-\ref{fact:Kn-p0e-induction}) which will be needed in this and other sections. 

\begin{fact}[``$*1o \Rightarrow \ast oo$'' ]\label{fact:Kn-p1o-to-poo}
    {For any fixed $p$, if $X^pY^1Z^{r}\in [\.g,\.g]$ for all odd $r$, then $X^{p}Y^{q}Z^{r}\in [\.g,\.g]$ for all odd $q$ and odd $r$. }
\end{fact}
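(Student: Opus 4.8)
The plan is to exploit two facts. First, $X^1\in\.g$ and $[\.g,\.g]$ is an ideal of $\.g$, so $ad_{X^1}$ maps $[\.g,\.g]$ into itself. Second, the commutation relation~\eqref{eq:Kn-adA} shows that $ad_{X^1}$ leaves the exponent $p$ unchanged, leaves $q+r$ unchanged, and flips the parities of $q$ and of $r$. Hence, with $p$ fixed and $m$ a fixed even integer, the space $V_{p,m}\defeq\spn_\mbR\{X^pY^qZ^{m-q}:0\le q\le m\}$ is $ad_{X^1}$-invariant, and the ``odd-$q$'' subspace $W_{p,m}\defeq\spn_\mbR\{X^pY^qZ^{m-q}:0\le q\le m,\ q\text{ odd}\}$ is invariant under $D\defeq(ad_{X^1})^2$. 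The whole argument will reduce to showing that $W_{p,m}$ is spanned by $\{D^j\big(X^pY^1Z^{m-1}\big):j\ge0\}$.

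First I would dispose of the degenerate range $p<0$ or $p+m>n$, where every orbit involved is $0$ by the convention on out-of-range exponents and the claim is vacuous; so assume $0\le p$ and $p+m\le n$, in which case the orbits $X^pY^qZ^{m-q}$, $0\le q\le m$, are nonzero sums of pairwise disjoint Pauli strings, hence orthogonal and linearly independent, so $\dim W_{p,m}=m/2$. Now fix the target $X^pY^qZ^r$ with $q$ and $r$ both odd, and set $m\defeq q+r$; then $m$ is even and $m-1$ is odd, so by hypothesis $X^pY^1Z^{m-1}\in[\.g,\.g]$. Since $X^1\in\.g$ and $[\.g,\.g]$ is an ideal, every vector in $\spn_\mbR\{D^j\big(X^pY^1Z^{m-1}\big):j\ge0\}$ again lies in $[\.g,\.g]$, so it suffices to prove that this span is all of $W_{p,m}$ (which contains the target, since $q$ is odd).

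To this end, I would order the basis of $W_{p,m}$ as $f_k\defeq X^pY^{2k-1}Z^{m-2k+1}$ for $k=1,\ldots,m/2$, so that $f_1=X^pY^1Z^{m-1}$, and compute the matrix of $D$ in this basis by applying $ad_{X^1}$ twice using~\eqref{eq:Kn-adA}. A short computation gives the tridiagonal form
\[D f_k = 4(2k)(2k+1)\,f_{k+1}\;-\;4\big(2(2k-1)(m-2k+1)+m\big)\,f_k\;+\;4(m-2k+2)(m-2k+3)\,f_{k-1},\]
with the conventions $f_0=f_{m/2+1}=0$, which are forced by the vanishing orbits $X^pY^{-1}Z^{m+1}=X^pY^{m+1}Z^{-1}=0$. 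In particular all superdiagonal entries $4(2k)(2k+1)$ with $1\le k\le m/2-1$ are nonzero, so a routine induction shows that $D^j f_1\in\spn\{f_1,\ldots,f_{j+1}\}$ with $f_{j+1}$-coefficient equal to $\prod_{k=1}^{j}4(2k)(2k+1)\ne0$. Therefore $f_1,Df_1,\ldots,D^{m/2-1}f_1$ are linearly independent and span $W_{p,m}$, which finishes the proof.

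I do not expect a serious obstacle. The only points requiring care are the boundary conventions for orbits whose exponents leave the valid range (this is exactly what annihilates the $f_0$ and $f_{m/2+1}$ terms and keeps $W_{p,m}$ invariant under $D$) and the bookkeeping in the double commutator; both are dictated entirely by~\eqref{eq:Kn-adA} together with the convention $X^pY^qZ^r=0$ unless $0\le p,q,r$ and $p+q+r\le n$.
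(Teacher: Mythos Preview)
Your argument is correct and is essentially the same as the paper's: both iterate $D=(ad_{X^1})^2$ starting from $X^pY^1Z^{m-1}$ and use that the ``raising'' coefficient in the double commutator is $4(2k)(2k+1)=4q(q-1)\ne 0$. The paper phrases this as a direct induction on odd $q$ via the identity $[X^1,[X^1,X^pY^{q-2}Z^{r+2}]]=\gamma_1 X^pY^{q-4}Z^{r+4}+\gamma_2 X^pY^{q-2}Z^{r+2}+4q(q-1)X^pY^qZ^r$, whereas you repackage the same recursion as ``$D$ is tridiagonal on $W_{p,m}$ with nonzero superdiagonal, hence $f_1$ is cyclic''; the content is identical.
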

\begin{proof}
    Fix an arbitrary $p$. We will prove the conclusion by induction on odd $q$. The base case is just the assumption that $X^pY^1Z^{r} \in [\.g,\.g] $ for all odd $r$. Now, for any odd $q\ge 3$, if $X^pY^{q'}Z^{r}\in [\.g,\.g]$ for all odd $r$ and all $q'\le q-2$, by Eq. \eqref{eq:Kn-adA}, we have
    \begin{align}\label{eq:Kn-induction-q}
        \underbrace{[X^1,[X^1,X^{p}Y^{q-2}Z^{r+2}]]}_{\in [\.g,\.g]} =\gamma_1 \underbrace{X^pY^{q-4}Z^{r+4}}_{\in [\.g,\.g] }+\gamma_2 \underbrace{X^pY^{q-2}Z^{r+2}}_{\in [\.g,\.g]}+\gamma_3 X^{p}Y^{q}Z^{r},
    \end{align}
    where $\gamma_1=4(r+4)(r+3)$, $\gamma_2=-4(r+3)(q-1)-4(q-1)(r+2)$ and $\gamma_3=4q(q-1)\neq 0$. The LHS and the first two terms on the RHS are all in $[\.g,\.g]$, thus $X^{p}Y^{q}Z^{r}$ must therefore also be in $[\.g,\.g]$.
\end{proof}
\begin{fact}[``$o0e \Rightarrow oee$'']\label{fact:Kn-o0e-to-oee}
    {For any fixed odd $p$, if $X^pZ^{r}\in \.g$ for all even $r$, then $X^{p}Y^{q}Z^{r}\in \.g$ for all even $q$ and even $r$.}
\end{fact}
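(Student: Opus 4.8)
The plan is to prove this by induction on the even integer $q$, keeping $p$ fixed throughout; the hypothesis that $p$ is odd is never actually used in the argument, it is merely the context in which the fact is applied. The base case is $q=0$, which is precisely the assumption that $X^pZ^r\in\.g$ for all even $r$. This is the one place that needs a little care: one cannot derive the $q=2$ case from the $q=0$ case by naively plugging into a ``$q-2$'' recursion, since that would require the out-of-range object $X^pY^{-2}Z^{\cdot}$, so $q=0$ must be taken as a genuine base case rather than being folded into the inductive step.

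For the inductive step I would fix an even $q\ge 2$ and assume that $X^pY^{q'}Z^{r'}\in\.g$ for all even $q'\le q-2$ and all even $r'$. In particular $X^pY^{q-2}Z^{r+2}\in\.g$ and $X^pY^{q-4}Z^{r+4}\in\.g$ (the latter being $0$ by convention when $q=2$). Applying $ad_{X^1}$ twice to $X^pY^{q-2}Z^{r+2}$ and using Eq.~\eqref{eq:Kn-adA} --- which leaves the $X$-count untouched and shifts the $Y$- and $Z$-counts by $\pm 1$ --- one obtains
\[
[X^1,[X^1,X^pY^{q-2}Z^{r+2}]] = \gamma_1\,X^pY^{q-4}Z^{r+4} + \gamma_2\,X^pY^{q-2}Z^{r+2} + \gamma_3\,X^pY^qZ^r ,
\]
where the precise values of $\gamma_1,\gamma_2$ are immaterial and $\gamma_3 = 4q(q-1)\neq 0$ for every even $q\ge 2$. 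Since $X^1\in\.g$, $X^pY^{q-2}Z^{r+2}\in\.g$, and $\.g$ is closed under the Lie bracket, the left-hand side lies in $\.g$; the first two terms on the right lie in $\.g$ by the induction hypothesis. Solving for $X^pY^qZ^r$, which is legitimate because $\gamma_3\neq 0$, shows $X^pY^qZ^r\in\.g$, completing the induction.

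This argument is essentially Fact~\ref{fact:Kn-p1o-to-poo} run with even exponents in place of odd ones and with $\.g$ in place of $[\.g,\.g]$, so I do not anticipate any genuine obstacle. The points that require checking are the base-case subtlety noted above, the closure of $\.g$ under bracketing with the generator $X^1$, and the fact that $\gamma_3 = 4q(q-1)$ never vanishes on the relevant range of $q$ --- all of which are immediate.
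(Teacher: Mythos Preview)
Your proposal is correct and follows essentially the same route as the paper: induction on even $q$ with base case $q=0$ given by hypothesis, and the inductive step obtained by applying $ad_{X^1}$ twice to $X^pY^{q-2}Z^{r+2}$ and using that the coefficient $\gamma_3=4q(q-1)$ is nonzero for $q\ge 2$. The paper simply cites the identical identity (its Eq.~\eqref{eq:Kn-induction-q}) already derived in the proof of Fact~\ref{fact:Kn-p1o-to-poo}, exactly as you anticipated.
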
 
\begin{proof}
    Fix an arbitrary odd $p$. We will prove the conclusion by induction on even $q$. The base case is just the assumption that $X^p Z^{r} \in\.g$ for all even $r$. For any even $q\ge 2$, assume that $X^pY^{q'}Z^{r}\in\.g$ all for all even $r$ and all $q'\le q-2$.
    Again Eq.~\eqref{eq:Kn-induction-q} implies that $X^{p}Y^{q}Z^{r}\in \.g$.
\end{proof}

\begin{fact}\label{fact:Kn-200-020}
    When the number of vertices $n\ge 3$ is odd, $Y^2, X^2\in \.g$.
\end{fact}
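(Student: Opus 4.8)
The plan is to prove $Y^2\in\.g$ and $X^2\in\.g$ separately. The element $Y^2$ comes essentially for free: $Y^1Z^1=\frac12[X^1,Z^2]$ is a commutator of the two generators, so $Y^1Z^1\in[\.g,\.g]\subseteq\.g$, and relation~\eqref{eq:Kn-adA} gives $[X^1,Y^1Z^1]=4Y^2-4Z^2$; hence $Y^2=\frac14[X^1,Y^1Z^1]+Z^2=\frac18[X^1,[X^1,Z^2]]+Z^2\in\.g$. This step does not use the parity of $n$.

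For $X^2$ the plan is to introduce the family of orbits $D_s\defeq X^2Z^{2s}-Y^2Z^{2s}$ for $s\ge 0$ and show $D_s\in\.g$ for all $s$ by a \emph{descending} induction, so that in particular $D_0=X^2-Y^2\in\.g$ and then $X^2=D_0+Y^2\in\.g$ using the first paragraph. Here is where oddness of $n$ enters: $X^2Z^{n-1}$ and $Y^2Z^{n-1}$ both have weight $n+1>n$ and therefore vanish by convention, so $D_{(n-1)/2}=0\in\.g$ serves as the (trivial) base case. The engine of the induction is the single commutation identity coming from~\eqref{eq:Kn-adB} with $(p,q,r)=(1,1,2s+1)$, which after collecting terms reads
\[
  [Z^2,\,X^1Y^1Z^{2s+1}]=4(n-2s-2)\,D_s+8(s+1)\,D_{s+1}.
\]
By Fact~\ref{fact:Kn-01o-11o-10e}, $X^1Y^1Z^{2s+1}\in[\.g,\.g]\subseteq\.g$ for every $s$ with $0\le s\le\frac{n-3}{2}$ (the $Z$-exponent is odd and the total weight $2s+3$ is at most $n$), so the left-hand side lies in $\.g$.

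Since the coefficient $n-2s-2$ is at least $1$ throughout the range $0\le s\le\frac{n-3}{2}$, the displayed identity lets me solve for $D_s$ in terms of $D_{s+1}$: starting from $D_{(n-1)/2}=0$, I obtain $D_{(n-3)/2},D_{(n-5)/2},\dots,D_1,D_0\in\.g$ in turn, which completes the argument. The one genuinely nonobvious point — the ``hard part'' — is choosing the right intermediate objects: the individual orbits $X^2Z^{2s}$ and $Y^2Z^{2s}$ are not (at this stage) known to lie in $\.g$, but their differences $D_s$ are, precisely because $ad_{Z^2}$ sends the known element $X^1Y^1Z^{2s+1}$ into a combination of only the two differences $D_s$ and $D_{s+1}$, a triangular structure that telescopes down from the vanishing top term. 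The parity hypothesis is used exactly twice — to make $\frac{n-1}{2}$ and $\frac{n-3}{2}$ integers so the recursion closes up, and to force the boundary term $D_{(n-1)/2}$ to be zero — while everything else is the routine commutator bookkeeping of~\eqref{eq:Kn-adA}–\eqref{eq:Kn-adB}; I expect no serious obstacle beyond checking the displayed identity carefully.
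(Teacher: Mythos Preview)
Your proposal is correct and follows essentially the same route as the paper: the paper also obtains $Y^2$ from $[X^1,Y^1Z^1]=4Y^2-4Z^2$, and for $X^2$ it likewise computes $[Z^2,X^1Y^1Z^r]$ for odd $r$ to produce a chain of relations among the differences $X^2Z^{r-1}-Y^2Z^{r-1}$, then descends from the top (where the paper uses the boundary case $r=n-2$ to get $X^2Z^{n-3}-Y^2Z^{n-3}\in\.g$, equivalent to your $D_{(n-1)/2}=0$). Your packaging via $D_s$ and the single unified identity is a bit cleaner, but the argument is the same.
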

\begin{proof}
    From Fact \ref{fact:Kn-01o-11o-10e}, $Y^1Z^1\in\.g$ and therefore $[X^1,Y^1Z^1]\in\.g$. Since $[X^1,Y^1Z^1]=-4Z^2+4Y^2$, we have $ Y^2\in\.g$. Again, from Fact \ref{fact:Kn-01o-11o-10e}, $X^1Y^1Z^r\in \.g$ for all odd $r$. Therefore, $[Z^2,X^1Y^1Z^r]\in \.g$. Then, using Eq.  \eqref{eq:Kn-adB} we have
    \begin{align*}
        & [Z^2,X^1Y^1Z^1]=4(n-2)(X^2-Y^2) + 8(X^2Z^2-Y^2Z^2)\in\.g,\\
        & [Z^2,X^1Y^1Z^{r}]=4(n-r-1)(X^2Z^{r-1}-Y^2Z^{r-1})+4(r+1)(X^2Z^{r+1}-Y^2Z^{r+1})\in\.g,\\
        &[Z^2,X^1Y^1Z^{n-2}] = 4(X^2Z^{n-3}-Y^2Z^{n-3})\in\.g,
    \end{align*}
    where $3\le r \le n-4$ and $r$ is odd.
    Substituting the last equation into the second one for $r=n-4$, we see that $X^2Z^{n-5}-Y^2Z^{n-5}\in\.g$. Substituting this in the second equation for $r=n-6$, and so on, until finally we get $X^2Z^2-Y^2Z^2\in\.g$. Now the first equation implies that $X^2-Y^2\in\.g$. As $Y^2\in\.g$, it follows that $X^2\in\.g$.
\end{proof}

\begin{fact}[``$*1o$ induction on $p$'']\label{fact:Kn-p1o-induction}
     For any fixed $p$, if $X^{p'}Y^1Z^{r}\in [\.g,\.g]$ for all odd $r$ and all $p'\le p-1$, then $X^{p}Y^1Z^r\in [\.g,\.g]$ for all odd $r$. 
\end{fact}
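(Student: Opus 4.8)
The plan is to induct on the odd integer $r$ (with $p\ge 1$ fixed), exploiting the fact that $[\.g,\.g]$ is an ideal in order to funnel a two-step generation into a single target orbit. First I would record what the hypothesis buys us: applying Fact~\ref{fact:Kn-p1o-to-poo} separately to each index $p'\le p-1$ gives $X^{p'}Y^qZ^r\in[\.g,\.g]$ whenever $p'\le p-1$ and $q,r$ are both odd, while the hypothesis itself gives $X^{p'}Y^1Z^r\in[\.g,\.g]$ for $p'\le p-1$ and $r$ odd. With the paper's convention that out-of-range orbits equal $0$, it suffices to treat $X^pY^1Z^r$ with $1\le r$ and $p+1+r\le n$.

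The engine of the proof is the element $h\defeq[X^1,X^{p-1}Y^1Z^r]$. Since $X^{p-1}Y^1Z^r\in[\.g,\.g]$ (hypothesis, as $p-1\le p-1$ and $r$ is odd) and $[\.g,\.g]$ is an ideal, $h\in[\.g,\.g]$, and by Eq.~\eqref{eq:Kn-adA},
\[
h=4\,X^{p-1}Y^2Z^{r-1}-2(r+1)\,X^{p-1}Z^{r+1}.
\]
Applying $ad_{Z^2}$ again keeps us in $[\.g,\.g]$, and expanding $[Z^2,h]$ with Eq.~\eqref{eq:Kn-adB} produces $X^pY^1Z^r$ with the nonzero coefficient $8pr$, together with only five other orbits, namely $X^pY^1Z^{r-2}$, $X^{p-2}Y^3Z^{r-2}$, $X^{p-2}Y^3Z^r$, $X^{p-2}Y^1Z^r$, and $X^{p-2}Y^1Z^{r+2}$, each carrying an explicit coefficient read off from the $\alpha_i$'s. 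Solving for $X^pY^1Z^r$ and observing that each of these five orbits already lies in $[\.g,\.g]$ --- $X^pY^1Z^{r-2}$ by the inductive hypothesis on $r$ (it vanishes when $r=1$); $X^{p-2}Y^3Z^{r-2}$ and $X^{p-2}Y^3Z^r$ by the opening paragraph; $X^{p-2}Y^1Z^r$ and $X^{p-2}Y^1Z^{r+2}$ by the hypothesis directly --- together with $8pr\neq 0$, we conclude $X^pY^1Z^r\in[\.g,\.g]$. The case $r=1$ (where the $Z^{r-2}$-orbits drop out) is the base case, and the same identity propagates the claim to all odd $r$.

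The only nonroutine step is the choice of $h$. The number of $X$'s can only be raised by $ad_{Z^2}$, and doing so flips the parity of the number of $Y$'s, so the final step that produces $X^pY^1Z^r$ (with a single $Y$) must be $ad_{Z^2}$ applied to an orbit of $X$-count $p-1$ and with an \emph{even} number of $Y$'s; the minimal such orbit obtainable from the hypothesis data is $X^{p-1}Y^2Z^{r-1}$, which is inextricably bundled with $X^{p-1}Z^{r+1}$ in the combination $h$. Once $h$ is identified, the remaining work is bookkeeping: expanding the two brackets via Eqs.~\eqref{eq:Kn-adA}--\eqref{eq:Kn-adB}, checking that every incidental orbit is covered by Fact~\ref{fact:Kn-p1o-to-poo}, the hypothesis, or the running induction on $r$, and tracking that the coefficient of $X^pY^1Z^r$ does not vanish.
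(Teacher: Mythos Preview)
Your proposal is correct and is essentially the paper's own argument: both start from $X^{p-1}Y^1Z^r\in[\.g,\.g]$, apply one $ad_{X^1}$ and one $ad_{Z^2}$, invoke Fact~\ref{fact:Kn-p1o-to-poo} to dispose of the $X^{p-2}Y^3Z^{\bullet}$ side terms, and induct on odd $r$. The only difference is the order of the two brackets---the paper computes $[X^1,[Z^2,X^{p-1}Y^1Z^r]]$ while you compute $[Z^2,[X^1,X^{p-1}Y^1Z^r]]$---which changes the leading coefficient from $4p(r+1)$ to $8pr$ but leaves the set of side orbits and the logical structure unchanged.
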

\begin{proof}
    Fix an arbitrary $p$, and assume $X^{p'}Y^1Z^{r'}\in [\.g,\.g]$ for all odd $r'$ and all $p'\le p-1$. From Fact \ref{fact:Kn-p1o-to-poo}, $X^{p'}Y^3Z^{r'}\in [\.g,\.g]$ for all odd $r'$ and all $p'\le p-1$.  
    We prove the claim by induction on $r$. For the base case $r=1$, 
    \begin{multline*}
        \underbrace{[X^1, [Z^2, X^{p-1}Y^1Z^1]]}_{\in [\.g,\.g]}=8p X^pY^1Z^1 + 8(n-p) \underbrace{X^{p-2}Y^1Z^1}_{\in[\.g,\.g]}+48\underbrace{X^{p-2}Y^1Z^3}_{\in [\.g,\.g] }\\-48 \underbrace{X^{p-2}Y^{3}Z^1}_{\in [\.g,\.g]} \Rightarrow X^pY^1Z^1\in [\.g,\.g],
    \end{multline*}
    where all the ``$\in [\.g,\.g]$'' use the assumption that $X^{p'} Y^1 Z^{r'}\in [\.g,\.g]$ {and the fact $X^{p'}Y^3Z^{r'}\in [\.g,\.g]$} for all odd $r'$ and all $p'\le p-1$.
    
    For the induction step, assume that $X^{p}Y^1Z^{r-2}\in [\.g,\.g]$. Then 
    \begin{multline*}
        \underbrace{[X^1,[Z^2,X^{p-1}Y^1Z^r]]}_{\in [\.g,\.g]} = \gamma_1'\underbrace{X^pY^1Z^{r-2}}_{\in [\.g,\.g]}  + \gamma_2'X^{p}Y^1Z^r\\ + \gamma_3'\underbrace{X^{p-2}Y^1Z^r}_{\in [\.g,\.g] } + \gamma_4'\underbrace{X^{p-2}Y^3Z^{r-2}}_{\in [\.g,\.g] } + \gamma_5'\underbrace{X^{p-2}Y^1Z^{r+2}}_{\in [\.g,\.g] } + \gamma_6'\underbrace{X^{p-2}Y^3Z^r}_{\in [\.g,\.g]}, 
    \end{multline*}
    which implies that $X^p Y^1 Z^{r}\in [\.g,\.g]$. Here $\gamma_1'=4(n-r-p+1)p$, $\gamma_2'=4(r+1)p$, $\gamma_3'=8(n-r-p+1)r$, $\gamma_4'=-24(n-r-p+1)$, $\gamma_5'=8(r+1)(r+2)$ and $\gamma_6'=-24(r+1)$. All the ``$\in [\.g,\.g]$'' use one of the assumptions and fact: (1) $X^{p'} Y^1 Z^{r'}\in [\.g,\.g]$ for all $p'<p$ and all odd $r'$, (2) $X^{p} Y^1 Z^{r''}\in [\.g,\.g]$ for all $r''<r$, { and (3)$X^{p'} Y^3 Z^{r'}\in [\.g,\.g]$ for all $p'<p$ and all odd $r'$.}
\end{proof}

\begin{fact}[``$*0e$ induction on $p$'']\label{fact:Kn-p0e-induction}
    For even $n$ and any fixed even $p$, if $X^pY^1Z^{r'}\in \.g$ for all odd $r'$, and $X^{p'}Z^{r''}\in \.g$ for all even $r''$, all odd $p' \le p-1$, then $X^{p+1}Z^r\in \.g$ for all even $r$. 
\end{fact}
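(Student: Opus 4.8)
The plan is to follow the same template as Facts~\ref{fact:Kn-p1o-induction} and \ref{fact:Kn-o0e-to-oee}. The target orbits $X^{p+1}Z^r$ carry $X$-weight one larger than anything the hypotheses directly provide, and the only operation that raises the $X$-weight is $ad_{Z^2}$ (Eq.~\eqref{eq:Kn-adB}) applied to an orbit of the form $X^pY^1Z^{r'}$, which the first hypothesis supplies for every odd $r'$. So I will read off $X^{p+1}Z^r$ from these commutators, handling the unavoidable ``spillover'' orbits $X^{p-1}Y^2Z^{\cdot}$ by combining the second hypothesis with Fact~\ref{fact:Kn-o0e-to-oee}. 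After the trivial reductions---if $p\ge n$ then every $X^{p+1}Z^r$ has weight $>n$ and there is nothing to prove, so assume $p\le n-2$ (hence $n-p-2\ge 0$); and if $p=0$ the statement is the ``$10e$'' clause of Fact~\ref{fact:Kn-01o-11o-10e}, so assume $p\ge 2$---I would first invoke Fact~\ref{fact:Kn-o0e-to-oee} with the odd index $p-1$ (valid because the second hypothesis gives $X^{p-1}Z^{r''}\in\.g$ for all even $r''$) to conclude that $X^{p-1}Y^2Z^{r}\in\.g$ for every even $r$; these are precisely the spillover orbits that will appear.

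Next I would establish the base case $X^{p+1}Z^{n-p-2}\in\.g$. Since $n$ and $p$ are even, $n-p-1$ is odd, so $X^pY^1Z^{n-p-1}\in\.g$, hence $[Z^2,X^pY^1Z^{n-p-1}]\in\.g$; in its expansion via Eq.~\eqref{eq:Kn-adB} the orbits $X^{p+1}Z^{n-p}$ and $X^{p-1}Y^2Z^{n-p}$ have weight $n+1$ and therefore vanish, so the commutator collapses to
\[
 [Z^2,X^pY^1Z^{n-p-1}] = 2(p+1)\,X^{p+1}Z^{n-p-2} - 4\,X^{p-1}Y^2Z^{n-p-2},
\]
and since the last term lies in $\.g$ and $2(p+1)\neq 0$ we obtain $X^{p+1}Z^{n-p-2}\in\.g$. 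I expect this step to be the main obstacle: for a \emph{generic} odd $r'$ the commutator $[Z^2,X^pY^1Z^{r'}]$ contains two undetermined orbits $X^{p+1}Z^{r'-1}$ and $X^{p+1}Z^{r'+1}$ and determines neither of them, so the key idea is to take the maximal-weight string $r'=n-p-1$, which forces the orbit $X^{p+1}Z^{r'+1}$ to disappear, while the parity of $n$ is what keeps the surviving coefficient nonzero.

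Finally I would run a downward induction on even $r$, from $r=n-p-2$ down to $r=2$. For even $r\ge 2$ the exponent $r-1$ is odd, so $X^pY^1Z^{r-1}\in\.g$, and expanding $[Z^2,X^pY^1Z^{r-1}]\in\.g$ by Eq.~\eqref{eq:Kn-adB} expresses $X^{p+1}Z^{r-2}$, up to the factor $2(n-p-r+1)(p+1)$, in terms of an element of $\.g$, of $X^{p+1}Z^{r}$ (known from the inductive hypothesis, or from the base case when $r=n-p-2$), and of the spillover orbits $X^{p-1}Y^2Z^{r-2}$ and $X^{p-1}Y^2Z^{r}$ (known from the first paragraph). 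Because $n,p,r$ are all even, $n-p-r+1$ is odd, so the coefficient $2(n-p-r+1)(p+1)$ is nonzero and $X^{p+1}Z^{r-2}\in\.g$. This yields $X^{p+1}Z^{r}\in\.g$ for every even $r$ in $\{0,2,\dots,n-p-2\}$; for larger even $r$ the orbit $X^{p+1}Z^r$ has weight $>n$ and vanishes, so the claim holds for all even $r$. The remaining verifications---the explicit coefficients from Eq.~\eqref{eq:Kn-adB} and the appeal to Fact~\ref{fact:Kn-o0e-to-oee}---are routine.
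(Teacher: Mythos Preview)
Your proposal is correct and follows essentially the same approach as the paper: invoke Fact~\ref{fact:Kn-o0e-to-oee} at the odd index $p-1$ to control the spillover orbits $X^{p-1}Y^2Z^{\cdot}$, start the downward induction at the maximal odd $r'=n-p-1$ where the weight-overflow kills one of the two unknown $X^{p+1}Z^{\cdot}$ terms, and then step $r'$ down by $2$ using the commutator $[Z^2,X^pY^1Z^{r'}]$. Your explicit handling of the edge cases $p=0$ and $p\ge n$ is a small refinement the paper leaves implicit.
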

\begin{proof}
    If $X^{p'}Z^{r''}\in \.g$ for all even $r''$, by Fact \ref{fact:Kn-o0e-to-oee} $X^{p'}Y^{2}Z^{r''}\in \.g$ for {all even $r''$}. Then, for all odd $r'$, if $1\le r' \le n-p-3$, we have
    \begin{equation*}
    \underbrace{[Z^2,X^pY^1Z^{r'}]}_{\in\.g}=\gamma_1^{r'}X^{p+1}Z^{r'-1} + \gamma_2^{r'} X^{p+1}Z^{r'+1} + \gamma_3^{r'} \underbrace{X^{p-1}Y^2Z^{r'-1}}_{\in\.g} + \gamma_4^{r'}\underbrace{X^{p-1}Y^{2}Z^{r'+1}}_{\in\.g},
    \end{equation*}
    and if $r'=n-p-1$, we have
    \begin{equation*}
        \underbrace{[Z^2,X^pY^1Z^{r'}]}_{\in\.g}=\gamma_1^{n-p-1}X^{p+1}Z^{n-p-2}+ \gamma_2^{n-p-1}\underbrace{X^{p-1}Y^2Z^{n-p-2}}_{\in\.g},
    \end{equation*}
    where 
    \begin{equation*}
        \gamma_1^{r'}=2(n-r'-p)(p+1), \quad \gamma_{2}^{r'}=2(r'+1)(p+1), \quad \gamma_3^{r'}=-4(n-r'-p), \quad  \gamma_4^{r'}=-4(r'+1)
    \end{equation*}
    for $1\le r' \le n-p-3$ and $\gamma_1^{n-p-1}= 2(p+1)$, $\gamma_2^{n-p-1}=-4$.
    Note that the case of $r' = n-p-1$ implies that $X^{p+1}Z^{n-p-2}\in \.g$. {(This case can happen as the parity of $p$, $n$ and $r'$ match.)} Applying this to the case of $r' = n-p-3$ gives $X^{p+1}Z^{n-p-4}\in \.g$. Again applying it in turn to the case of $r'=n-p-5$ gives $X^{p+1}Z^{n-p-6}\in \.g$. Continuing this process gives $X^{p+1}Z^r \in\.g$ for any even $r$.
\end{proof}

We are now ready to prove the lemma and facts from the main text.

\AdZZPQR*

\begin{proof}
Eq.~\eqref{eq:Kn-adB} gives
    \begin{multline*}
        [Z^2,X^pY^qZ^r] = \alpha_1 X^{p+1}Y^{q-1}Z^{r-1}+\alpha_2 X^{p+1}Y^{q-1}Z^{r+1}   \\ - \alpha_3 X^{p-1}Y^{q+1}Z^{r-1}-\alpha_4X^{p-1}Y^{q+1}Z^{r+1}, 
    \end{multline*}  
    where $\alpha_1 = 2(n-p-q-r+1)(p+1)$, $\alpha_2=2(p+1)(r+1)$, $\alpha_3 = 2(n-p-q-r+1)(q+1)$, $\alpha_4 = 2(q+1)(r+1)$. Thus,
    \begin{align*}
        & \ (q-1)ad_{Z^2}(X^pY^qZ^{r}) - (r+1)ad_{Z^2}(X^{p}Y^{q-2}Z^{r+2}) =\\
         & \ 2(q-1) (n-p-q-r+1)(p+1) X^{p+1}Y^{q-1}Z^{r-1} + 2(q-1)(p+1)(r+1) X^{p+1}Y^{q-1}Z^{r+1}  \\
        & - 2(n-p-q-r+1)(q-1)(q+1) X^{p-1}Y^{q+1}Z^{r-1} - 2(q-1)(r+1)(q+1) X^{p-1}Y^{q+1}Z^{r+1} \\
        & - 2(r+1) (n-p-q-r+1)(p+1) X^{p+1}Y^{q-3}Z^{r+1} - 2(r+3)(p+1)(r+1) X^{p+1}Y^{q-3}Z^{r+3}  \\
        & + 2(n-p-q-r+1)(q-1)(r+1) X^{p-1}Y^{q-1}Z^{r+1} + 2(q-1)(r+1)(r+3) X^{p-1}Y^{q-1}Z^{r+3} 
    \end{align*}
    Rearranging these 8 terms (pairing up terms 1 and 5, 2 and 6, 3 and 7, 4 and 8) gives  the result.
\end{proof}

\ZeroOneOdd*

\begin{proof}
    We will show that $Y^1Z^{r-1}, X^1Y^1Z^{r-1}\in {[\.g,\.g]},~X^1Z^{r}\in\.g$ for all even $r$ by induction on $r$. 
    The proof is by induction on $r$. For the base case $r = 2$, we can easily verify  (by Eqs. \eqref{eq:Kn-adA} and \eqref{eq:Kn-adB}) that
    \begin{align*}
     & Y^1Z^1 = [X^1,Z^2]/2\in [\.g,\.g],\\
     & X^1Z^2 = [Z^2, Y^1Z^1]/4 - (n-1)X^1/2 \in \.g \\
     & X^1Y^1Z^1 = [X^1,X^1Z^2]/2\in [\.g,\.g].
    \end{align*}
    Now for any $r\ge 4$, assume that $Y^1Z^{r-3},~X^1Y^1Z^{r-3}\in [\.g,\.g],~X^1Z^{r-2}\in \.g$. Then, 
    \begin{align*}
     & \underbrace{[Z^2,X^1Z^{r-2}]}_{\in [\.g,\.g]}=-2(n-r+2)\underbrace{Y^1Z^{r-3}}_{\in [\.g,\.g]}-2(r-1)Y^1Z^{r-1} \Rightarrow Y^1Z^{r-1}\in [\.g,\.g],\\
     & \underbrace{[Z^2,Y^1Z^{r-1}]}_{\in [\.g,\.g]}=2(n-r+1)\underbrace{X^1Z^{r-2}}_{\in\.g}+4rX^1Z^{r}\Rightarrow X^1Z^{r}\in \.g,\\
     & \underbrace{[X^1, X^1Z^{r}]}_{\in [\.g,\.g]}=2X^1Y^1Z^{r-1} \Rightarrow X^1Y^1Z^{r-1}\in [\.g,\.g].
    \end{align*}
\end{proof}

\StarOddOdd*

\begin{proof}
\begin{enumerate}
    \item []
    \item By Fact \ref{fact:Kn-01o-11o-10e}, $Y^1Z^r\in [\.g,\.g]$ for all odd $r$. Then Fact \ref{fact:Kn-p1o-induction} gives $X^pY^1Z^r\in [\.g,\.g]$ for all $p$ and all odd $r$. Finally, by Fact \ref{fact:Kn-p1o-to-poo}, $X^pY^qZ^r\in [\.g,\.g]$ for all $p$ and all odd $q$ and odd $r$. 
    
    \item By point 1 above, $X^p Y^1 Z^r\in \.g$ for all $p$ and all odd $r$. Fact \ref{fact:Kn-01o-11o-10e} gives $X^1 Z^r\in \.g$ for all even $r$. Repeatedly applying Fact \ref{fact:Kn-p0e-induction} then gives $X^p Z^r\in \.g$ for all odd $p$ and all even $r$ when $n$ is even. The result then follows from Fact \ref{fact:Kn-o0e-to-oee}.

    \item Fact \ref{fact:Kn-01o-11o-10e} gives $X^1 Z^r\in \.g$ for all even $r$. The result then follows from Fact \ref{fact:Kn-o0e-to-oee}.
\end{enumerate}
\end{proof}

\LinearIndependence*

\begin{proof}
    We show that if 
    \begin{equation}\label{eq:Kn-lin-indep}
    \sum_{p\ge 0,~q,r \text{ are odd}}\alpha_{p,q,r} [X^1, X^pY^qZ^r]+\sum_{p\ge 0,~r \text{ is odd}}\beta_{p,r} [Z^2, X^pY^1Z^r]=0,
    \end{equation}
then all coefficients $\alpha_{p,q,r}$ and $\beta_{p,r}$ are $0$. 
\begin{enumerate}
    \item \textbf{Regime 1 ($p\ge 0$, odd $q\ge 3$, odd $r$):} 
    We show this by induction on decreasing values of $q$, i.e., we show that if $\alpha_{p,q_{\max},r}=0$ for all $p,r$, then $\alpha_{p,q_{\max}-2,r}=0$ for all $p,r$, and so on. 
    The base case is $q=q_{\max} \defeq n-p-1-\parity(n-p)$, in which case $r=1$. 
    From Eq. \eqref{eq:Kn-adA}, only $[X^1,X^{p}Y^{q_{\max}}Z^1]$ can contribute to term $X^{p}Y^{q_{\max}+1}$, and thus Eq.~\eqref{eq:Kn-lin-indep} implies that $\alpha_{p,q_{\max},1}=0$. Now assume that $\alpha_{p,q',r}=0$  for all $p$, all odd $q'$ with $q+2\le q' \le q_{\max}$, and all odd $r$ with $1\le r \le n-p-q'$. The $\alpha_{p,q',r}[X^1,X^{p}Y^{q'}Z^{r}]$ does not contribute to any term as the coefficient is 0. By Eq. \eqref{eq:Kn-adB}, only $[X^1,X^{p}Y^{q}Z^{r}]$ can contribute to term $X^{p}Y^{q+1}Z^{r-1}$,  for any $p$, $q\ge 3$ and $1\le r \le n-p-q$. Therefore, Eq.~\eqref{eq:Kn-lin-indep} implies that $\alpha_{p,q,r}$, the coefficient for $X^{p}Y^{q}Z^{r}$, is 0.

    \item \textbf{Regime 2 ($p\ge 0$, $q=1$, odd $r$):} 
    This can be shown by induction on increasing value of $r$.  
    The base case is $r=1$.
    Since for all $p$, only $[Z^2, X^{p}Y^1Z^{1}]$ can contribute to term $X^{p+1}$, we see that $\beta_{p,1}=0$. 
    Since $\beta_{p,1}=0$, only $[X^1,X^{p}Y^1Z^1]$ contributes to term $X^{p}Y^2$, which implies $\alpha_{p,1,1}=0$ for all $p$. For all $p$, assume that $\alpha_{p,1,r'}=0$ and $\beta_{p,r'}=0$ for all $1\le r'\le r-2$. Since $\alpha_{p,1,r'}=0$ and $\beta_{p,r'}=0$, only $[Z^2,X^{p}Y^1Z^{r}]$ can contribute to term $X^{p+1}Z^{r-1}$. Therefore, $\beta_{p,r}=0$ for all $p$. Since $\beta_{p,r}=0$ for all $p$, only $[X^1, X^{p}Y^1Z^{r}]$ can contribute term $X^{p}Y^2Z^{r-1}$, which implies $\alpha_{p,1,r}=0$.
\end{enumerate}
\end{proof}

\section{Complete Graphs - Explicit basis for odd $n$}\label{sec:Kn-oddn}

Theorem~\ref{thm:Kn-basis-even} in the main text gives a basis for $\.g_{K_n}$ when $n$ is even.  Here we show a similar result for $n$ odd. In the remainder of this section, we will drop the subscript $K_n$ and simply use $\.g$ to denote $\.g_{K_n}$.
  
Define the following sets:
\begin{align*}
    & Q\defeq\{X^pY^qZ^r:\text{~$q$ and $r$ are odd}\}\cup \{X^1Y^qZ^r:\text{$q$ and $r$ are even}\}\cup \{X^2,Y^2,Z^2\},\\ 
    & U_1\defeq\{X^pY^qZ^r:\text{$q$ and $r$ are odd}\}\backslash (\{X^1Y^qZ^r: \text{$q$ and $r$ are odd}\} \cup \{Y^1Z^1\}),\\ 
    & U_2\defeq\{X^pY^1Z^r:\text{~$r$ is odd}\}\backslash (\{Y^1Z^r:\text{$r$ is odd}\} \cup \{X^1Y^1Z^{n-2}\}),\\ 
    & V_1\defeq\{[X^1, v]:~v\in U_1\}, \\ 
    & V_2\defeq\{[Z^2, v]:~v\in U_2\}. 
\end{align*}
Note that $U_1, U_2\subseteq Q$ and $U_1\cap U_2 \neq \emptyset$. 
\begin{thm}\label{thm:Kn-oddn-basis} 
    When $n$ is odd,
    \begin{equation*}
        R\defeq Q\cup V_1\cup V_2
    \end{equation*}
    is a basis for $\.g$, and $\dim(\.g) = \frac{1}{12}\lp n^3+6n^2-n+18\rp$.
\end{thm}
\begin{proof}
Let $\.r\defeq\spn_{\mathbb{R}} R$. We will show that (1) $|R| = \frac{n^3+6n^2-n+18}{12}$, (2) the basis vectors in $R$ are linearly independent, (3) $\.g\subseteq \.r$, (4) $\.r\subseteq \.g$.
\end{proof}

\paragraph{(1) Linear independence of $R$.}  
{Pauli string orbits corresponding to distinct value of $(p,q,r)$ are mutually orthogonal, and therefore vectors in $Q$ and mutually orthogonal, and vectors in $\{X^pY^qZ^r:\text{~$q$ and $r$ are odd}\}\subseteq Q$ are linearly independent from, and orthogonal to, those in $V_1\cup V_2$ (which are orbits consisting of even numbers of both $Y$ and $Z$ operators).} We now show that the vectors in $\tilde{Q}\cup \{X^2,Y^2,Z^2\} \cup V_1\cup V_2$ are also linearly independent, {and $|\tilde{Q}\cup \{X^2,Y^2,Z^2\} \cup V_1\cup V_2| = |\tilde Q| + \abs{V_1}+\abs{V_2} + |\{X^2,Y^2,Z^2\}|$}, where $\tilde{Q}:=\{X^1Y^qZ^r:\text{$q$ and $r$ are even}\}$. To see this, note that if
\begin{multline*}
    \sum_{(q,r)\neq (1,1)\atop q, r \text{ are odd}}\alpha_{0,q,r}[X^1,Y^qZ^r]+\sum_{p\ge 2,\atop q, r \text{ are odd}}\alpha_{p,q,r} [X^1, X^pY^{q}Z^r]+\sum_{q, r \text{ are even}}\beta_{q,r} X^1Y^{q}Z^r\\+\sum_{r\le n-4,\atop r\text{ is odd}}\gamma_{1,r}[Z^2,X^1Y^1Z^r]+\sum_{p\ge 2 \atop r\text{ is odd}}\gamma_{p,r}[Z^2,X^pY^1Z^r]+\delta_1 X^2+\delta_2 Y^2+\delta_3 Z^2=0,
\end{multline*}
then 

\begin{enumerate}
    \item $\alpha_{0,q,r}=0$ (for $p=0$, odd $q\ge 3$, odd $r$) and $\alpha_{p,q,r}=0$ (for $p\ge 2$, odd $q\ge 3$, odd $r$) by the same argument as in regime 1 of the proof of Fact \ref{fact:linear-independent}.

    \item $\alpha_{0,1,r}=0$ (for $p=0$, $q=1$, odd $r\ge 3$) since only $[X^{1},Y^1Z^r]$ can contribute term $Z^{r+1}$.

    \item $\beta_{q,r}=0$ (for even $q\ge 4$, even $r$) since only basis vector $X^1Y^qZ^r$ contributes term $X^1Y^qZ^r$ for all even $q\ge 4$ and all even $r$.

    \item $\gamma_{1,r}=0$ (for $p=1$, odd $r \le n-4$) We show this by induction on decreasing values of $r$. The base case is $r=n-4$. Since only $[Z^2,X^1Y^1Z^{n-4}]$ can contribute term $Y^2Z^{n-3}$, then $\gamma_{1,n-4}=0$. Assume that $\gamma_{1,r'}=0$ for all odd $r'$, $r+2\le r'\le n-4$. Since $\gamma_{1,r'}=0$, only $[Z^2,X^1Y^1Z^r]$ can contribute term $Y^2Z^{r+1}$ and thus $\gamma_{1,r}=0$.

    \item $\delta_j=0$ ($j=1,2,3$) since, with $\gamma_{1,r}=0$ for $p=1$, odd $r\le n-4$ (as shown above), only vectors $X^2$, $Y^2$ and $Z^2$ can contribute terms $X^2$, $Y^2$ and $Z^2$.

    \item $\alpha_{p,1,r}=0$ and $\gamma_{p,r}=0$ (for $p\ge 2$, $q=1$, odd $r$) by the same argument as in regime 2 of the proof of Fact \ref{fact:linear-independent}.

    \item $\beta_{q,r}=0$ (for $q=0,2$, even $r$) since , with $\gamma_{p,r}=0$ for $p\ge 2$, $q=1$, $r$ odd (as shown above), only basis vector $X^1Y^qZ^r$ can contribute $X^1Y^qZ^r$ when $q=0,2$ and $r$ is even.

\end{enumerate}
i.e., all coefficients $\alpha_{p,q,r}$, $\beta_{p,r}$, $\gamma_{p,r}$ and $\delta_1,\delta_2,\delta_3$ are $0$.

\paragraph{(2) $|R|=\frac{1}{12}\lp n^3 + 6n^2 - n + 18\rp$.} The above proof of linear independence of $R$ also shows that $\abs{R}= \abs{Q} + \abs{V_1} + \abs{V_2}$. The cardinality of $R$ then follows from the fact that 
\begin{align*}
    &|Q|=3+\sum_{t=1}^{(n-1)/2}t(n-2t+1)+\sum_{t=0}^{(n-1)/2}(t+1)=\frac{n^3+6n^2+11n+78}{24},\\
    &|U_1|=|V_1|=\sum_{t=1}^{(n-1)/2}t(n-2t+1)-\sum_{t=1}^{(n-1)/2}t-1=\frac{n^3-n-24}{24},\\
    &|U_2|=|V_2|=\sum_{t=1}^{(n-1)/2}(n-2t+1)-\frac{n-1}{2}-1=\frac{n^2-2n-3}{4},
\end{align*}

\paragraph{(3) $\.g\subseteq \.r$.} 
As in the even $n$ case, we will show that $\.r$ is closed under both $ad_{X^1}$ and $ad_{Z^2}$. Since both generators $X^1,Z^2\in Q\subseteq R$, we have that $\.g\subseteq \.r$. 

We summarize the proof that $\.r$ is closed under actions $ad_{X^1}$ and $ad_{Z^2}$ in Tables \ref{table:Kn-odd-adA-on-basis} and \ref{table:Kn-odd-adB-on-basis}. In these tables, the value of an entry, for instance, $V_{1}$ for entry $(ad_{X^1},U_{1})$, means that $[X^1,P]\in \spn_{\mbR} V_{1}$ for any $P\in U_{1}$.  Unlike the even $n$ case, here $ad_X$ and $ad_Z$ impose different structures on the commutator tables, and we need to consider two different decompositions of $Q$, 
\begin{equation}\label{eq:Q_decompse}
    Q = U_1 \cup Q' \cup Q'' = U_2 \cup Q_1 \cup Q_2 \cup Q_3 \cup Q_4
\end{equation}
where 
\[
    Q' = \{X^1Y^qZ^r:~q+r\text{~is even}\}, \quad Q'' = \{Y^1Z^1,X^2,Y^2,Z^2\}
\]
and 
\begin{align*}
    Q_1 & = \{X^pY^qZ^r:\text{$q$ and $r$ are odd, $q\ge 3$}\}\cup \{X^1Y^1Z^{n-2}\}, \quad 
    & Q_2 = \{Y^1Z^r:\text{$r$ is odd}\}, \\
    Q_3 & = \{X^1Y^qZ^r:\text{$q$ and $r$ are even}\}, \quad 
    & Q_4 = \{X^2,Y^2,Z^2\}.\quad \quad 
\end{align*}

\begin{table}[]
\caption{Adjoint map on basis $Q\cup V_1\cup V_2$ of $\.r$ when $n$ is odd, where $Q=U_1\cup Q'\cup Q''$ (Eq. \eqref{eq:Q_decompse}).}
    \label{table:Kn-odd-adA-on-basis}
    \centering{
    \begin{tabular}{c| c c c| c c}
    & $U_1$ & $Q'$ & $Q''$ & $V_1$ & $V_2$\\
    \hline
    $(p,q,r)$ & $*oo-011-1oo$ & $1ee\cup 1oo$ & $011\cup 200\cup 020\cup 002$ & $\subseteq *ee$ & $\subseteq *ee$\\
    \hline
    \multirow{2}*{$ad_{X^1}$} & $V_1$ & $Q'\subseteq Q$ & $Q''\subseteq Q$ & $Q$ &$Q$\\
    & {\color{blue} (by def)} & {\color{blue} (parity)} & {\color{blue} (computation)}  & {\color{blue} (parity)}  & {\color{blue} (parity)} 
    \end{tabular}
    }
    
\end{table}
\begin{table}[]    
\caption{Adjoint map on basis $Q\cup V_1\cup V_2$ of $\.r$ when $n$ is odd, where $Q = U_2\cup \bigcup_{k=1}^4 Q_k$ (Eq. \eqref{eq:Q_decompse}).}
    \centering\resizebox{\textwidth}{!}{
    \begin{tabular}{c|c c c c c| c c}
         &  $U_2$ & $Q_1$ & $Q_2$& $Q_3$& $Q_4$ & $V_1$ & $V_2$\\
         \hline
        $(p,q,r)$  & $p(>0)1o-11(n-2)$ & $p(\ge 3)oo\cup 11(n-2) $ &$01o$ & $1ee$ & $200\cup 020\cup 002$ & $\subseteq *ee$ & $\subseteq *ee$\\
        \hline
        \multirow{2}*{$ad_{Z^2}$} & $V_2$ & $V_1\cup V_2\cup Q$ & $Q_3 \subseteq Q$ & $Q$& $U_2\subseteq Q$ & $Q$& $Q$\\
        & {\color{blue} (by def)} & {\color{blue}(complicated)} & {\color{blue} (computation)} &{\color{blue} (parity)} & {\color{blue} (computation)} & {\color{blue} (parity)} & {\color{blue} (parity)}
    \end{tabular}}
    \label{table:Kn-odd-adB-on-basis}
\end{table}
Again, the parentheses in the tables indicate the proof method:
\begin{itemize}
    \item ``by def'': $ad_{X^1}(U_1) = V_1$ and $ad_{Z^2}(U_2) = V_2$ hold by the definitions of $V_1$ and $V_2$. 
    \item ``parity'': for instance, $ad_{X^1}(Q') \subseteq \spn_{\mathbb R} Q$ because the value/parities requirement for $(p,q,r)$ for $Q'$ is $p=1$ and $q+r$ is even. Recall that $ad_{X^1}$ changes the parities of $Y$ and $Z$ and keeps the exponent of $X$ unchanged. Thus, $ad_{X^1}(Q) \subseteq Q$. Other ``parity'' entries follow in a similar way. 
    \item ``computation'': these follow from direct computation of specific elements: $ad_{X^1}(Y^1Z^1)=4Y^2-4Z^2$, $ad_{X^1}(X^2)=0$, $ad_{X^1}(Z^2)=2YZ$ and $ad_{X^1}(Y^2)=-2YZ$. Therefore,  $ad_{X^1}(Q'')\subseteq \spn_{\mbR} Q''$.  
    For $ad_{Z^2}(Q_2)$, take  any odd $r$, $ad_{Z^2}(Y^1Z^r)=2(n-r)X^1Z^{r-1}+2(r+1)X^1Z^{r+1}\in \spn_{\mbR} Q_3$. 
    For $ad_{Z^2}(Q_4)$: $ad_{Z^2}(X^2)=-2X^1Y^1Z^1\in\spn_{\mbR} U_2$, $ad_{Z^2}(Y^2)=2X^1Y^1Z^1\in\spn_{\mbR} U_2$, $ad_{Z^2}(Z^2)=0$. Therefore, $ad_{Z^2}(Q_4)\subseteq \spn_\mbR U_2$. 
    \item ``complicated'': $ad_{Z^2}(Q_1)$ is the only complicated case, which will be proved to be in $\spn_{\mathbb R} (V_1\cup V_2\cup Q)$. For any $X^pY^qZ^r\in Q_1$ (i.e., $q,r$ odd,  $q\ge 3$), by Lemma~\ref{lem:adzz-pqr} we have
            \begin{multline}\label{eq:Kn-adB-Q1-induction}
           ad_{Z^2}(X^pY^qZ^{r})=\beta_1\underbrace{ad_{X^1}(X^{p+1}Y^{q-2}Z^{r})}_{\in \spn_\mbR V_1\cup Q}+\beta_2\underbrace{ad_{X^1}(X^{p+1}Y^{q-2}Z^{r+2})}_{\in \spn_\mbR V_1\cup Q}\\
           -\beta_3\underbrace{ad_{X^1}(X^{p-1}Y^{q}Z^{r})}_{\in \spn_\mbR V_1\cup Q}-\beta_4\underbrace{ad_{X^1}(X^{p-1}Y^{q}Z^{r+2})}_{\in \spn_\mbR V_1\cup Q}+\beta_5 ad_{Z^2}(X^{p}Y^{q-2}Z^{r+2}).
       \end{multline}
where $\beta_1,\ldots, \beta_5\in\mb{R}$.
       The first 4 terms in Eq. \eqref{eq:Kn-adB-Q1-induction} are all in $\spn_\mbR V_1\cup Q$ because the orbits inside the $ad_{X^1}(\cdot )$ are all in $Q$ and $ad_{X^1}(Q)\subseteq \spn_\mbR (V_1\cup Q)$ (Table \ref{table:Kn-odd-adA-on-basis}). 
       The last term in Eq.~\eqref{eq:Kn-adB-Q1-induction} has $Y$'s exponent decreased from $q$ to $q-2$. Apply Eq.~\eqref{eq:Kn-adB-Q1-induction} again to this last term, and continue this process until $Y$'s exponent drops to 1, yielding an expression for $ad_{Z^2}(X^pY^qZ^{r})$ as $4(q-1)/2 = 2(q-1)$ terms in $\spn_{\mbR}V_1\cup Q$, plus a final term of the form $ad_{Z^2}(X^{p}Y^{1}Z^{q+r-1})$. 
       
       We now show that $ad_{Z^2}(X^{p}Y^{1}Z^{q+r-1})\in \spn_{\mbR} V_2\cup Q$, and thus $ad_{Z^2}(Q_1)\subseteq \spn_\mbR  V_1\cup V_2\cup Q$.
       
       Note that $q+r-1$ is odd. There are two cases to consider.
       
           \noindent\textbf{Case 1:} $(p,q+r-1)\neq (1,n-2)\Ra X^p Y^1 Z^{q+r-1}\in U_2\cup Q_2$. From Table \ref{table:Kn-odd-adB-on-basis}, $ad_{Z^2}(Q_2)\subseteq \spn_{\mbR} Q$ and,  by definition, $ad_{Z^2}(U_2)=V_2$. 

           \noindent\textbf{Case 2:} $(p,q+r-1) = (1,n-2)$. To compute $ad_{Z^2}(X^1 Y^1 Z^{n-2})$ we make use of the fact that, by definition, for $1\le r \le n-4$,  $r$ odd, $ad_{Z^2}(X^1Y^1Z^{r})\in V_2$. Then, from Eq. \eqref{eq:Kn-adB}, the following equations hold:
           \begin{align*}
               & \underbrace{ad_{Z^2}(X^1Y^1Z^1)}_{\in V_2}=4(n-2)\underbrace{(X^2-Y^2)}_{\in \spn_\mbR Q}+8(X^2Z^2-Y^2Z^2) \\
               & \underbrace{ad_{Z^2}(X^1Y^1Z^{2t-1})}_{\in V_2}=4(n-2t)(X^2Z^{2t-2}-Y^2Z^{2t-2})+8t(X^2Z^{2t}-Y^2Z^{2t}), \\
               &ad_{Z^2}(X^1Y^1Z^{n-2})=4(X^2Z^{n-3}-Y^2Z^{n-3}),
           \end{align*}
           where $2\le t \le (n-3)/2$.
           Note that the third equation has only two terms on the RHS due to our convention that $X^p Y^q Z^r = 0$ if $p+q+r > n$. The first equation implies that $X^2Z^2-Y^2Z^2\in \spn_{\mbR} V_2\cup Q$. Applying this to the second equation with $t = 2$ gives $X^2Z^4-Y^2Z^4\in \spn_{\mbR} V_2\cup Q$. Applying this again to the second equation with $t=3$ gives $X^2Z^6-Y^2Z^6\in \spn_{\mbR} V_2\cup Q$. Continuing this process gives $X^2Z^{n-3}-Y^2Z^{n-3}\in \spn_{\mbR} V_2\cup Q$ and thus, by the third equation,  $ad_{Z^2}(X^1Y^1Z^{n-2})\in\spn_{\mbR} V_2\cup Q$.
\end{itemize}

\paragraph{(5) $\.r\subseteq \.g$.}
Recall that $R=Q\cup V_1\cup V_2.$ Since $V_1\defeq ad_{X^1}(U_1)$, $V_2\defeq ad_{Z^2}(U_2)$ and $U_1, U_2\subseteq Q$, then $Q\subseteq \.g\Ra R\subseteq \.g$. Recall that
\begin{equation*}
   Q\defeq \{X^pY^qZ^r:\text{~$q$ and $r$ are odd}\}\cup \{X^1Y^qZ^r:\text{$q$ and $r$ are even}\}\cup \{X^2,Y^2,Z^2\}.
\end{equation*}
Then,
\begin{itemize}
    \item By Fact~\ref{fact:anyoddodd-oddeveneven}, $X^pY^qZ^r\in \.g$ for all $p$ and all odd $q$ and odd $r$.

    \item By Fact~\ref{fact:anyoddodd-oddeveneven},  $\{X^1Y^qZ^r:\text{$q$ and $r$ are even}\}\subseteq\.g$.

    \item By Fact \ref{fact:Kn-200-020}, $X^2, Y^2\in\.g$. Since $Z^2$ is a generator, it is in $\.g$ as well. 
\end{itemize}

\section{Complete Graphs - Center and Semisimple Component}

In this section, we prove Theorem~\ref{thm:kn-semisimple-center}. In the remainder of this section, we will drop the subscript $K_n$ and simply use $\.g$ to denote $\.g_{K_n}$.

\begin{lem}\label{lem:adzz-pqr-commsubalgebra}
Let $L\defeq\{[X^1,X^pY^qZ^r], [Z^2,X^pY^1Z^r]:\text{$q$ and $r$ are odd}\}$, then $[Z^2,X^pY^qZ^r]\in \spn_{\mbR}L$ for any $p$ and any odd $q\ge 3$ and odd $r$.
\end{lem}
\begin{proof}
From Lemma~\ref{lem:adzz-pqr} it follows that
    \begin{align}\label{eq:Kn-Z2-induction}
        [Z^2,X^pY^qZ^{r}] = &\beta_1 ad_{X^1}(X^{p+1}Y^{q-2}Z^{r})+\beta_2 ad_{X^1}(X^{p+1}Y^{q-2}Z^{r+2}) \nonumber \\
        & -\beta_3 ad_{X^1}(X^{p-1}Y^{q}Z^{r})-\beta_4 ad_{X^1}(X^{p-1}Y^{q}Z^{r+2})+\beta_5 ad_{Z^2}(X^{p}Y^{q-2}Z^{r+2}).
    \end{align}
for some $\beta_1,\ldots,\beta_5\in\mb{R}$.
The first four terms are in $L$, and the last term has $Y$'s exponent decreased from $q$ to $q-2$. Apply Eq.~\eqref{eq:Kn-Z2-induction} again to this last term, and continue this process until $Y$'s exponent drops to 1, yielding an expression for $[Z^2, X^pY^qZ^{r}]$ as $4(q-1)/2 = 2(q-1)$ terms in $L$, plus a last term of form $[Z^2,X^{p}Y^{1}Z^{q+r-1}]$. Since $q+r-1$ is odd and the last term $[Z^2,X^{p}Y^{1}Z^{q+r-1}]\in L$.
\end{proof}

\KnSemiSimpleCenter*


\begin{proof}
Define 
\eq{
S\defeq\{X^pY^qZ^r,[X^1,X^pY^qZ^r], [Z^2,X^pY^1Z^r]:p\ge 0, ~\text{ $q$ and $r$ are odd}\} =  K\cup L}
where
\eq{
K&\defeq \{X^pY^qZ^r:\text{$q$ and $r$ are odd}\},\\
L&\defeq\{[X^1,X^pY^qZ^r], [Z^2,X^pY^1Z^r]:\text{$q$ and $r$ are odd}\},
}
and let \[\.s = \spn_\mb{R} S.\] 
We will show that $S$ is a basis for $[\.g,\.g]$, and calculate the dimension of $\.s$. The dimension of $\.c$ follows from the fact that $\dim(\.c) = \dim(\.g) - \dim([\.g,\.g])$, and the dimension of $\.g$ proved in Theorems \ref{thm:Kn-basis-even} ($n$ even) and \ref{thm:Kn-oddn-basis} ($n$ odd).

\paragraph{(1) Linear independence of vectors in $S$.} 
Since different Pauli strings are orthogonal, the vectors in $K$ are all linearly independent. Vectors in $K$ are linearly independent from those in 
$L$ (which consist of Pauli strings with even parities for $Y$ and $Z$). By Fact \ref{fact:linear-independent}, vectors in $L$ are also linearly independent.

\paragraph{(2) Dimension.} From the linear independence of the vectors in $S$, it follows that $\dim(\.s) = \abs{K} + \abs{\{[X^1, X^p Y^q Z^r] : p\ge0 \text{ and }q,r \text{ odd}\}} + \abs{\{[Z^2, X^p Y^1 Z^r] : p\ge0 \text{ and }r \text{ odd}\}\}}$. Then, noting that
\eq{
\abs{K}=\abs{\{[X^1, X^p Y^q Z^r] : p\ge0 \text{ and }q,r \text{ odd}\}} 
&= \abs{\{(p,q,r) : p,q,r\ge 0, q,r \text{ odd }, p+q+r \le n\} } \\
\abs{\{[Z^2, X^p Y^1 Z^r ] : p\ge 0,\text{ and }q,r \text{ odd} \}} &= \abs{\{(p,r) : p,r\ge 0, r \text{ odd }, p+r \le n\}}
}
gives
\begin{multline}
\dim(\.s)=\\ \begin{cases}
2\cdot \sum_{t=1}^{n/2} t(n-2t+1)+\sum_{t=1}^{n/2}(n-2t+1)=\frac{1}{12}\lp n^3+6n^2+2n\rp &\quad (n \text{ even})\\
2\cdot \sum_{t=1}^{(n-1)/2} t(n-2t+1)+\sum_{t=1}^{(n-1)/2}(n-2t+1)= \frac{1}{12}\lp n^3+6n^2-n-6\rp &\quad (n \text{ odd})
\end{cases}
\end{multline}

\paragraph{(3) $\.s\subseteq[\.g,\.g]$.} By Fact~\ref{fact:anyoddodd-oddeveneven}, $K\subseteq [\.g,\.g]$, and vectors in $L$ are also in $[\.g,\.g]$ since they are commutators of $X^1$ and $Z^2$ with elements from $K$.

\paragraph{(4) $[\.g,\.g]\subseteq \.s$.}  Note that if $R$ is a basis of $\.g$, then $[\.g,\.g]=\spn_\mbR \{ad_{X^1}(R) \cup ad_{Z^2}(R)\}$. It will therefore suffice to show that $\spn_\mbR \{ad_{X^1}(R) \cup ad_{Z^2}(R)\}\subseteq \.s$. 

There are two cases to consider. The term `by parity' used below has the same meaning as in the main text and in the Supplementary Information section \ref{sec:Kn-oddn}.

    \paragraph{Case 1 ($n$ even):}  $R\defeq\bigcup_{k=1}^5 J_k \cup L_1\cup L_2$ is a basis for $\.g$ (Theorem~\ref{thm:Kn-basis-even}). It can be verified that
    \begin{align*}
        & ad_{X^1}(J_1\cup J_2 \cup J_3 \cup \{Y^1Z^1\})=\spn_\mbR \{[X^1,X^pY^qZ^r]: \text{$q$, $r$ are odd}\}\subseteq \spn_\mbR L, & \text{(by definition)}\\
        & ad_{X^1}(J_4\cup J_5\cup L_1\cup L_2\backslash\{Y^1Z^1\})\subseteq \spn_\mbR K, & \text{(by parity)}\\ 
        & ad_{Z^2}(\{X^pY^1Z^r:\text{$r$ is odd}\})=\spn_\mbR \{[Z^2,X^pY^1Z^r]: \text{$r$ is odd}\}\subseteq \spn_\mbR L, & \text{(by definition)}\\
        & ad_{Z^2}(J_1\cup J_2 \cup J_3 \cup \{Y^1Z^1\}\backslash\{X^pY^1Z^r:\text{$r$ is odd}\})\subseteq \spn_\mbR L, & \text{(by Lemma~\ref{lem:adzz-pqr-commsubalgebra})}\\
        & ad_{Z^2}(J_4\cup J_5\cup L_1\cup L_2\backslash\{Y^1Z^1\})\subseteq \spn_\mbR K,  & \text{(by parity)}
    \end{align*}
    and thus, $[\.g,\.g]=\spn_\mbR \{ad_{X^1}(R) \cup ad_{Z^2}(R)\}\subseteq \.s$.
    
    \paragraph{Case 2 ($n$ odd):}  $R\defeq Q\cup V_1\cup V_2$ is a basis of $\.g$ (Theorem~\ref{thm:Kn-oddn-basis}). Again, it can be verified that
    \begin{align*}
        & ad_{X^1}(\{X^pY^qZ^r:\text{$q$, $r$ are odd}\})=\spn_\mbR \{[X^1,X^pY^qZ^r]: \text{$q$, $r$ are odd}\}\subseteq \spn_{\mbR}L,  & \text{(by definition)}\\
        & ad_{X^1}(Q\cup V_1\cup V_2\backslash\{X^pY^qZ^r:\text{$q$, $r$ are odd}\})\subseteq \spn_\mbR K, & \text{(by parity)}\\
        & ad_{Z^2}(\{X^pY^1Z^r:\text{$r$ is odd}\})=\spn_{\mbR}\{[Z^2,X^pY^1Z^r]:\text{$r$ is odd}\}\subseteq \spn_{\mbR} L,  & \text{(by definition)}\\
        & ad_{Z^2}(\{X^pY^qZ^r:\text{$q$, $r$ are odd}\}\backslash\{X^pY^1Z^r:\text{$r$ is odd}\})\subseteq \spn_\mbR L,& \text{(by Lemma~\ref{lem:adzz-pqr-commsubalgebra})}\\
        & ad_{Z^2}(Q\cup V_1\cup V_2\backslash\{X^pY^qZ^r:\text{$q$, $r$ are odd}\})\subseteq \spn_\mbR K, & \text{(by parity)}
    \end{align*}
and thus, $[\.g,\.g]=\spn_\mbR \{ad_{X^1}(R) \cup ad_{Z^2}(R)\}\subseteq \.s$.
\end{proof}

\section{$K_4$ decomposition}

Here we prove a decomposition of $[\.g_{K_n},\.g_{K_n}]$ into simple components, for the special case that $n=4$.
\begin{lem}\label{lem:k4-semisimple}$[\.g_{K_4},\.g_{K_4}]\cong\.{su}(2)\oplus \.{su}(2)\oplus \.{su}(3)$.
\end{lem}
\begin{proof}
 Define the following $v_1,v_2,v_3,v_4\in[\.g_{K_4},\.g_{K_4}]_\mbC$:
\begin{align*}
    v_1\defeq  &\frac{1}{144}([Z^2,[Z^2,[Z^2,[X^1,Z^2]]]]+40[Z^2,[X^1,Z^2]])=X^1,\\
    v_2 \defeq & 16[Z^2,[X^1,Z^2]]+[X^1,[Z^2,[X^1,[X^1,Z^2]]]]
        = 192X^1+64(X^1Y^2+X^1Z^2),\\
    v_3 \defeq &-448[Z^2,[X^1,Z^2]]+[Z^2,[Z^2,[X^1,[Z^2,[X^1,[X^1,Z^2]]]]]]\\
        = & -3840X^1+1536X^3-1792(X^1Y^2+X^1Z^2),\\ 
    v_4 \defeq &[X^1,[X^1,[Z^2,[Z^2,[X^1,[X^1,Z^2]]]]]]\\ &-3[X^1,[Z^2,[X^1,[Z^2,[X^1,[X^1,Z^2]]]]]]-32[Z^2,[Z^2,[X^1,[X^1,Z^2]],\\
       = & -4096X^2+2048(Y^2+Z^2)-2048(X^2Y^2+X^2Z^2)+3072(Y^4+Z^4)+1024Y^2Z^2.
\end{align*}
    It can be easily verified that $v_1,v_2,v_3,v_4$ are linearly independent. By Eqs. \eqref{eq:Kn-adA} and \eqref{eq:Kn-adB}, $[v_s,v_t]=0$ for any $1\le s < t\le 4$. Therefore $\.h'\defeq\spn_{\mbC}\{v_1,v_2,v_3,v_4\}$ is commutative. Let $\.h$ denote a Cartan subalgebra of $[\.g_{K_4},\.g_{K_4}]_\mbC$. By the same argument as used in the proof of Theorem \ref{thm:cycle-Cartan}, $\dim(\.h)\ge \dim(\.h')=4$. We decompose $[\.g_{K_4},\.g_{K_4}]_{\mbC}$ as the direct sum of simple complex Lie algebras $[\.g_{K_4},\.g_{K_4}]_{\mbC}=\.g_1\oplus \.g_2\oplus \cdots \oplus \.g_{k}$. Since $\dim([\.g_{K_4},\.g_{K_4}]_\mbC)=\dim([\.g_{K_4},\.g_{K_4}])=14$ (Theorem \ref{thm:kn-semisimple-center}), $\.g_1,\.g_2,\ldots,\.g_k$ are isomorphic to one of the following simple complex Lie algebras $A_1$, $A_2$, $B_2$ and $G_2$ whose dimensions are no more than $14$. By Theorem \ref{thm:classification}$, \dim(A_1)=3$, $\dim(A_2)=8$, $\dim(B_2)=10$ and $\dim(G_2)=14$, with the subscripts denoting the dimensions of their Cartan subalgebras. 
    Assume that there are $n_1$, $n_2$, $n_3$, $n_4$ simple Lie algebras which are isomorphic to $A_1$, $A_2$, $B_2$ and $G_2$ respectively in the direct sum of $[\.g_{K_4},\.g_{K_4}]_\mbC$. We have
    \begin{align}
        & \dim(\.h)=n_1+2n_2+2n_3+2n_4\ge 4 \label{eq:cartanK4}\\
        & \dim([\.g_{K_4},\.g_{K_4}]_{\mbC})=3n_1+8n_2+10n_3+14n_4=14.\label{eq:dimensionK4}
    \end{align}
    If $n_4\ge 1$, then $(n_1,n_2,n_3,n_4)=(0,0,0,1)$ which contradicts Eq. \eqref{eq:cartanK4}. This implies $n_4=0$. If $n_3\ge 1$, Eq. \eqref{eq:dimensionK4} has no solution, which implies $n_3=0$. If $n_2\neq 1$, Eq. \eqref{eq:dimensionK4} has no solution, which implies $n_2=1$. Then the above linear system has only one solution $(n_1,n_2,n_3,n_4)=(2,1,0,0)$. Therefore,  $[\.g_{K_4},\.g_{K_4}]_\mbC\cong A_1\oplus A_1 \oplus A_2=\.{sl}(2,\mbC)\oplus \.{sl}(2,\mbC)\oplus \.{sl}(3,\mbC)$ and, by the same argument used in the proof of Theorem \ref{thm:Cn-decomp}, $[\.g_{K_4},\.g_{K_4}]\cong \.{su}(2)\oplus \.{su}(2)\oplus \.{su}(3)$.
\end{proof}

\end{document}